\def \D {\, \, || \, \,}
\def \RR {\mathbb R}
\def \EE {\mathbb E}
\def \eps {\varepsilon}
\def \vphi {\varphi}
\def \cF {\mathcal F}
\newcommand{\m}{\mathcal}
\newcommand{\mmse}{\mathrm{mmse}}
\newcommand{\Var}{\mathrm{Var}}
\newcommand{\Cov}{\mathrm{Cov}}
\newcommand{\CLS}{C_{\mathrm{LS}}}
\newcommand{\CP}{C_{\mathrm{P}}}
\newcommand{\nCP}{\bar{C}_{\mathrm{P}}}
\newcommand{\nFisher}{\bar{\m{J}}}
\newcommand{\nEP}{\bar{N}}
\DeclareMathOperator*{\argmax}{\arg\!\max}
\newtheorem{theorem}{Theorem}[section]
\newtheorem{question}[theorem]{Question}
\newtheorem{lemma}[theorem]{Lemma}
\newtheorem{conjecture}[theorem]{Conjecture}
\newtheorem{proposition}[theorem]{Proposition}
\newtheorem{corollary}[theorem]{Corollary}
\theoremstyle{definition}
\newtheorem{remark}[theorem]{Remark}
\def\myffrac#1#2 in #3{\raise 2.6pt\hbox{$#3 #1$}\mkern-1.5mu\raise 0.8pt\hbox{$
		#3/$}\mkern-1.1mu\lower 1.5pt\hbox{$#3 #2$}}
\def\qed{\hfill $\vcenter{\hrule height .3mm
		\hbox {\vrule width .3mm height 2.1mm \kern 2mm \vrule width .3mm
			height 2.1mm} \hrule height .3mm}$ \bigskip}
\def \id {{\rm Id}}
\def \id {{ \rm Id}}
\def \susbeteq {\subseteq}
\begin{document}

\title{The strong data processing inequality under the heat flow} 
\author{B. Klartag, O. Ordentlich
\thanks{Bo'az Klartag  is with the Department of Mathematics, Weizmann Institute of Science, Rehovot 76100, Israel (\texttt{boaz.klartag@weizmann.ac.il}). Or Ordentlich is with the 
Hebrew University of Jerusalem, Israel (\texttt{or.ordentlich@mail.huji.ac.il}). The work of BK was supported by the Israel Science 
Foundation (ISF),
grant No. 765/19. The work of OO was supported by the Israel Science 
Foundation (ISF),
grant No. 1641/21.}}

\date{}
\maketitle

\begin{abstract}%
Let $\nu$ and $\mu$ be probability distributions on $\RR^n$, and $\nu_s,\mu_s$ be their evolution under the heat flow, that is, the probability distributions resulting from convolving their density with the density of an isotropic Gaussian random vector with variance $s$ in each entry. This paper studies the rate of decay of $s\mapsto D(\nu_s\|\mu_s)$ for various divergences, including the $\chi^2$ and Kullback-Leibler (KL) divergences. We prove upper and lower bounds on the strong data-processing inequality (SDPI) coefficients corresponding to the source $\mu$ and the Gaussian channel.
We also prove generalizations of de Brujin's identity, and Costa's result on the concavity in $s$ of the differential entropy of $\nu_s$. As a byproduct of our analysis, we obtain new lower bounds on the mutual information between $X$ and $Y=X+\sqrt{s} Z$, where $Z$ is a standard Gaussian vector in $\RR^n$, independent of $X$, and on the minimum mean-square error (MMSE) in estimating $X$ from $Y$, in terms of the Poincar\'e constant of $X$.
\end{abstract}

\section{Introduction}

For two probability distributions $\nu$ and $\mu$ on $\RR^n$, where $\nu$ is absolutely continuous with respect to $\mu$, and a smooth, convex function $\varphi:(0,\infty)\to\RR$ with $\varphi(1)=0$, the $\varphi$-divergence ~\cite{csiszar1967information} is defined as
\begin{align}
D_\varphi(\nu\|\mu)=\EE_{\mu}\left[\varphi\left(\frac{d\nu}{d\mu}\right) \right],
\label{eq:DphiDef}
\end{align}
where $\varphi(0)=\lim_{t \rightarrow 0^+}\varphi(t)$. Let $Z\sim\m{N}(0,\id)$ be a standard Gaussian random variable in $\RR^n$. For $s\geq 0$, denote by $\nu_s$ the probability distribution of the random variable $Y=X+\sqrt{s}Z$ when $X\sim \nu$ is independent of $Z$. Similarly, write $\mu_s$ for the probability distribution of $Y$ when $X\sim \mu$ is independent of $Z$.  By the data-processing inequality~\cite{csiszar1967information,PW_book} for $\varphi$-divergences, we have that $$ s\mapsto D_\varphi(\nu_s\|\mu_s) $$ is non-increasing in $s \geq 0$. The goal of this paper is to provide estimates on the rate of decay of $s\mapsto D_\varphi(\nu_s\|\mu_s)$ as a function of $\mu$ and uniformly over the measure $\nu$. We stress that here {\it both} measures evolve according to the heat flow. This is in contrast to prior work that analyzed the evolution of the entropy and similar functionals of $\nu_s$ according to the heat flow. This corresponds to taking $\mu$ as the Lebesgue measure whose heat flow evolution satisfies  $\mu_s=\mu$. See more details below. 

\medskip 
To that end, we study the \emph{strong data-processing inequality (SDPI) constant/contraction coefficient} for the probability distribution $\mu$ and the Gaussian channel ~\cite{ahlswede1976spreading},~\cite[chapter 33]{PW_book}, which is defined as
\begin{align}
 \eta_\varphi(\mu,s)\triangleq \sup_{\nu: 0<D_\varphi(\nu\|\mu)<\infty}\frac{D_\varphi(\nu_s\|\mu_s)}{D_\varphi(\nu\|\mu)}.   
\end{align}
Recalling that $D_\varphi(\nu_s\|\mu_s)\geq 0$ (with equality if and only if $\nu_s=\mu_s$), by the data-processing inequality we have that $0\leq \eta_\varphi(\mu,s)\leq 1$. We mostly analyze two choices of $\varphi$. The first is $\varphi(x)=\vphi_{\text{KL}}(x) = x\log x$,\footnote{In this paper, logarithms are always taken to the natural basis.} which results in the Kullback-Leibler (KL) divergence
\begin{align}
D(\nu\|\mu)=D_{\text{KL}}(\nu\|\mu)=\int_{\RR^n} d \nu  \log\frac{d\nu}{d\mu},
\end{align}
and the second is $\varphi(x)=\vphi_{\chi^2}(x) = (x-1)^2$, which results in the $\chi^2$-divergence
\begin{equation}
\chi^2(\nu\|\mu)=D_{\chi^2}(\nu\|\mu)=\int_{\RR^n} \left(\frac{d \nu}{d \mu} - 1 \right)^2 d \mu.
\label{eq_1116} 
\end{equation}
It is common to rewrite the integrand in (\ref{eq_1116}), by abuse of notation, 
 as $(d\nu-d \mu)^2 / d \mu$, 
where $d \mu$ refers to the ``density of $\mu$'' with respect to some ambient measure.
We denote the corresponding contraction coefficients by $\eta_{\text{KL}}(\mu,s)$ and $\eta_{\chi^2}(\mu,s)$. By~\cite[Theorem 2]{polyanskiy2017strong} we have that
\begin{equation}
\eta_{\chi^2}(\mu,s)\leq \eta_{\varphi}(\mu,s), \label{eq_1135}
\end{equation}
for any $\varphi$ with $\varphi''(1)>0$. In particular, $\eta_{\chi^2}(\mu,s)\leq \eta_{\text{KL}}(\mu,s)$.
Intuitively, inequality~\eqref{eq_1135} follows by considering measures $\nu$ that are very close to $\mu$, 
 and using Taylor approximation. 

\medskip
Some of our results hold for further choices of $\vphi$ beyond $\vphi_{\chi^2}$ and $\vphi_{\text{KL}}$. In particular, the crucial requirement for our results is that $1 / \vphi''$ is concave, as in Chafa\"i \cite{chafai}. This requirement is fulfilled for $\vphi_{\chi^2}$ and $\vphi_{\text{KL}}$ (in fact for those functions, $1/\vphi''$ is linear). Another important class of functions for which $1/\vphi''$ is concave is $\vphi_{\lambda}(x)=x^\lambda-1$, for $\lambda\in(1,2)$. The corresponding divergence $D_{\vphi_\lambda}(\nu\|\mu)$, defined by~\eqref{eq:DphiDef}, is related to the R\'enyi divergence of order $\lambda$, via the monotone transformation
\begin{align}
D_{\lambda}(\nu\|\mu)=\frac{1}{\lambda-1}\log (1+D_{\vphi_\lambda}(\nu\|\mu)).
\end{align}
Thus, although $D_{\lambda}(\nu\|\mu)$ is not a $\vphi$-divergence, it does satisfy the data processing inequality, and bounds on the decay rate of $s\mapsto D_{\vphi_\lambda}(\nu_s\|\mu_s)$ immediately imply bounds on the decay rate of the R\'enyi divergence $s\mapsto D_{\lambda}(\nu_s\|\mu_s)$. We note that the concavity requirement on $1/\vphi''$ fails to hold for many other important $\vphi$-divergences, such as the Jensen-Shannon divergence ($\vphi(x) = x \log \frac{2x}{x+1} + \log \frac{2}{x+1}$), 
the Le Cam divergence ($\vphi(x) = (1-x) / (2x + 2)$), 
the total variation divergence ($\vphi(x) = |x-1| / 2$) and the Hellinger divergence ($\vphi(x) = (1 - \sqrt{x})^2$).

\medskip
It is well-known,~see. e.g.,~\cite[Theorem 33.12]{PW_book} that
\begin{align}
\sqrt{\eta_{\chi^2}(\mu,s)}=S(X,X+\sqrt{s}Z)
\end{align}
where $X\sim\mu$ is independent of the standard Gaussian $Z$. Here $S(X,Y)$ is the {\it Hirschfeld-Gebelein-R\'enyi maximal correlation}
~\cite{renyi1959measures} between the random variables $X$ and $Y$ defined as
\begin{align}
	\label{eq_1159}
S(X,Y)&=\sup_{f,g:\RR^n\to\RR}\frac{\EE f(X)g(Y)-\EE(f(X))\EE(g(Y))}{\sqrt{\Var(f(X))\Var(g(Y))}}\\
&=\sup_f \sqrt{ \Var(\EE[f(X)|Y]) }.\label{eq:HGR_CS}
\end{align}
The supremum in \eqref{eq:HGR_CS} runs over all measurable functions $f:\RR^n\to \RR$ with $\EE f(X)=0$ and $\Var(f(X))=1$. In the passage from (\ref{eq_1159}) to \eqref{eq:HGR_CS}  we use the Cauchy-Schwartz inequality.

\medskip
The quantities $\eta_{\text{KL}}(\mu,s)$ and $\eta_{\chi^2}(\mu,s)$ can be thought of as (normalized) measures of statistical dependence between $X\sim \mu$ and $Y=X+\sqrt{s}Z$. Two other (un-normalized) measures of dependence we will consider are the minimum mean-square error (MMSE)\footnote{For a vector $z = (z_1,\ldots,z_n) \in\RR^n$ we denote by $|z|=\|z\|_2 = \sqrt{\sum_j z_j^2}$ the $\ell_2$ norm of $z$.}
\begin{align}
\mmse(\mu,s)= \EE|X-\EE[X|X+\sqrt{s}Z]|^2, 
\end{align}
and the mutual information
\begin{align*}
I(\mu,s)= I(X;X+\sqrt{s}Z)=h(X + \sqrt{s} Z) - h(X + \sqrt{s} Z | X) = h(\mu_s)-\frac{n}{2}\log(2\pi e s),    
\end{align*}
where for a probability distribution $\mu$ with density $\rho$ in $\RR^n$, the differential entropy of $X\sim\mu$ is 
\begin{align}
h(X)=h(\mu)=-\int_{\RR^n} \rho(x) \log \rho (x) dx.
\end{align}


\medskip Our results will be expressed in terms of the Poincar\'e constant $\CP(X)=\CP(\mu)$, of $X\sim\mu$, and the log-Sobolev constant $\CLS(X)=\CLS(\mu)$. The {\it Poincar\'e constant} of a random vector $X$ in $\RR^n$, denoted by $C_P(X)$, is the infimum over all $C \geq 0$ such that for any locally-Lipschitz function $f: \RR^n \rightarrow \RR$
satisfying $\EE |\nabla f(X)|^2 < \infty$,
\begin{equation} \Var( f(X) ) \leq C \cdot \EE |\nabla f(X)|^2. \label{eq_2154} \end{equation}
The {\it log-Sobolev constant} of a random vector $X$ in $\RR^n$, denoted by $\CLS(X)$, is the infimum over all $C \geq 0$ such that for any locally-Lipschitz function $f: \RR^n \rightarrow \RR$
satisfying $\EE f^2(X) = 1$ and $\EE |\nabla f(X)|^2 < \infty$,
\begin{equation} \EE f^2(X) \log f^2(X) \leq 2C \cdot \EE |\nabla f(X)|^2. \label{eq_1732} \end{equation}

We list a few important properties of the Poincar\'e and log-Sobolev constants, see  \cite{BGL} for more explanations, proofs and references.

\begin{itemize}
\item The Poincar\'e constant and the log-Sobolev constant of the standard Gaussian random vector $Z \in \RR^n$ satisfy $\CP(Z)=\CLS(Z)=1$. Furthermore, for any random variable we have that $$ \frac{\CP(X)}{\frac{1}{n}\EE|X-\EE X|^2}\geq 1 \qquad \text{and} \qquad \frac{\CLS(X)}{\frac{1}{n}\EE|X-\EE X|^2}\geq 1. $$ and in both cases the lower bound is attained if and only if $X$ is an isotropic Gaussian random variable.
\item $\CP(\mu)\leq \CLS(\mu)$.
\item For $p \geq 1$, let $\bar{\m{B}}_p=\{x\in\RR^n~:~\|x\|_p< \alpha\}$, where $\alpha$ is chosen such that $\mathrm{vol}(\bar{\m{B}}_p)=1$, be the unit-volume $\ell_p$-ball. For $U_p\sim\mathrm{Uniform}(\bar{\m{B}}_p)$, we have that
\begin{align}
c < \CP(U_p) < C 
\end{align}
for certain explicit universal constants $c, C > 0$. For $p \in [2, +\infty]$ we furthermore have
\begin{align}
c < \CLS(U_p) < C 
\end{align}
for universal constants $c, C > 0$. The log-Sobolev constant is not bounded by a universal constant when $p < 2$. 
These results may be extracted from the literature in the following way: First, up to a universal constant, Poincar\'e and log-Sobolev inequalities
follows from a corresponding isoperimetric inequality (see Ledoux \cite{led2} for the log-Sobolev case and Cheeger \cite{cheeger} for the Poincar\'e case). Second, the corresponding 
isoperimetric inequalities were proven in Sodin \cite{sodin} ($1 \leq p \leq 2$) and Lata\l{}a and Wojtaszczyk \cite{LW}.

\item $\CP(X)$ is finite for any log-concave random vector $X$ in $\RR^n$, see Bobkov \cite{bobkov}.
\item When $X$ is a log-concave random vector in $\RR^n$, the Kanann-Lov\'asz-Simonovits (KLS) conjecture from \cite{KLS} suggests that 
\begin{equation}
\| \Cov(X) \|_{op} \leq \CP(X) \leq C \cdot \| \Cov(X) \|_{op},
\label{eq_1557} 
\end{equation}
where $C > 0$ is a universal constant.
Here, $\Cov(X) \in \RR^{n \times n}$ is the covariance matrix of $X$, whose $i,j$ entry is $\EE X_i X_j - \EE X_i \EE X_j$.
We write $\| \cdot \|_{op}$ for the operator norm, thus\footnote{We denote the standard inner product of two vectors $x,y\in\RR^n$ by $x\cdot y=\langle x, y \rangle = x^T y$.}
\begin{align}  
\| \Cov(X) \|_{op} = \sup_{|\theta| = 1} \Var(X \cdot \theta) 
\label{eq_1558} 
\end{align}
where the supremum runs over all unit vectors $\theta \in \RR^n$. The left-hand side inequality in (\ref{eq_1557}) 
follows from the definitions (\ref{eq_2154}) and (\ref{eq_1558}). The current state of the art regarding 
the conjectural right-hand side inequality in (\ref{eq_1557}) is the bound from \cite{K}
\begin{align} 
\CP(X) \leq C \log n \cdot \| \Cov(X) \|_{op},
\end{align}
where $C > 0$ is a universal constant and $X$ is an arbitrary log-concave random variable in $\RR^n$.
\end{itemize}

We recall that an absolutely continuous probability distribution $\mu$ on $\RR^n$, or a random variable $X \sim \mu$ is log-concave 
if it is supported in an open, convex set $K \susbeteq \RR^n$ (which could be $\RR^n$ itself) and has a density of the form $e^{-H}$ 
in $K$ where $H$ is a convex function. Thus a Gaussian measure is log-concave, as well as the uniform measure on an arbitrary convex body.

\subsection{Main results on $\chi^2$ divergence}

For $\eta_{\chi^2}(\mu,s)$, we prove the following results.
\begin{theorem}
For any probability distribution $\mu$ and $s>0$
\begin{align}
1-\frac{\mmse(\mu,s)}{\EE_\mu|X-\EE X|^2}\leq \eta_{\chi^2}(\mu,s)=S^2(X,X+\sqrt{s}Z)\leq \frac{1}{1+\frac{s}{\CP(\mu)}}.
\label{eq_2110}
\end{align}
If $\mu$ is additionally log-concave, then,
\begin{equation}  \eta_{\chi^2}(\mu,s) \geq e^{-s / C_P(\mu)}. 
	\label{eq_1552}
	\end{equation}
\label{thm:etachi2UB}
\end{theorem}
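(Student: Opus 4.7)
The theorem contains the identity $\eta_{\chi^2}(\mu,s) = S^2(X, X+\sqrt{s}Z)$ (already quoted from \cite[Theorem 33.12]{PW_book} in the excerpt) together with three nontrivial inequalities: an MMSE-based lower bound, a Poincar\'e-based upper bound, and a log-concave lower bound. My plan addresses each of the three inequalities in turn.

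\textbf{MMSE lower bound.} I plug the centered coordinate functions $f_i(x) = x_i - \EE X_i$, $i=1,\ldots,n$, each normalized to unit variance, into the variational formula~\eqref{eq:HGR_CS} for $S^2$. By the law of total variance,
\[
\sum_i \Var(\EE[X_i|Y]) = \EE|X-\EE X|^2 - \mmse(\mu,s), \qquad \sum_i \Var(X_i) = \EE|X-\EE X|^2.
\]
The elementary inequality $\max_i (a_i/b_i) \geq (\sum a_i)/(\sum b_i)$, valid for positive $a_i,b_i$, then yields some coordinate $i_0$ for which $\Var(\EE[X_{i_0}|Y])/\Var(X_{i_0}) \geq 1 - \mmse(\mu,s)/\EE|X-\EE X|^2$, establishing the bound.

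\textbf{Poincar\'e upper bound.} The key ingredient is the differential identity
\begin{equation}
\frac{d}{ds}\chi^2(\nu_s\|\mu_s) = -\EE_{\mu_s}|\nabla h_s|^2, \qquad h_s := \frac{d\nu_s}{d\mu_s}, \label{eq:chi2deriv}
\end{equation}
derived by writing $\chi^2(\nu_s\|\mu_s) = \int p_s^2/q_s\,dy - 1$ (with $p_s, q_s$ the densities of $\nu_s, \mu_s$), differentiating in $s$ using the heat equations $\partial_s p_s = \tfrac12 \Delta p_s$ and $\partial_s q_s = \tfrac12 \Delta q_s$, and integrating by parts. Combining~\eqref{eq:chi2deriv} with the Poincar\'e inequality $\Var_{\mu_s}(h_s) \leq \CP(\mu_s)\,\EE_{\mu_s}|\nabla h_s|^2$ and the classical subadditivity under convolution $\CP(\mu * \gamma_s) \leq \CP(\mu) + \CP(\gamma_s) = \CP(\mu) + s$ yields the differential inequality $\frac{d}{ds}\log\chi^2(\nu_s\|\mu_s) \leq -1/(\CP(\mu)+s)$. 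Integrating from $0$ to $s$ gives $\chi^2(\nu_s\|\mu_s)/\chi^2(\nu\|\mu) \leq 1/(1 + s/\CP(\mu))$ uniformly in $\nu$.

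\textbf{Log-concave lower bound.} I expect this to be the main obstacle. The plan is to exhibit a specific $\nu$ that nearly saturates the bound. Take $d\nu/d\mu = 1 + \eps f$ with $\eps > 0$ small and $f$ a near-extremizer of the Poincar\'e inequality for $\mu$ (so $\EE_\mu f = 0$ and $\Var_\mu(f) \approx \CP(\mu)\EE_\mu|\nabla f|^2$). As $\eps \to 0$, the ratio $\chi^2(\nu_s\|\mu_s)/\chi^2(\nu\|\mu)$ tends to $V(s)/V(0)$, where $V(t) := \Var_{\mu_t}(g_t)$ and $g_t(y) := \EE[f(X)|Y_t = y]$. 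By~\eqref{eq:chi2deriv}, $V'(t) = -\EE_{\mu_t}|\nabla g_t|^2$, so it suffices to prove the pointwise gradient bound
\[
\EE_{\mu_t}|\nabla g_t|^2 \leq V(t)/\CP(\mu) \qquad (t \in [0,s]),
\]
which yields $V'(t)/V(t) \geq -1/\CP(\mu)$ and integrates to $V(s)/V(0) \geq e^{-s/\CP(\mu)}$. This is a reverse-Poincar\'e-type statement, so log-concavity must enter essentially. I would attack it through the explicit formula
\[
\nabla g_t(y) = \Cov(f(X), X \mid Y_t = y)/t,
\]
obtained by differentiating under the integral using $\nabla_y \log \gamma_t(y-x) = -(y-x)/t$ and Tweedie's identity, together with the log-concavity of the posterior $X \mid Y_t = y$ (whose density is the product of the two log-concave densities $\rho$ and $\gamma_t(y-\cdot)$), and either a Brascamp--Lieb bound on the posterior or a Caffarelli-type Gaussian comparison. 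The technical heart of the argument lies in this last comparison.
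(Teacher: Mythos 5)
Your arguments for the MMSE lower bound and for the right-hand side inequality in (\ref{eq_2110}) are correct and essentially the same as the paper's. The MMSE step uses the identical linear test functions and the elementary comparison $\max_i(a_i/b_i)\geq(\sum_i a_i)/(\sum_i b_i)$. For the Poincar\'e upper bound, the paper carries out the same computation---derivative equals minus a Dirichlet energy, then the Poincar\'e inequality for $\mu_s$, then subadditivity $\CP(\mu_s)\leq\CP(\mu)+s$---but in the dual language of $Q_s f=\EE[f(X)\,|\,X+\sqrt{s}Z]$ and $\Var(Q_sf(X_s))$ rather than on $\chi^2(\nu_s\|\mu_s)$ directly; the two are the same computation since $Q_s(d\nu/d\mu)=d\nu_s/d\mu_s$.

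The gap is in the log-concave lower bound (\ref{eq_1552}). Your reduction to the inequality $\EE_{\mu_t}|\nabla g_t|^2\leq V(t)/\CP(\mu)$ for $t\in[0,s]$ is the right target, but this is \emph{not} a static reverse-Poincar\'e statement provable from Brascamp--Lieb on the posterior $X\,|\,Y_t=y$ or a Caffarelli-type Gaussian comparison. The inequality depends essentially on the initial choice of $f$ as a near-extremizer of the Poincar\'e inequality for $\mu$: for a generic $f$ (say, an eigenfunction of $-L_\mu$ with eigenvalue strictly greater than $1/\CP(\mu)$) it already fails at $t=0$, so a posterior-covariance argument that never sees $f$ cannot possibly deliver it. What the paper actually proves is that the Rayleigh quotient $s\mapsto\EE_{\mu_s}|\nabla g_s|^2/V(s)$ is \emph{non-increasing} in $s$ when $\mu$ is log-concave; combined with its initial value $\approx 1/\CP(\mu)$, this gives the bound for all $s$. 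The monotonicity is obtained by differentiating the Rayleigh quotient along the $Q_s$-dynamics of Klartag--Putterman and using the Bochner identity together with the spectral-theorem inequality $\langle L_{\mu_s}^2\tilde f_s,\tilde f_s\rangle\geq\langle L_{\mu_s}\tilde f_s,\tilde f_s\rangle^2$; log-concavity enters only to make the Hessian term $\EE\,(\nabla^2\psi_s)\nabla\tilde f_s\cdot\nabla\tilde f_s$ nonnegative. That dynamic, initial-data-dependent spectral-monotonicity mechanism is the idea missing from your proposal.
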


\begin{theorem}
    For any log-concave probability distribution $\mu$ and function $f:\RR^n\to\RR$, the function $s\mapsto \log\Var(\EE[f(X)|X+\sqrt{s}Z])$ is convex.
In particular, whenever $\mu$ is log-concave, $s\mapsto \log S(X,X+\sqrt{s}Z)$ is convex, and consequently, $s\mapsto \eta_{\chi^2}(\mu,s)$ is also convex.
\label{thm_1355} 
\end{theorem}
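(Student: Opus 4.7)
The plan is to prove the differential inequality $V(s) V''(s) \geq (V'(s))^2$ for
\[
V(s) := \Var_{\mu_s}\bigl(\phi_s(Y_s)\bigr), \qquad \phi_s(y) := \EE[f(X) \mid X + \sqrt{s}Z = y],
\]
since this is equivalent to convexity of $\log V$. Subtracting a constant from $f$, we may assume $\EE f(X) = 0$, so that $V(s) = \int \phi_s^2 \, d\mu_s$. The first step is to derive the evolution equation for $\phi_s$: since $h_s := (f\rho) * \gamma_s = \phi_s \rho_s$ and $\rho_s = \rho * \gamma_s$ both solve the heat equation $\partial_s u = \tfrac{1}{2}\Delta u$, a direct computation gives
\[
\partial_s \phi_s = \tfrac{1}{2}\Delta \phi_s + \nabla \log \rho_s \cdot \nabla \phi_s.
\]

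Next I would differentiate $V(s) = \int \phi_s^2 \rho_s \, dy$ twice in $s$, substitute the evolution equations for $\phi_s$ and $\rho_s$, and integrate by parts. The first differentiation, after cancellation of the drift contributions, yields the Dirichlet-form identity
\[
V'(s) = -\int |\nabla \phi_s|^2 \, d\mu_s = \int \phi_s L \phi_s \, d\mu_s,
\]
where $L := \Delta + \nabla \log \rho_s \cdot \nabla$ is the symmetric operator in $L^2(\mu_s)$. The second differentiation is the more delicate calculation: invoking the Bochner identity $\nabla \phi \cdot \nabla \Delta \phi = \tfrac12 \Delta |\nabla \phi|^2 - |\nabla^2 \phi|^2$ and two further integrations by parts to cancel the terms involving $\Delta \rho_s$ and $\nabla \rho_s$ leads to
\[
V''(s) = \int |\nabla^2 \phi_s|^2 \, d\mu_s - 2 \int \nabla \phi_s^{\top} \bigl(\nabla^2 \log \rho_s\bigr) \nabla \phi_s \, d\mu_s.
\]

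Log-concavity of $\mu$ enters now: since $\mu$ and the Gaussian density $\gamma_s$ are both log-concave, the Pr\'ekopa--Leindler inequality gives that $\mu_s = \mu * \gamma_s$ is log-concave, so $-\nabla^2 \log \rho_s \succeq 0$. Setting $D := \int \nabla \phi_s^{\top}(-\nabla^2 \log \rho_s) \nabla \phi_s \, d\mu_s \geq 0$, the $V''$ formula can be rewritten as
\[
V''(s) = \int \Gamma_2(\phi_s) \, d\mu_s + D, \qquad \Gamma_2(\phi) := |\nabla^2 \phi|^2 - \nabla \phi^{\top} \nabla^2 \log \rho_s \, \nabla \phi,
\]
where $\Gamma_2$ is the Bakry--\'Emery iterated carr\'e du champ for $L$. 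Combining the classical identity $\int (L\phi)^2 \, d\mu_s = \int \Gamma_2(\phi) \, d\mu_s$ with the Cauchy--Schwarz inequality, I would conclude
\[
\bigl(V'(s)\bigr)^2 = \Bigl(\int \phi_s L \phi_s \, d\mu_s\Bigr)^2 \leq \int \phi_s^2 \, d\mu_s \cdot \int (L\phi_s)^2 \, d\mu_s = V(s)\bigl(V''(s) - D\bigr) \leq V(s) V''(s).
\]
This is $(\log V)'' \geq 0$, proving convexity of $s \mapsto \log \Var(\EE[f(X) \mid X + \sqrt{s}Z])$. The two ``in particular'' assertions then follow at once: $\log S(X, X + \sqrt{s}Z) = \tfrac12 \sup_f \log V_f(s)$ is a pointwise supremum of convex functions and hence convex, while $\eta_{\chi^2}(\mu, s) = S^2 = \exp(2\log S)$ is the exponential of a convex function and hence itself convex.

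The main obstacle I anticipate is the bookkeeping in the computation of $V''(s)$: the drift-generated contributions from the evolution of $\phi_s$ and the Laplacian-generated contributions from the evolution of $\rho_s$ must conspire to leave precisely the Bakry--\'Emery combination $\Gamma_2(\phi_s) + D$, and it is exactly this structure that allows Cauchy--Schwarz to close the gap. Log-concavity plays only a quiet final role, providing the sign $D \geq 0$ that absorbs the Cauchy--Schwarz slack and lifts $V(V'' - D)$ to $VV''$.
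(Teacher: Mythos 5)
Your proposal is correct and takes essentially the same route as the paper. The paper works with the Rayleigh quotient $\EE|\nabla f_s(X_s)|^2/\Var(f_s(X_s)) = -V'/V = -(\log V)'$, showing it is non-increasing in $s$, which is the same statement as your $VV'' \geq (V')^2$; the key inputs (the formulas for $V'$ and $V''$ from Klartag--Putterman, the Bochner identity $\int(L\phi)^2\,d\mu_s = \int\Gamma_2(\phi)\,d\mu_s$, the Cauchy--Schwarz/spectral-theorem inequality $\langle L^2\phi,\phi\rangle\|\phi\|^2 \geq \langle L\phi,\phi\rangle^2$, and log-concavity to discard the $\nabla^2\psi_s$ term) are identical, as are the concluding supremum-of-convex-functions and $\exp\circ(\text{convex})$ steps.
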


Inequality (\ref{eq_1552}) as well as Theorem \ref{thm_1355} are mostly an interpretation of the results of \cite{KP}. Theorem~\ref{thm:etachi2UB} implies that for log-concave $\mu$ the rate at which $s\mapsto \eta_{\chi^2}(\mu,s)$ decreases with $s$ is largely determined by $\CP(\mu)$. In particular, if $s^{\chi^2}_{\mu}(\alpha)=\min\{s~:~\eta_{\chi^2}(\mu,s)\leq \alpha\}$ is the required time  for $\eta_{\chi^2}(\mu,s)$ to drop below $\alpha\in(0,1)$, then
\begin{align}
\CP(\mu)\cdot \log\frac{1}{\alpha}\leq s^{\chi^2}_{\mu}(\alpha)\leq \CP(\mu)\cdot \left(\frac{1}{\alpha}-1\right).
\end{align}
For $\alpha$ close to $1$, the upper bound and lower bound nearly coincide. 
We refer to the quantity $s^{\chi^2}_{\mu}(\alpha)$ for $\alpha = 1/2$ as the ``$\chi^2$ half-blurring time'' of the measure $\mu$.
In the log-concave case, the $\chi^2$ half-blurring time of $\mu$ has the order of magnitude of $C_P(\mu)$, up to an explicit multiplicative universal constant.
A Corollary of Theorem \ref{thm:etachi2UB} is the following:

\begin{corollary}
\begin{align}
\mmse(\mu,s)\geq\frac{\EE_\mu|X-\EE X|^2}{1+\frac{\CP(\mu)}{s}}=\frac{nP}{1+\frac{P}{s}\cdot \nCP(\mu)}
\end{align}
\label{cor:mmseLB}
where
\begin{align}
P=\frac{1}{n}\EE_\mu|X-\EE X|^2,
\label{eq:def_P}
\end{align}
and
\begin{align}
\nCP(\mu)=\frac{\CP(\mu)}{P}.
\label{eq:def_nCP}
\end{align} \label{cor_1405}
\end{corollary}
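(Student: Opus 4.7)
The plan is to derive the corollary as a direct algebraic consequence of the chain of inequalities in Theorem \ref{thm:etachi2UB}, combined with the definitions of $P$ and $\nCP(\mu)$. Since the left-hand side of (\ref{eq_2110}) gives a lower bound on $\eta_{\chi^2}(\mu,s)$ in terms of $\mmse(\mu,s)$, and the right-hand side gives an upper bound in terms of $\CP(\mu)$, chaining them yields a direct inequality linking $\mmse(\mu,s)$ and $\CP(\mu)$, with no need for any additional tools.

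Concretely, first I would write the two outer inequalities of (\ref{eq_2110}) together:
\begin{equation*}
1-\frac{\mmse(\mu,s)}{\EE_\mu|X-\EE X|^2} \;\leq\; \eta_{\chi^2}(\mu,s) \;\leq\; \frac{1}{1+\frac{s}{\CP(\mu)}}.
\end{equation*}
Dropping the middle quantity and rearranging gives
\begin{equation*}
\frac{\mmse(\mu,s)}{\EE_\mu|X-\EE X|^2} \;\geq\; 1 - \frac{1}{1+\frac{s}{\CP(\mu)}} \;=\; \frac{s/\CP(\mu)}{1+s/\CP(\mu)} \;=\; \frac{1}{1+\CP(\mu)/s},
\end{equation*}
which is precisely the first expression in the statement after multiplying through by $\EE_\mu|X-\EE X|^2$.

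Second, I would verify the final equality by plugging in the definitions (\ref{eq:def_P}) and (\ref{eq:def_nCP}). Since $\EE_\mu|X-\EE X|^2 = nP$ and $\CP(\mu) = P \cdot \nCP(\mu)$, we have $\CP(\mu)/s = (P/s) \cdot \nCP(\mu)$, so
\begin{equation*}
\frac{\EE_\mu|X-\EE X|^2}{1+\CP(\mu)/s} \;=\; \frac{nP}{1+\tfrac{P}{s}\,\nCP(\mu)},
\end{equation*}
matching the right-hand expression in the corollary.

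There is no real obstacle here: the entire content of the corollary is already contained in Theorem \ref{thm:etachi2UB}; only elementary algebraic rearrangement and the two normalizing definitions are needed. The short write-up therefore reduces to recording these two manipulations, and perhaps noting that the right-hand form is the one that will be convenient later when interpreting the bound in the units of signal-to-noise ratio $P/s$ and normalized Poincar\'e constant $\nCP(\mu)$.
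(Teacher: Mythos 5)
Your proof is correct and is exactly the derivation the paper has in mind: the paper states Corollary \ref{cor:mmseLB} as an immediate consequence of Theorem \ref{thm:etachi2UB} without writing out the algebra, and your chaining of the two outer inequalities of (\ref{eq_2110}) followed by the rearrangement $1-\tfrac{1}{1+s/\CP(\mu)}=\tfrac{1}{1+\CP(\mu)/s}$ and the substitution of the definitions of $P$ and $\nCP(\mu)$ is precisely that omitted step.
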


Let us compare Corollary \ref{cor_1405} with known bounds in Estimation Theory. 
Assume that $\mu$ has smooth density $\rho$ and recall that the Bayesian Cram\'er-Rao lower bound (CRLB)/multivariate van Trees inequality~\cite[eq.10]{gill1995applications} for the additive isotropic Gaussian noise case, gives
\begin{align}
\mmse(\mu,s)\geq \frac{n^2 s}{n+s\m{J}(\mu)}=\frac{nP}{\nFisher(\mu)+\frac{P}{s}},    
\label{eq:CRLB}
\end{align}
where
\begin{align}
 \m{J}(\mu)=\int_{x\in\RR^n}\frac{|\nabla \rho(x)|^2}{\rho(x)}dx. 
\end{align}
and
\begin{align}
\nFisher(\mu)= \frac{\m{J}(\mu)\cdot P}{n}.
\label{eq:def_nFisher}
\end{align}
We see that the lower bound from Corollary~\ref{cor:mmseLB} is tighter than the Bayesian CRLB whenever $\frac{P}{s}(\nCP(\mu)-1)<\nFisher(\mu)-1$. In particular, whenever $\nCP(\mu)$ is finite, the bound from Corollary~\ref{cor:mmseLB} is tighter than the Bayesian CRLB for low enough signal-to-noise ratio (SNR) $\frac{P}{s}$. There are also cases where $\nFisher(\mu)$ is infinite, rendering the Bayesian CRLB useless for all $s>0$, whereas $\nCP(\mu)$ is finite. An example of such probability distribution is the uniform distribution over a convex set.
We remark that there are also examples where $\m{J}(\mu)$ is finite and $\CP(\mu)$ infinite, such as the density $t \mapsto e^{-(t^2 + 1)^{\alpha}}$ on the real line for $0 < \alpha < 1/2$.

\medskip Using the I-MMSE identity of Guo, Shamai and Verd\`u~\cite[Theorem 2]{guo2005mutual}, we can deduce the following.
\begin{corollary}
\begin{align}
I(\mu,s)=I(X;X+\sqrt{s}Z)\geq \frac{n}{2}\cdot\frac{1}{\nCP(\mu)}\log \left(1+\frac{P}{s}\cdot \nCP(\mu) \right),
\end{align}
where $P$ and $\nCP(\mu)$  are as defined in~\eqref{eq:def_P} and~\eqref{eq:def_nCP}, respectively.
\label{cor:mi}
\end{corollary}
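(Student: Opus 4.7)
The plan is to invoke the I-MMSE identity of Guo, Shamai and Verd\`u to express $I(\mu,s)$ as an integral of $\mmse(\mu,\cdot)$ over $[s,\infty)$, substitute the MMSE lower bound from Corollary~\ref{cor:mmseLB}, and evaluate the resulting elementary integral.

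First, translating the I-MMSE identity from its SNR form to the noise-variance parametrization via $\mathrm{snr}=1/s$ yields
\begin{align}
\frac{d}{ds} I(\mu,s) \;=\; -\frac{1}{2 s^2}\, \mmse(\mu,s).
\end{align}
Integrating from $s$ to $\infty$ and using that $I(\mu,s')\to 0$ as $s'\to\infty$ (which holds since $P<\infty$, so $\mu_{s'}$ becomes asymptotically indistinguishable from a Gaussian with the same mean and variance), one obtains the integral representation
\begin{align}
I(\mu,s) \;=\; \int_s^\infty \frac{\mmse(\mu,s')}{2\,(s')^2}\, ds'.
\end{align}

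Second, I plug in the lower bound $\mmse(\mu,s')\geq nPs'/(s'+\CP(\mu))$ from Corollary~\ref{cor:mmseLB}, using $\CP(\mu)=P\cdot\nCP(\mu)$. The integrand becomes $\tfrac{nP}{2\,s'(s'+\CP(\mu))}$, which splits by partial fractions as $\tfrac{nP}{2\CP(\mu)}\bigl(\tfrac{1}{s'}-\tfrac{1}{s'+\CP(\mu)}\bigr)$ and integrates in closed form to $\tfrac{nP}{2\CP(\mu)}\log\!\bigl(1+\CP(\mu)/s\bigr)$. Rewriting $P/\CP(\mu)=1/\nCP(\mu)$ and $\CP(\mu)/s=(P/s)\nCP(\mu)$ produces exactly the claimed expression.

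There is no real obstacle here: Corollary~\ref{cor:mmseLB} and the I-MMSE identity do all the work, and the remaining computation is mechanical. The only mildly delicate point is verifying the boundary behavior $I(\mu,s')\to 0$ as $s'\to\infty$, which is standard under finite second moment. As a sanity check, when $\mu$ is an isotropic Gaussian with covariance $P\cdot\id$ the MMSE bound of Corollary~\ref{cor:mmseLB} is tight and $\nCP(\mu)=1$, so the bound collapses to the familiar $\tfrac{n}{2}\log(1+P/s)$, confirming that Corollary~\ref{cor:mi} is sharp in the Gaussian case.
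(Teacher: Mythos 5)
Your proof is correct and is essentially the same as the paper's: both invoke the Guo--Shamai--Verd\'u I-MMSE identity, substitute the lower bound from Corollary~\ref{cor:mmseLB}, and evaluate the resulting elementary integral. The only cosmetic difference is that you work directly in the noise-variance variable $s$ and integrate over $[s,\infty)$, whereas the paper changes variables to the SNR $\rho = 1/s$ and integrates over $[0,\rho]$ — the boundary condition $\lim_{s'\to\infty} I(\mu,s') = 0$ that you need is exactly the condition $\tilde{I}(\mu,0^+)=0$ that the paper uses, so the two are equivalent.
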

It is easy to verify that the function $t\mapsto \frac{1}{t}\log(1+\frac{P}{s}\cdot t)$ is decreasing. Thus, the lower bound above is monotonically decreasing in $\nCP(\mu)$, which in turn is minimized for the isotropic Gaussian distribution, for which its value is $1$. Thus, the lower bound is tight for the isotropic Gaussian distribution, and controls the mutual information loss of the probability distribution $\mu$ from the upper bound $\frac{n}{2}\log(1+\frac{P}{s})$ via the distance from Gaussianity measure $\nCP(\mu)\geq 1$. 

\medskip 
It is insightful to compare the lower bound from Corollary~\ref{cor:mi} to the standard entropy power inequality (EPI)-based~\cite[Theorem 17.7.3]{Cover} lower bound
\begin{align}
I(\mu,s)\geq \frac{n}{2}\log \left(1+\frac{P}{s}\cdot \nEP(\mu)\right).
\label{eq:miEPI}
\end{align}
where
\begin{align}
\nEP(\mu)=\frac{e^{\frac{2}{n}h(\mu)}}{2\pi e\cdot P}
\end{align}
is the normalized entropy power (recall that $\nEP(\mu)\leq 1$ with equality iff $\mu$ is Gaussian isotropic). It is easy to see that whenever $\nCP(\mu)<\infty$, the new lower bound from Corollary~\ref{cor:mi} is tighter than the EPI-based bound~\eqref{eq:miEPI} for $s$ large enough (low SNR).

\medskip Note that using the Bayesian CRLB~\eqref{eq:CRLB} and the I-MMSE identity, one easily obtains the lower bound
\begin{align}
I(\mu,s)\geq \frac{n}{2}\log \left(1+\frac{P}{s}\cdot \frac{1}{\nFisher(\mu)} \right).
\end{align}
However, since for any probability distribution $\mu$ we have that $\nFisher(\mu)\cdot \nEP(\mu)\geq 1$, with equality iff $\mu$ is the isotropic Gaussian distribution~\cite{carlen91},~\cite[Theorem 16]{DemboCoverThomas91}, this bound is subsumed by the EPI-based bound~\eqref{eq:miEPI}.

\subsection{Main results on KL-divergence}

We now move on to present our results concerning the KL divergence. Our first result is a generalization of de Bruijn's identity.
\begin{proposition} Let $\mu$ and $\nu$ be probability measures on $\RR^n$ with $D(\nu \D \mu) < \infty$, and assume $\EE_\mu|X|^4<\infty$. Then for any $s > 0$,
\begin{align}
\frac{d}{ds} D( \nu_s \D \mu_s) = -\frac{1}{2} J( \nu_s \D \mu_s),
\end{align}
where the Fisher information of a probability distribution $\nu$ relative to $\mu$ is, with $f = d\nu / d \mu$,
\begin{align}
 J(\nu \D \mu) = \int_{\RR^n} \frac{|\nabla f|^2}{f} d \mu.
 \end{align}
\label{thm:deBruijnGeneralized}
\end{proposition}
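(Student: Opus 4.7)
The plan is to compute the derivative by hand via the heat equation, with the technical challenge being to justify the manipulations rigorously.

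For every $s>0$ the measures $\mu_s$ and $\nu_s$ have smooth positive densities $p_s,q_s$ on $\RR^n$, each satisfying the heat equation $\partial_s p_s=\tfrac12\Delta p_s$ and $\partial_s q_s=\tfrac12\Delta q_s$. Writing $f_s=q_s/p_s$ and splitting
$$D(\nu_s\D\mu_s)=\int q_s\log q_s\,dx-\int q_s\log p_s\,dx,$$
I would differentiate in $s$ under the integral sign. Using $\int\partial_s q_s\,dx=0$ (because $\int q_s=1$) together with the heat equation, this yields
$$\frac{d}{ds}D(\nu_s\D\mu_s)=\tfrac12\int\Delta q_s\cdot\log f_s\,dx-\tfrac12\int f_s\cdot\Delta p_s\,dx.$$

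Next I would integrate by parts in each of the two terms. Using $\nabla q_s=p_s\nabla f_s+f_s\nabla p_s$, the first integral equals
$$-\int\nabla q_s\cdot\frac{\nabla f_s}{f_s}\,dx=-\int\frac{p_s|\nabla f_s|^2}{f_s}\,dx-\int\nabla f_s\cdot\nabla p_s\,dx,$$
while the second equals $-\int\nabla f_s\cdot\nabla p_s\,dx$. The two cross terms $\int\nabla f_s\cdot\nabla p_s\,dx$ cancel, and what remains is exactly $-\tfrac12 J(\nu_s\D\mu_s)$, as claimed.

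The main obstacle is justifying these formal operations. I would perform the integration by parts on a ball $B_R$ and then send $R\to\infty$, so the task reduces to showing that the boundary integrals on $\partial B_R$ vanish in the limit and that the cross terms are absolutely convergent so that their cancellation is meaningful. For this I would invoke the smoothing and decay estimates provided by the heat semigroup: pointwise bounds on $p_s,q_s,\nabla p_s,\nabla q_s$ coming from the explicit Gaussian convolution representation, combined with the hypothesis $\EE_\mu|X|^4<\infty$ (which is preserved and amplified under the heat flow), supply the decay at infinity needed to kill the surface terms and to apply dominated convergence when interchanging $d/ds$ with the integral. The assumption $D(\nu\D\mu)<\infty$, together with the data-processing inequality, guarantees that $D(\nu_s\D\mu_s)<\infty$ for all $s\geq 0$, so the left-hand side of the identity is well defined; finiteness of $J(\nu_s\D\mu_s)$ for $s>0$ would be obtained separately, again via the heat-flow smoothing of $f_s$ and its gradient.
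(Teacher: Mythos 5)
Your formal calculation is correct and is in fact a cleaner, more elementary route than the paper's: the paper differentiates $\int \vphi(f_s)\,d\mu_s$ via the identity $\tfrac{d}{ds}Q_s f=\Box_s Q_s f$ with $\Box_s=L_s-\Delta/2$ and integration by parts against $L_s$, whereas you split off the two logarithms, use the heat equation directly for $p_s$ and $q_s$, and observe the cancellation of the cross terms $\int\nabla f_s\cdot\nabla p_s\,dx$. Both yield exactly $-\tfrac12\int |\nabla f_s|^2/f_s\,d\mu_s$, and in the case $\vphi(x)=x\log x$ they are two phrasings of the same cancellation.

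The genuine gap is in the sentence where you wave at the regularity justification, and it is not a small one, because it is precisely the obstruction the paper went out of its way to avoid. You propose to kill the boundary terms and to dominate the $s$-derivative using decay coming from the Gaussian convolution ``combined with the hypothesis $\EE_\mu|X|^4<\infty$''. But that moment hypothesis controls $\mu$, not $\nu$: the boundary integrands and the dominating function contain $q_s$, $\nabla q_s$, $\partial_s q_s$, hence second moments of $\nu$, and $D(\nu\D\mu)<\infty$ together with $\EE_\mu|X|^4<\infty$ does not give you $\EE_\nu|X|^2<\infty$, let alone $\EE_\nu|X|^4<\infty$. The paper's appendix makes the needed change of measure from $\nu$ to $\mu$ by Cauchy--Schwarz, which costs a factor of $\sqrt{\chi^2(\nu\D\mu)+1}$ (equations \eqref{eq:changeofmeasurestart}--\eqref{eq:changeofmeasuresend}); finiteness of $\chi^2(\nu\D\mu)$ is extracted (Proposition \ref{prop:ReniyFinite}) from the standing assumption $D_{\vphi_\lambda}(\nu\D\mu)<\infty$ with $\lambda>1$. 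This is exactly why Lemma \ref{lem_620} is first proved for $\vphi_\lambda$, $\lambda>1$, and the KL identity is then recovered by letting $\lambda\to 1^+$: finiteness of $D(\nu\D\mu)$ alone does \emph{not} imply finiteness of $\chi^2(\nu\D\mu)$, and the direct-to-KL justification you outline does not have access to any analogous $L^2$ control on the likelihood ratio. To repair your proposal you would either have to supply a different source of tightness on the $\nu$-side (not available from the stated hypotheses), or do what the paper does: run the rigorous argument for $\lambda>1$ and pass to the limit.
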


Proposition~\ref{thm:deBruijnGeneralized} easily implies de Bruijn's identity. 

\medskip

Some of the proceeding results rely on the correctness of a certain conjecture, namely Conjecture~\ref{lem_1824}, stated in Section~\ref{sec:KLandGenProofs}. Calling it a conjecture is perhaps too modest: we provide an almost full of proof for Conjecture~\ref{lem_1824}. The only gap between our ``proof'' and a fully rigorous proof is the justification of order exchange between integration and differentiation, as well as justifying several integration by parts (that is, showing that the involved functions vanish at infinity). While it is not trivial to rigorously justify these steps, we are quite certain they are correct under some minimal regularity assumptions. In fact, the justification of such steps is often ignored in the literature. For example, to the best of our knowledge, the first time a fully rigorous proof was given for de Bruijn's identity was by Barron in 1984~\cite{barron1984monotonic}, while prior proofs, such as that of Blachman~\cite{blachman65} did not justify the exchange of differentiation and integration. Whenever a statement relies on the correctness of Conjecture~\ref{lem_1824}, we explicitly emphasize it.

\medskip

We prove the following bounds for the KL contraction coefficient:

\begin{theorem}
For any probability measure $\mu$  and $s>0$,
\begin{align}
1-\frac{\mmse(\mu,s)}{\EE_\mu |X-\EE X |^2}\leq \eta_{\text{KL}}(\mu,s)\leq \frac{1}{1+\frac{s}{\CLS(\mu)}}.
\label{eq_1120}
\end{align}
If additionally $\mu$ is log-concave, then, assuming the validity of Conjecture~\ref{lem_1824}, we also have
\begin{align}
1-\frac{s}{\CLS(\mu)}\leq\eta_{\text{KL}}(\mu,s).
\end{align}
\label{thm:etaKLbounds}
\end{theorem}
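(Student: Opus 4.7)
My plan is to handle the three inequalities separately: the upper bound via a differential inequality, the general lower bound as an immediate corollary of Theorem~\ref{thm:etachi2UB}, and the log-concave lower bound by exhibiting near-extremizers of LSI combined with the convexity input from Conjecture~\ref{lem_1824}.

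\textbf{Upper bound and general lower bound.} For the upper bound, I would start from Proposition~\ref{thm:deBruijnGeneralized}, which gives $\frac{d}{ds} D(\nu_s \D \mu_s) = -\frac{1}{2} J(\nu_s \D \mu_s)$. Applying the log-Sobolev inequality (\ref{eq_1732}) to $\mu_s$ with the test function $f = \sqrt{d\nu_s / d\mu_s}$ (so that $\EE_{\mu_s} f^2 = 1$, $\EE_{\mu_s} f^2 \log f^2 = D(\nu_s \D \mu_s)$, and $4\EE_{\mu_s} |\nabla f|^2 = J(\nu_s \D \mu_s)$) yields $D(\nu_s \D \mu_s) \leq \frac{\CLS(\mu_s)}{2} J(\nu_s \D \mu_s)$. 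Combining with the classical bound $\CLS(\mu_s) \leq \CLS(\mu) + s$---Gaussian convolution adds at most $s$ to the LSI constant, a fact that in the strongly log-concave regime follows from the Brascamp-Lieb identity for the Hessian of $-\log(\mu*\gamma_s)$---produces the differential inequality $\frac{d}{ds}\log D(\nu_s \D \mu_s) \leq -1/(\CLS(\mu)+s)$. Integrating from $0$ to $s$ gives $D(\nu_s \D \mu_s)/D(\nu \D \mu) \leq 1/(1+s/\CLS(\mu))$, and taking the supremum over $\nu$ closes the argument. The general lower bound is then immediate by chaining Theorem~\ref{thm:etachi2UB} with (\ref{eq_1135}): $\eta_{\text{KL}}(\mu,s) \geq \eta_{\chi^2}(\mu,s) \geq 1 - \mmse(\mu,s)/\EE_\mu|X-\EE X|^2$.

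\textbf{Log-concave lower bound.} The plan is to saturate LSI along a sequence. I would select $(f_k)$ with $\EE_\mu f_k^2 = 1$ and $\EE_\mu f_k^2 \log f_k^2 \geq (2-\eps_k) \CLS(\mu)\EE_\mu|\nabla f_k|^2$ with $\eps_k \to 0$ (which exist by the infimum definition of $\CLS$), and set $d\nu_k/d\mu = f_k^2$, so that $J(\nu_k \D \mu)/D(\nu_k \D \mu) \to 2/\CLS(\mu)$. I read Conjecture~\ref{lem_1824} as asserting (in the form needed here) that for log-concave $\mu$ the map $s \mapsto D(\nu_s \D \mu_s)$ is convex, equivalently that the relative Fisher information $s \mapsto J(\nu_s \D \mu_s)$ is non-increasing---a natural relative-entropy analog of Costa's concavity theorem. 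Given convexity, $D(\nu_s \D \mu_s)$ lies above its tangent at $s=0$, hence $D((\nu_k)_s \D \mu_s) \geq D(\nu_k \D \mu) - \frac{s}{2} J(\nu_k \D \mu)$, which rearranges to $D((\nu_k)_s \D \mu_s)/D(\nu_k \D \mu) \geq 1 - s(1+o_k(1))/\CLS(\mu)$. Letting $k\to\infty$ and taking the supremum over $\nu$ in the definition of $\eta_{\text{KL}}$ yields the desired bound.

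\textbf{Main obstacles.} The crux is the log-concave lower bound: establishing the convexity in Conjecture~\ref{lem_1824} itself, which generalizes Costa's theorem for differential entropy and is a genuinely nontrivial heat-flow monotonicity statement about relative Fisher information. A secondary technical point is justifying $\CLS(\mu_s) \leq \CLS(\mu) + s$ for arbitrary (not necessarily log-concave) $\mu$, since the Brascamp-Lieb Hessian argument applies directly only in the strongly log-concave case and a separate semigroup/interpolation argument is needed in full generality.
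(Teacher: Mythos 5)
Your proof is correct and takes essentially the same approach as the paper: the upper bound follows from de Bruijn's identity combined with the log-Sobolev inequality for $\mu_s$ and subadditivity of $\CLS$ under convolution, the general lower bound is chained from Theorem~\ref{thm:etachi2UB} via (\ref{eq_1135}), and the log-concave lower bound uses near-extremizers of LSI together with the convexity of $s\mapsto D(\nu_s\D\mu_s)$ that Conjecture~\ref{lem_1824} delivers (for $\vphi(x)=x\log x$ the Bochner-type integrand has $\kappa\equiv 0$ and a sign-definite Hessian term by log-concavity, so $J$ is non-increasing and $D$ is convex; you state the convexity as a reading of the conjecture, which is a slightly loose paraphrase but exactly the implication the paper extracts). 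The paper runs the identical argument in the general $\vphi$-Sobolev framework (Theorem~\ref{thm_1601}) and cites it for this theorem, so your direct KL specialization is the same computation. One minor correction to the ``obstacles'' you list: the subadditivity $\CLS(\mu_s)\leq\CLS(\mu)+s$ does not require log-concavity, strong convexity, or a separate semigroup argument---it is a completely general fact for independent summands, proved by the entropy-decomposition/Jensen argument of Chafa\"i (Theorem~\ref{thm_1315} in the paper, whose proof only uses that $(x,y)\mapsto\vphi''(x)|y|^2$ is jointly convex, i.e.\ that $1/\vphi''$ is concave), so this is not a genuine gap in your argument.
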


The left-hand side inequality in (\ref{eq_1120}) immediately follows from 
(\ref{eq_1135}) and (\ref{eq_2110}).

\begin{theorem}
Assuming validity of Conjecture~\ref{lem_1824}, for any probability distribution $\nu$ and any log-concave probability distribution $\mu$, the mapping $s\mapsto D(\nu_s\|\mu_s)$ is convex. 
In particular, assuming validity of Conjecture~\ref{lem_1824}, whenever $\mu$ is log-concave, $s\mapsto \eta_{\text{KL}}(\mu,s)$ is convex
\label{thm:KLconvex}
\end{theorem}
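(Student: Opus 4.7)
The strategy is to reduce convexity of $D(\nu_s\|\mu_s)$ to monotonicity of the relative Fisher information $J(\nu_s\|\mu_s)$. Indeed, the generalized de Bruijn identity (Proposition~\ref{thm:deBruijnGeneralized}) gives
\[
\frac{d}{ds} D(\nu_s\|\mu_s) = -\frac{1}{2} J(\nu_s\|\mu_s),
\]
so $\frac{d^2}{ds^2} D(\nu_s\|\mu_s) \geq 0$ is equivalent to $\frac{d}{ds} J(\nu_s\|\mu_s) \leq 0$, which is what I would establish directly.

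Let $\eta_s,\rho_s$ denote the densities of $\nu_s,\mu_s$, set $\psi_s = -\log \rho_s$, and let $u_s = \log(d\nu_s/d\mu_s) = \log \eta_s - \log \rho_s$. Since $\mu$ is log-concave and convolution with a Gaussian preserves log-concavity (Pr\'ekopa--Leindler), $\mu_s$ is log-concave, hence $\nabla^2 \psi_s \succeq 0$. From the heat equations $\partial_s \eta_s = \tfrac{1}{2}\Delta \eta_s$ and $\partial_s \rho_s = \tfrac{1}{2}\Delta \rho_s$, together with the identity $\Delta \eta/\eta = \Delta \log \eta + |\nabla \log \eta|^2$ (and its analogue for $\rho_s$), one derives
\[
\partial_s u_s = \tfrac{1}{2} \Delta u_s + \tfrac{1}{2}|\nabla u_s|^2 - \nabla \psi_s \cdot \nabla u_s.
\]

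Writing $J(\nu_s\|\mu_s) = \int |\nabla u_s|^2\, d\nu_s$, I would differentiate in $s$ and perform a sequence of integrations by parts, invoking Bochner's identity $\tfrac{1}{2}\Delta(|\nabla u|^2) = \|\nabla^2 u\|_{HS}^2 + \nabla u \cdot \nabla \Delta u$ and using $\nabla \log \eta_s = \nabla u_s - \nabla \psi_s$. After the various cross terms of the form $\langle \nabla u_s, (\nabla^2 u_s)\nabla u_s \rangle$ and $\langle \nabla u_s, (\nabla^2 u_s)\nabla \psi_s\rangle$ cancel, the calculation yields the clean formula
\[
\frac{d}{ds} J(\nu_s\|\mu_s) = -\int \|\nabla^2 u_s\|_{HS}^2\, d\nu_s - 2 \int \langle \nabla u_s, \nabla^2 \psi_s\, \nabla u_s \rangle\, d\nu_s.
\]
Both integrands are non-negative (the second by $\nabla^2 \psi_s \succeq 0$), so $\frac{d}{ds} J(\nu_s\|\mu_s) \leq 0$, and convexity of $s\mapsto D(\nu_s\|\mu_s)$ follows. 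The ``in particular'' statement is then automatic: $\eta_{\text{KL}}(\mu,s) = \sup_{\nu} D(\nu_s\|\mu_s)/D(\nu\|\mu)$ is a pointwise supremum over $\nu$ of positive rescalings of convex functions of $s$, hence itself convex. (As a sanity check, when $\mu$ is Lebesgue the second term vanishes and one recovers the familiar monotonicity of Fisher information along the heat flow.)

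The main obstacle is the integration-by-parts derivation of the boxed identity for $\frac{d}{ds} J(\nu_s\|\mu_s)$: the integrands involve second derivatives of $\log \eta_s$ and $\log \rho_s$, whose decay at infinity must be good enough to kill all boundary terms, and one must also swap $d/ds$ with the spatial integral at several points. These technicalities are exactly the content of Conjecture~\ref{lem_1824}, which is why the theorem is stated conditionally on it.
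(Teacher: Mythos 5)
Your proof is correct and follows essentially the same route as the paper: use the generalized de Bruijn identity to reduce convexity of $s\mapsto D(\nu_s\|\mu_s)$ to monotonicity of $J(\nu_s\|\mu_s)$, then deduce $\frac{d}{ds}J(\nu_s\|\mu_s)\leq 0$ from a Bochner-type formula and $\nabla^2\psi_s\succeq 0$ (log-concavity preserved by the heat flow). Your boxed formula for $\frac{d}{ds}J(\nu_s\|\mu_s)$ is exactly the KL specialization ($\kappa\equiv 0$) of Conjecture~\ref{lem_1824}, which is the same conditional ingredient the paper invokes; the paper merely reaches it by computing for general $\varphi$ via the $M$-function and dynamic $\Gamma$-calculus and then setting $\varphi(x)=x\log x$, whereas you rederive it directly from the PDE for $u_s=\log(d\nu_s/d\mu_s)$.
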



Theorem~\ref{thm:KLconvex} implies that $s\mapsto h(\nu_s)$ is concave, which is~\cite[Corollary 1]{costa1985new}. Thus, Theorem~\ref{thm:KLconvex} is stronger than Costa'a Corollary about concavity of $s\mapsto h(\nu_s)$. On the other hand, Theorem~\ref{thm:KLconvex} does not imply the concavity of $s\mapsto \exp \left( \frac{2}{n} h(\nu_s) \right)$, which is Costa's main result in~\cite{costa1985new}.

\medskip 
As in our analysis for $\eta_{\chi^2}(s,\mu)$, we see that Theorem~\ref{thm:etaKLbounds} implies that for log-concave $\mu$ the rate at which $s\mapsto \eta_{\text{KL}}(\mu,s)$ decreases with $s$ is largely determined by $\CLS(\mu)$. In particular, if $s^{\text{KL}}_{\mu}(\alpha)=\min\{s~:~\eta_{\text{KL}}(\mu,s)\leq \alpha\}$ is the required time  for $\eta_{\text{KL}}(\mu,s)$ to drop below $\alpha\in(0,1)$, then
\begin{align}
(1-\alpha)\cdot \CLS(\mu)\leq s^{\text{KL}}_{\mu}(\alpha)\leq \left(\frac{1-\alpha}{\alpha}\right)\CLS(\mu).
\end{align}
For $\alpha$ close to $1$, the upper bound and lower bound nearly coincide.
We refer to the quantity $s^{\text{KL}}_{\mu}(\alpha)$ for $\alpha = 1/2$ as the ``$KL$ half-blurring time'' of the measure $\mu$.
In the log-concave case, the $KL$ half-blurring time of $\mu$ has the order of magnitude of $\CLS(\mu)$, up to an explicit multiplicative universal constant.

\subsection{Main results on general divergences}

Fix a smooth, convex function $\varphi: (0, \infty) \to\RR$ with $\varphi(1)=0$ and set $\vphi(0) = \lim_{t \rightarrow 0^+} \varphi(t)$. 
In \cite{chafai}, Chafa\"i introduced 
the concepts of $\vphi$-Sobolev inequalities and $\vphi$-entropy, 
which turn out to fit nicely with our study of the evolution of 
the $\vphi$-divergence $D_{\vphi}(\nu_s \D \mu_s)$. 

\medskip The $\vphi$-Sobolev constant of 
a random vector $X$ in $\RR^n$, denoted by $C_{\vphi}(X)$, is the infimum over all $C \geq 0$ such that for any locally-Lipschitz function $f: \RR^n \rightarrow (0,\infty)$ with $\EE \vphi(X) = 1$,
\begin{equation}  \EE \vphi(f(X)) \leq \frac{C}{2} \cdot \EE \vphi''(f(X)) |\nabla f(X)|^2. \label{eq_1617} \end{equation}
This generalized the definition of the Poincar\'e constant and the log-Sobolev constant. Indeed,
in the case where $\vphi(x) = x \log x$ is the $KL$-divergence, we have $C_{\vphi}(X) = \CLS(X)$, and a simple argument
reveals that when $\vphi(x) = (x-1)^2$ we have $C_{\vphi}(X) = \CP(X)$. When $X \sim \mu$ we set $C_\vphi(\mu) = C_\vphi(X)$. In the general case, our definition (\ref{eq_1617}) is slightly different from the one in Chafa\"i in that we impose the additional requirement that $\EE f(X) = 1$, which implies that $\vphi(\EE f(X)) = 0$.
In Chafa\"i the left-hand side of (\ref{eq_1617}) is replaced by $\EE \vphi(f(X)) - \vphi(\EE f(X))$. In the cases of interest just described, the two definitions coincide by homogeneity.

\medskip 
Given two probability distributions $\mu$ and $\nu$ on $\RR^n$,
with $\nu$ absolutely continuous with a smooth density with respect to $\mu$, we defined the $\vphi$-Fisher information via
$$ J_{\vphi}(\nu \D \mu) = \int_{\RR^n} \vphi''(f) |\nabla f|^2 d \mu, $$
where $f=d\nu/d\mu$.
Thus we have the defining inequality of the $\vphi$-Sobolev constant, 
\begin{equation} D_{\vphi} ( \nu \D \mu ) \leq \frac{C_{\vphi}(\mu)}{2} \cdot J_{\vphi}(\nu \D \mu ). \label{eq_1146} \end{equation}
Some of the properties of the Poincar\'e and log-Sobolev constants generalize to the more general context of the $\vphi$-Sobolev constant. 
The $\vphi$-Sobolev constant is $2$-homogeneous: For any fixed number $\lambda > 0$,
$$ C_{\vphi}(\lambda X) = \lambda^2 C_{\vphi}(X). $$
This follows from the definition (\ref{eq_1617}), by substituting $g(x) = f(\lambda x)$. 
The Poincar\'e constant is the smallest of all $\vphi$-Sobolev constants:
\begin{proposition} Assume that $\vphi''(1) > 0$. Then for any random vector $X$ for which $C_{\vphi}(X) < \infty$,
$$ C_P(X) \leq C_{\vphi}(X). $$
\label{prop_1744}
\end{proposition}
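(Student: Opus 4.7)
The plan is to test the $\vphi$-Sobolev inequality (\ref{eq_1146}) with a small perturbation of the constant function $f \equiv 1$ and to extract the Poincar\'e inequality by comparing the leading $\epsilon^2$ terms, in the spirit of the paper's own remark after (\ref{eq_1135}).

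Fix a bounded locally-Lipschitz function $g : \RR^n \to \RR$ with $\EE g(X) = 0$ and $\EE |\nabla g(X)|^2 < \infty$, set $M = \|g\|_\infty$, and for $0 < \epsilon < 1/M$ define $f_\epsilon = 1 + \epsilon g$. Then $f_\epsilon$ is positive and locally-Lipschitz with $\EE f_\epsilon(X) = 1$, so $f_\epsilon$ is admissible in the definition of $C_\vphi(X)$. Letting $\nu_\epsilon$ be the measure with density $f_\epsilon$ with respect to $\mu = \mathrm{law}(X)$, inequality (\ref{eq_1146}) reads
\begin{align}
D_\vphi(\nu_\epsilon \D \mu) \leq \frac{C_\vphi(\mu)}{2} \cdot J_\vphi(\nu_\epsilon \D \mu). \label{eq_plan_sob}
\end{align}
Since $\vphi$ is smooth near $1$ with $\vphi(1) = 0$, a second-order Taylor expansion at $1$ combined with dominated convergence (using that $1 + \epsilon g(X) \in [1-\epsilon M, 1+\epsilon M]$, on which $\vphi$, $\vphi'$ and $\vphi''$ are uniformly bounded) gives
\begin{align*}
D_\vphi(\nu_\epsilon \D \mu) = \EE \vphi(1 + \epsilon g(X)) = \epsilon \vphi'(1) \EE g(X) + \frac{\epsilon^2}{2} \vphi''(1) \EE g(X)^2 + o(\epsilon^2) = \frac{\epsilon^2}{2} \vphi''(1) \EE g(X)^2 + o(\epsilon^2),
\end{align*}
where we used $\EE g(X) = 0$. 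Similarly, using $\nabla f_\epsilon = \epsilon \nabla g$ and continuity of $\vphi''$ at $1$,
\begin{align*}
J_\vphi(\nu_\epsilon \D \mu) = \epsilon^2 \EE \vphi''(1 + \epsilon g(X)) |\nabla g(X)|^2 = \epsilon^2 \vphi''(1) \EE |\nabla g(X)|^2 + o(\epsilon^2).
\end{align*}

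Inserting both expansions into (\ref{eq_plan_sob}), dividing by $\epsilon^2 \vphi''(1)/2 > 0$ (which is exactly where the hypothesis $\vphi''(1) > 0$ is used), and letting $\epsilon \to 0^+$ yields $\EE g(X)^2 \leq C_\vphi(\mu) \cdot \EE |\nabla g(X)|^2$, which is the Poincar\'e inequality for the bounded centered test function $g$. To pass to an arbitrary locally-Lipschitz $g$ with $\EE |\nabla g(X)|^2 < \infty$, I would apply the previous step to the truncation $g_R = \max(-R, \min(R, g - \EE g(X)))$, which is bounded, centered, locally-Lipschitz, and satisfies $|\nabla g_R| \leq |\nabla g|$ pointwise, and then let $R \to \infty$ using monotone convergence on the left-hand side. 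This recovers (\ref{eq_2154}) with constant $C_\vphi(X)$, proving $C_P(X) \leq C_\vphi(X)$. The only mildly delicate step is the justification of the Taylor expansion uniformly in the expectation, but boundedness of $g$ combined with smoothness of $\vphi$ reduces this to a routine application of dominated convergence, so no serious obstacle arises.
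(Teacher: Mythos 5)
Your proof is correct and takes essentially the same route as the paper's: both perturb the constant function by $f_\eps = 1 + \eps g$, Taylor-expand $\vphi$ around $1$, insert into the $\vphi$-Sobolev inequality, divide by $\eps^2 \vphi''(1)$, let $\eps \to 0$, and finish with a truncation argument to extend from bounded to general locally-Lipschitz test functions. The only cosmetic difference is the truncation device (hard clipping vs.\ the paper's smooth cutoff $\theta(x/M)g(x)$), and a small slip where you call $g_R$ ``centered'' --- hard truncation does not preserve $\EE g_R = 0$ --- but this is harmless since the Poincar\'e inequality is invariant under additive constants.
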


Proposition \ref{prop_1744} must be known, yet for lack of concise reference we provide its proof below.  
The function $1 / \vphi''$ plays an important role in the sequel. When it is a concave function on $(0, \infty)$ -- this is equivalent to condition (H1) from Chafa\"i \cite{chafai} -- it is possible to compute 
the $\vphi$-Sobolev constant of the Gaussian measure. The following proposition follows from the results of Chafa\"i \cite{chafai} and Proposition \ref{prop_1744}. For convenience we provide a proof.

\begin{proposition} Let $\varphi:(0,\infty)\to\RR$ be a smooth, convex function with $\varphi(1)=0$ such that $1 / \vphi''$ is concave. Let $X$ be a standard
Gaussian random vector in $\RR^n$. Then,
$$ C_{\vphi}(X) = 1. $$
\label{prop_1626} 
\end{proposition}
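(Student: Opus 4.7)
The plan is to prove the two inequalities $C_\vphi(X)\geq 1$ and $C_\vphi(X)\leq 1$ separately. The lower bound is immediate: Proposition~\ref{prop_1744} gives $C_\vphi(X) \geq C_P(X) = 1$ since $X$ is standard Gaussian. For the upper bound, I would use the Ornstein--Uhlenbeck semigroup
$$P_t g(x) = \EE\bigl[g(e^{-t}x + \sqrt{1-e^{-2t}}\,Z)\bigr], \qquad t\geq 0,$$
whose invariant measure is $\gamma = \m{N}(0,\id)$ and whose generator is $Lg = \Delta g - x\cdot\nabla g$. The strategy is to represent $\EE\vphi(f(X))$ as a time integral along the semigroup and then extract an exponential decay factor using the OU gradient commutation relation together with the concavity of $1/\vphi''$.

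Fix a smooth $f>0$ with $\EE f(X)=1$. Since $P_t f\to 1$ as $t\to\infty$ and $\vphi(1)=0$, the standard integration by parts identity $\int\vphi'(g)\,Lg\,d\gamma = -\int\vphi''(g)|\nabla g|^2\,d\gamma$ yields the semigroup representation
$$\EE\vphi(f(X)) = -\int_0^{\infty}\frac{d}{dt}\EE\vphi(P_tf(X))\,dt = \int_0^{\infty}\EE\bigl[\vphi''(P_tf)\,|\nabla P_tf|^2\bigr]\,dt.$$
The proof thus reduces to the time-decay estimate
$$\EE\bigl[\vphi''(P_tf)\,|\nabla P_tf|^2\bigr] \leq e^{-2t}\,\EE\bigl[\vphi''(f)\,|\nabla f|^2\bigr], \qquad t\geq 0,$$
since integrating in $t$ gives $\EE\vphi(f(X))\leq\tfrac{1}{2}\EE[\vphi''(f)|\nabla f|^2]$, i.e.\ $C_\vphi(X)\leq 1$. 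Writing $\psi = 1/\vphi''$ (concave and positive by hypothesis), the OU gradient commutation $\nabla P_tf = e^{-t}P_t\nabla f$ supplies the prefactor $e^{-2t}$, so what remains is the $t$-independent bound $\int (|P_t\nabla f|^2/\psi(P_tf))\,d\gamma \leq \int(|\nabla f|^2/\psi(f))\,d\gamma$. Two ingredients do the job: Jensen's inequality applied to the concave $\psi$ gives $\psi(P_tf)\geq P_t\psi(f)$; and the Cauchy--Schwarz inequality for the Markov kernel $P_t$, applied coordinatewise to the components of $\nabla f$ with weight $\psi(f)$, gives $|P_t\nabla f|^2 \leq P_t\psi(f)\cdot P_t(|\nabla f|^2/\psi(f))$. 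Combining them yields the pointwise-in-$x$ inequality $|P_t\nabla f|^2/\psi(P_tf) \leq P_t(|\nabla f|^2/\psi(f))$, and integrating against the $P_t$-invariant measure $\gamma$ closes the argument.

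The main obstacle is the (standard but technical) analytic justification of the semigroup manipulations: interchanging $d/dt$ with $\EE$, the integrations by parts (controlling boundary terms at infinity), verifying $\lim_{t\to\infty}\EE\vphi(P_tf)=0$, and the requisite integrability of $f$, $\nabla f$, and $\vphi''(f)|\nabla f|^2$ against $\gamma$. For smooth, bounded test functions these are routine, and the general locally-Lipschitz case is recovered by a truncation and density argument.
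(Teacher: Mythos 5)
Your proof is correct. Structurally it mirrors the paper's argument: both express $\EE\vphi(f(X))$ as a time integral of $\EE[\vphi''(\text{flow})|\nabla(\text{flow})|^2]$ along a Gaussian-compatible averaging semigroup, extract a decay factor from gradient commutation, and close using Jensen's inequality via the hypothesis that $1/\vphi''$ is concave. The differences are cosmetic but worth noting. You work with the Ornstein--Uhlenbeck semigroup $P_t$ on the stationary Gaussian measure, so the commutation relation is $\nabla P_t f = e^{-t} P_t \nabla f$ and the integral $\int_0^\infty e^{-2t}\,dt = \tfrac12$ produces the constant directly; the paper instead keeps its global operator $Q_s f = \EE[f(X) \mid X+\sqrt{s}Z]$, for which the Mehler formula gives $\nabla Q_s f = \tfrac{1}{s+1} Q_s \nabla f$, and the constant comes from $\int_0^\infty (s+1)^{-2}\,ds = 1$ with a factor $\tfrac12$ from the de Bruijn derivative. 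These two semigroups are related by a time change and a dilation, so the two representations are equivalent; the paper's choice has the virtue of being the same operator used throughout their analysis, while yours is the more classical one for the Gaussian target. For the Jensen step, you split it into two pieces — $\psi(P_t f) \geq P_t\psi(f)$ by concavity of $\psi = 1/\vphi''$, and $|P_t\nabla f|^2 \leq P_t\psi(f)\cdot P_t(|\nabla f|^2/\psi(f))$ by Cauchy--Schwarz for the Markov kernel applied coordinatewise — whereas the paper applies Jensen once to the jointly convex two-variable function $(x,y)\mapsto\vphi''(x)|y|^2$, whose joint convexity they had already established via a Hessian computation in their subadditivity proof. Your decomposition is a more elementary route to the same pointwise bound and does not require first verifying joint convexity. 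Both proofs leave the same analytic justifications (differentiation under the integral, integration by parts, limit at $t\to\infty$) to standard but unverified regularity arguments.
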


Proposition \ref{thm:deBruijnGeneralized}, the de Bruijn identity, may be generalized even further:

\begin{proposition} Let $\vphi(x)=\vphi_\lambda(x)=x^\lambda -1$, for $\lambda>1$. Let $\mu$ and $\nu$ be probability distributions on $\RR^n$ with $D_{\vphi}(\nu \D \mu) < \infty$, and assume $\EE_\mu|X|^4<\infty$. Then for any $s > 0$,
\begin{align}
\frac{d}{ds} D_{\vphi}( \nu_s \D \mu_s) = -\frac{1}{2} J_{\vphi}( \nu_s \D \mu_s).
\end{align}
\label{thm:deBruijnGeneralized2}
\end{proposition}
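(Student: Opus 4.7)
The plan is to follow the scheme of Proposition~\ref{thm:deBruijnGeneralized}: carry out the formal differentiation-and-integration-by-parts that works for any smooth convex $\varphi$, and then specialize to $\varphi_\lambda$ only in the analytic justification step.

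First, I would introduce the smooth densities $p_s$ and $q_s$ of $\mu_s$ and $\nu_s$ and the ratio $f_s=q_s/p_s$; the heat equations read $\partial_s p_s=\tfrac12\Delta p_s$ and $\partial_s q_s=\tfrac12\Delta q_s$. Writing $D_\varphi(\nu_s\D\mu_s)=\int \varphi(f_s) p_s\,dx$, formal differentiation under the integral gives
\begin{align*}
\partial_s\!\bigl[\varphi(f_s)p_s\bigr] = \varphi'(f_s)\partial_s q_s + \bigl[\varphi(f_s)-f_s\varphi'(f_s)\bigr]\partial_s p_s.
\end{align*}
Substituting the heat equations, integrating by parts twice, and using $\nabla q_s=p_s\nabla f_s+f_s\nabla p_s$, a direct manipulation collapses the resulting four terms and produces
\begin{align*}
\frac{d}{ds}D_\varphi(\nu_s\D\mu_s) = -\frac{1}{2}\int_{\RR^n} \varphi''(f_s)|\nabla f_s|^2 p_s\,dx = -\frac{1}{2}J_\varphi(\nu_s\D\mu_s).
\end{align*}
This identity is formal for any smooth convex $\varphi$ with $\varphi(1)=0$; the specific choice $\varphi_\lambda$ enters only through the need to justify the analytic steps.

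For $\varphi=\varphi_\lambda$ one has $\varphi'(f)=\lambda f^{\lambda-1}$, $\varphi''(f)=\lambda(\lambda-1)f^{\lambda-2}$ and $\varphi(f)-f\varphi'(f)=-(\lambda-1)f^\lambda-1$, so every integrand appearing above grows at most polynomially in $f_s$. For $s>0$, both $p_s$ and $q_s$ are $C^\infty$, strictly positive, and inherit Gaussian-type upper bounds from the convolution with the heat kernel; moreover, $D_\varphi(\nu\D\mu)<\infty$ controls $\|f_s\|_{L^\lambda(\mu_s)}$ uniformly in $s$ via the data-processing inequality, and the assumption $\EE_\mu|X|^4<\infty$ gives tail control on $p_s$ and its derivatives. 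Together, these ingredients should be enough to (i) exchange $\partial_s$ with $\int_{\RR^n}$ locally uniformly in $s\in(0,\infty)$, and (ii) make the boundary terms in the integrations by parts vanish (the integrands $\varphi'(f_s)\nabla q_s$ and $f_s\varphi'(f_s)\nabla p_s$ decay at infinity, and $\int\Delta p_s\,dx=0$ takes care of the stray constant in $\varphi(f_s)-f_s\varphi'(f_s)$).

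The main obstacle, as the authors themselves emphasize in the paragraph preceding Proposition~\ref{thm:etaKLbounds}, is precisely the rigorous execution of (i) and (ii): this is not a conceptual difficulty but a technical one, requiring matched polynomial-versus-Gaussian tail estimates on $p_s$, $q_s$, and their gradients. I would follow Barron's strategy~\cite{barron1984monotonic} for the KL case, the additional ingredient being that the polynomial growth of $\varphi_\lambda(f)$ and $\varphi_\lambda''(f)f^2$ is absorbed by the finiteness of the $L^\lambda(\mu_s)$-norm of $f_s$, which is non-increasing in $s$ by the data-processing inequality.
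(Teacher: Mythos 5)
Your formal calculation is correct and matches the paper's Lemma~\ref{lem_620} up to notation: you work directly with the densities $p_s,q_s$ and the scalar heat equations $\partial_s p_s=\tfrac12\Delta p_s$, $\partial_s q_s=\tfrac12\Delta q_s$, whereas the paper packages the same computation through the Klartag--Putterman semigroup identity $\tfrac{d}{ds}Q_s f=\Box_s Q_s f$ with $\Box_s=L_s-\tfrac{\Delta}{2}$. Your relation $\partial_s f_s=(\partial_s q_s-f_s\,\partial_s p_s)/p_s$ is exactly $\partial_s f_s = \Box_s f_s$ unwound, and both routes integrate by parts once (you do it as two applications against $\Delta q_s$ and $\Delta p_s$; the paper condenses it via the integration-by-parts formula for $L_s$) to land on $-\tfrac12 J_\varphi(\nu_s\D\mu_s)$. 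So the structural approach is the same, and your version is if anything a touch more elementary because it avoids introducing $L_s$ and the $\Gamma$-calculus vocabulary.

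The genuine gap is in your regularity sketch. You claim the polynomial growth of $\varphi_\lambda(f)$ and $\varphi_\lambda''(f)f^2$ is ``absorbed by the finiteness of the $L^\lambda(\mu_s)$-norm of $f_s$,'' which is non-increasing by data processing. That is not enough. Look at what the paper actually needs: to kill the boundary terms in the first integration by parts one bounds $\int|\varphi(f_s)\nabla\rho_s|$ by Cauchy--Schwarz against $\sqrt{D_{\varphi^2}(\nu_s\D\mu_s)}\cdot\sqrt{n\,\mathcal J(\mu_s)}$, so you need $\EE\,\varphi^2(f_s)<\infty$, i.e.\ $f_s\in L^{2\lambda}(\mu_s)$; and the dominated-convergence argument for differentiating under the integral sign needs $D_{\varphi_{4\lambda}}(\nu_s\D\mu_s)<\infty$, i.e.\ $f_s\in L^{4\lambda}(\mu_s)$. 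That these higher moments are finite is \emph{not} a consequence of data processing alone: $\int f^\lambda\,d\mu<\infty$ does not imply $\int f^{2\lambda}\,d\mu<\infty$ for generic $\mu,\nu$. The upgrade only holds after Gaussian smoothing ($s>0$), and the paper proves it in Proposition~\ref{prop:ReniyFinite} by invoking Anantharam's variational characterization of R\'enyi divergence. Without identifying that the $L^\lambda\to L^{4\lambda}$ bootstrap is required -- and without a mechanism for it, since the data-processing inequality you cite gives only the non-increase of the $L^\lambda$ norm -- your ``these ingredients should be enough'' step would stall exactly where the paper's appendix does its real work.
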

This proposition formally implies Proposition~\ref{thm:deBruijnGeneralized} since $D_{\text{KL}}(\nu\|\mu)=\lim_{\lambda\to 1}\frac{1}{\lambda-1}\log(1+D_{\vphi_\lambda}( \nu \D \mu))$. While we have only proved Proposition~\ref{thm:deBruijnGeneralized2} for $\vphi(x)=\vphi_\lambda(x)$, $\lambda>1$, we believe it should be valid for any   smooth, convex function $\varphi:(0,\infty)\to\RR$ with $\varphi(1)=0$. We have only used the assumption that $\vphi(x)=\vphi_\lambda(x)$, $\lambda>1$, for justifying integration under the integral sign, and for justifying integration by parts.
The following is a generalization of Theorem \ref{thm:etaKLbounds}.

\begin{theorem} Let $\vphi(x)=\vphi_\lambda(x)=x^\lambda -1$, for $1<\lambda\leq 2$, such that $1 / \vphi''$ is concave. Then,
for any probability measure $\mu$  and $s>0$,
\begin{align}
1-\frac{\mmse(\mu,s)}{\EE_\mu |X-\EE X |^2}\leq \eta_{\vphi}(\mu,s) \leq \frac{1}{1+\frac{s}{C_{\vphi}(\mu)}}.
\label{eq_1120_}
\end{align}
If additionally $\mu$ is log-concave, then, assuming the validity of Conjecture~\ref{lem_1824}, we also have that
\begin{align}
1-\frac{s}{C_{\vphi}(\mu)}\leq\eta_{\vphi}(\mu,s). \label{eq_1603}
\end{align}
\label{thm_1601}
\end{theorem}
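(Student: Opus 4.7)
The left-hand inequality in \eqref{eq_1120_} is immediate from the corresponding bound for $\chi^2$ in \eqref{eq_2110} of Theorem~\ref{thm:etachi2UB}, combined with the general comparison \eqref{eq_1135}. Indeed, $\vphi_\lambda''(1)=\lambda(\lambda-1)>0$ for $\lambda\in(1,2]$, so \eqref{eq_1135} yields $\eta_{\chi^2}(\mu,s)\leq \eta_{\vphi}(\mu,s)$, while the first inequality of \eqref{eq_2110} gives $1-\mmse(\mu,s)/\EE_\mu|X-\EE X|^2\leq\eta_{\chi^2}(\mu,s)$.

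For the right-hand inequality in \eqref{eq_1120_}, the plan is to mimic the ODE argument used for the KL case in Theorem~\ref{thm:etaKLbounds}, based on three ingredients: the generalized de Bruijn identity of Proposition~\ref{thm:deBruijnGeneralized2}, the $\vphi$-Sobolev inequality \eqref{eq_1146} applied to the pair $(\nu_s,\mu_s)$, and the convolution subadditivity $C_\vphi(\mu_s)\leq C_\vphi(\mu)+s$. The last ingredient, valid when $1/\vphi''$ is concave, follows from subadditivity of $C_\vphi$ under independent sums together with $C_\vphi(\sqrt{s}\,Z)=s$, which itself follows from Proposition~\ref{prop_1626} and the $2$-homogeneity of $C_\vphi$. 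Setting $g(s)=D_\vphi(\nu_s\D\mu_s)$ and combining the three ingredients produces
$$
g'(s) \;=\; -\tfrac{1}{2}\,J_\vphi(\nu_s\D\mu_s) \;\leq\; -\frac{g(s)}{C_\vphi(\mu_s)} \;\leq\; -\frac{g(s)}{C_\vphi(\mu)+s},
$$
hence $(\log g)'(s)\leq -1/(C_\vphi(\mu)+s)$; integrating over $[0,s]$ and taking the supremum over $\nu$ yields the desired upper bound.

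For \eqref{eq_1603}, the plan is to first establish, as an extension of Theorem~\ref{thm:KLconvex} from $\vphi_{\mathrm{KL}}$ to $\vphi_\lambda$, that when $\mu$ is log-concave the map $s\mapsto D_\vphi(\nu_s\D\mu_s)$ is convex (invoking Conjecture~\ref{lem_1824} in the same way as in the KL case). Convexity of $g$ combined with Proposition~\ref{thm:deBruijnGeneralized2} at $s=0$ gives
$$
g(s) \;\geq\; g(0)+s\,g'(0) \;=\; D_\vphi(\nu\D\mu)-\tfrac{s}{2}\,J_\vphi(\nu\D\mu).
$$
Dividing by $D_\vphi(\nu\D\mu)$ and taking the supremum over $\nu$ with $0<D_\vphi(\nu\D\mu)<\infty$,
$$
\eta_\vphi(\mu,s) \;\geq\; 1 - \frac{s}{2}\inf_{\nu}\frac{J_\vphi(\nu\D\mu)}{D_\vphi(\nu\D\mu)} \;=\; 1-\frac{s}{C_\vphi(\mu)},
$$
where the last equality expresses $C_\vphi(\mu)$ as the smallest admissible constant in \eqref{eq_1146}.

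The main obstacle is the extension of Theorem~\ref{thm:KLconvex} to $\vphi=\vphi_\lambda$, i.e.\ verifying convexity of $s\mapsto D_\vphi(\nu_s\D\mu_s)$ for log-concave $\mu$. The expected proof parallels the KL argument: differentiate $g$ twice in $s$ and, after several integrations by parts, rewrite the outcome as the integral of a nonnegative quadratic form combining squared derivatives of $f_s=d\nu_s/d\mu_s$ with the Hessian of $-\log d\mu$ (positive semidefinite by log-concavity), appealing to Conjecture~\ref{lem_1824} to justify the interchanges of differentiation and integration and the boundary-vanishing integrations by parts. A secondary but nontrivial input is the convolution subadditivity of $C_\vphi$ used in the upper bound; when $1/\vphi''$ is concave this should follow by a conditional-decomposition argument analogous to the Poincar\'e case, provided a suitable conditional decomposition of $\vphi$-entropy is available.
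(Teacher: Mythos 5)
Your proposal is correct and follows essentially the same route as the paper: the left-hand inequality in \eqref{eq_1120_} from \eqref{eq_1135} and \eqref{eq_2110}, the right-hand inequality by combining Proposition~\ref{thm:deBruijnGeneralized2}, the defining inequality \eqref{eq_1146} applied to $(\nu_s,\mu_s)$, and Chafa\"i subadditivity (Theorem~\ref{thm_1315}) plus $C_\vphi(\sqrt{s}Z)=s$ from Proposition~\ref{prop_1626} and $2$-homogeneity, then a Gr\"onwall-type integration; and \eqref{eq_1603} by using Conjecture~\ref{lem_1824} to show $s\mapsto J_\vphi(\nu_s\D\mu_s)$ is nonincreasing (the paper's proof of Theorem~\ref{thm:phi_convex}), hence $s\mapsto D_\vphi(\nu_s\D\mu_s)$ is convex, followed by the tangent-line bound at $s=0$. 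Your phrasing of the last step via $\inf_\nu J_\vphi(\nu\D\mu)/D_\vphi(\nu\D\mu)=2/C_\vphi(\mu)$ is equivalent to the paper's choice of an $\eps$-near-extremizer $f$, and your bookkeeping of the factor $\tfrac12$ from de Bruijn against the $\tfrac12$ in \eqref{eq_1146} is in fact more careful than the paper's own writeup.
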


The left-hand side inequality in (\ref{eq_1120_}) immediately follows from 
(\ref{eq_1135}) and (\ref{eq_2110}). 
We proceed with a generalization of Theorem \ref{thm:KLconvex}.

\begin{theorem} Let $\vphi(x)=\vphi_\lambda(x)=x^\lambda -1$, for $1<\lambda\leq 2$, such that $1 / \vphi''$ is concave, such that $1 / \vphi''$ is concave. 
Then, assuming the validity of Conjecture~\ref{lem_1824}, for any probability measure $\nu$ and any log-concave probability measure $\mu$, the mapping $s\mapsto D_{\vphi}(\nu_s\|\mu_s)$ is convex. 
In particular, whenever $\mu$ is log-concave, $s\mapsto \eta_{\vphi}(\mu,s)$ is convex.
\label{thm:phi_convex}
\end{theorem}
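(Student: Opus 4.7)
The plan is to reduce the convexity of $s\mapsto D_\vphi(\nu_s\D\mu_s)$ to the monotone decrease of the relative $\vphi$-Fisher information, and then to establish that monotonicity via a Bakry--\'Emery style $\Gamma_2$ computation that uses both the log-concavity of $\mu_s$ and Chafa\"i's structural hypothesis that $1/\vphi''$ is concave.

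First, by Proposition~\ref{thm:deBruijnGeneralized2} we have $\frac{d}{ds}D_\vphi(\nu_s\D\mu_s)=-\frac{1}{2}J_\vphi(\nu_s\D\mu_s)$, so the convexity we seek is equivalent to
\begin{equation*}
\frac{d}{ds} J_\vphi(\nu_s\D\mu_s)\leq 0.
\end{equation*}
Log-concavity of $\mu$ is preserved by the heat semigroup (Pr\'ekopa--Leindler), so writing $p_s$ for the density of $\mu_s$ we have $-\nabla^2\log p_s\succeq 0$ pointwise for every $s\geq 0$. Setting $f_s=d\nu_s/d\mu_s$ and $q_s=f_sp_s$, and using $\partial_s p_s=\frac{1}{2}\Delta p_s$ and $\partial_s q_s=\frac{1}{2}\Delta q_s$, a direct calculation yields
\begin{equation*}
\partial_s f_s=\tfrac{1}{2}\Delta f_s+\nabla\log p_s\cdot\nabla f_s=:L_s f_s,
\end{equation*}
so $L_s$ is the generator of the Langevin diffusion whose invariant measure is $\mu_s$, satisfying $\int (L_s g)\,h\,d\mu_s=-\int\nabla g\cdot\nabla h\,d\mu_s$.

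Next I would differentiate $J_\vphi(\nu_s\D\mu_s)=\int\vphi''(f_s)|\nabla f_s|^2\,d\mu_s$ under the integral sign (justified by Conjecture~\ref{lem_1824}), substitute $\partial_s f_s=L_s f_s$ and $\partial_s p_s=\frac{1}{2}\Delta p_s$, and repeatedly integrate by parts (again via Conjecture~\ref{lem_1824}) to move all derivatives onto $f_s$. Following the $\vphi$-$\Gamma_2$ calculus of Chafa\"i~\cite{chafai}, the output should organize into
\begin{equation*}
-\frac{d}{ds} J_\vphi(\nu_s\D\mu_s)=\int\bigl(\mathrm{A}+\mathrm{B}+\mathrm{C}\bigr)\,d\mu_s,
\end{equation*}
where (A) is a pure Hessian-square term proportional to $\vphi''(f_s)\|\nabla^2 f_s\|_{HS}^2$, (B) is the curvature contribution $\vphi''(f_s)\,(-\nabla^2\log p_s)(\nabla f_s,\nabla f_s)$ which is non-negative by log-concavity of $\mu_s$, and (C) is a $\vphi$-correction term built out of $\vphi'''(f_s)$ and $\vphi''''(f_s)$ times polynomials in $\nabla f_s$ and $\nabla^2 f_s$.

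The crux is that after absorbing a suitable portion of (A), the mixed term (C) becomes a non-negative quadratic form precisely under the algebraic inequality $2(\vphi''')^2\leq\vphi''\,\vphi''''$; this inequality is equivalent to $(1/\vphi'')''\leq 0$, i.e., to Chafa\"i's assumption. Once all three pieces are certified non-negative, we obtain $\frac{d}{ds}J_\vphi(\nu_s\D\mu_s)\leq 0$, and hence $s\mapsto D_\vphi(\nu_s\D\mu_s)$ is convex. The convexity of $s\mapsto\eta_\vphi(\mu,s)$ then follows immediately, since for each fixed $\nu$ the ratio $D_\vphi(\nu_s\D\mu_s)/D_\vphi(\nu\D\mu)$ is convex in $s$, and a supremum of convex functions is convex. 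The main obstacle in carrying this out is the careful bookkeeping of the integration-by-parts expansion of $\frac{d}{ds}J_\vphi$ so that one sees the completion-of-the-square that isolates $(1/\vphi'')''\leq 0$ as exactly the structural condition needed; the vanishing-at-infinity issues that arise in each integration by parts are precisely what Conjecture~\ref{lem_1824} is designed to handle.
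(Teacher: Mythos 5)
Your proof is correct and takes essentially the same route as the paper: reduce convexity of $s\mapsto D_\vphi(\nu_s\|\mu_s)$ to monotone decrease of $J_\vphi(\nu_s\|\mu_s)$ via the generalized de Bruijn identity, decompose $-\frac{d}{ds}J_\vphi$ through the Bochner-type formula of Conjecture~\ref{lem_1824} into a curvature term (non-negative by log-concavity of $\mu_s$), a completed square, and a residual $|\nabla f_s|^4$ term whose coefficient is non-negative precisely when $(1/\vphi'')''\leq 0$, and finish the $\eta_\vphi$ claim by observing that a supremum of convex functions is convex. One small caveat worth flagging: the evolution operator you denote $L_s=\tfrac12\Delta+\nabla\log p_s\cdot\nabla$ (so that $\partial_s f_s=L_sf_s$) is what the paper calls $\Box_s=L_{\mu_s}-\tfrac12\Delta$, and it is \emph{not} symmetric with respect to $\mu_s$; the integration-by-parts identity $\int(L_sg)\,h\,d\mu_s=-\int\nabla g\cdot\nabla h\,d\mu_s$ holds for the actual Langevin generator $L_{\mu_s}=\Delta+\nabla\log p_s\cdot\nabla$, not for your $L_s$. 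This mislabeling does not affect the validity of your argument because you defer the Bochner computation to Conjecture~\ref{lem_1824}, but it would lead you astray if you tried to carry out the integrations by parts directly with the stated symmetry.
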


For the geometric meaning of the $\vphi$-Sobolev constant, and for the relations to concentration of measure, we refer the reader to \cite[Section 3.4]{chafai}.

\subsection{Related Work}

Contraction coefficients for the Gaussian channel were studied in~\cite{PW15} (see also~\cite{cpw17}), and it was shown that there exist probability distributions $\mu$ with bounded second moment $\EE_\mu|X-\EE X|^2$ for which $\eta_{\chi^2}(\mu,s)=\eta_{\text{KL}}(\mu,s)=1$. Our results show that while a bounded second moment does not imply non-trivial $\eta_{\chi^2}(\mu,s)$ or $\eta_{\text{KL}}(\mu,s)$, bounded Poincar\`e constant $\CP(\mu)$ or log-Sobolev constant $\CLS(\mu)$, respectively, do imply non-trivial contraction coefficient. Moreover, in the log-concave 
case the Poincar\`e constant or log-Sobolev constant are essentially equivalent to the corresponding contraction coefficient.


\medskip For discrete channels, Raginsky~\cite{raginsky16} has estimated the SDPI constants $\eta_{\chi^2}(\mu,K_{Y|X})$ and $\eta_{\text{KL}}(\mu,K_{Y|X})$, as well as for other choices of $\vphi$, of a source $\mu$ and a general discrete channel $K_{Y|X}$ as a function of 
a Poincar\`e constant $\CP(\mu,K_{Y|X})$ or a log-Sobolev constant $\CLS(\mu,K_{Y|X})$, respectively, corresponding to \emph{both} the source $\mu$ and \emph{the channel $K_{Y|X}$}. See~\cite{raginsky16} 
for the precise definitions of those constants and the relations to the contraction coefficients. We stress that the  Poincar\`e constant $\CP(\mu)$ and log-Sobolev constant $\CLS(\mu)$ used in our bounds depend only on the source $\mu$. Furthermore, our bounds only address the Gaussian channels, while the estimates in~\cite{raginsky16} hold for any discrete channel (though this class does not include the Gaussian channel).

\medskip 
The general problem of developing lower bounds, amenable to evaluation, on the MMSE in estimating a random variable $X$ from a dependent random variable $Y$ is a classic topic in information theory, signal processing, and statistics, and so is the special case of estimation in Gaussian noise. Classic references include~\cite{van2004detection,trees2007bayesian,weinstein1988general,ziv1969some}. See also~\cite[Chapter 30]{PW_book}. Our lower bound from Corollary~\ref{cor:mmseLB} varies smoothly with the noise level $s$, similarly to the Bayesian CRLB, and therefore, cannot capture threshold phenomena/phase transitions of the MMSE in $s$. However, to the best of our knowledge, no previous bounds have estimated $\mmse(\mu,s)$ in terms of the Poincar\`e constant $\CP(\mu)$. A recent work by Zieder, Dytso and Cardone~\cite{zieder2022mmse} develops a lower bound on the MMSE for a class of additive channels from $X$ to $Y$, including the Gaussian one. Their bound is expressed in terms of the random variable $\kappa(X)$, where $\kappa(x)=(\CP(W_{Y|X=x}))^{-1}$ is the inverse of the Poincar\`e constant corresponding to the conditional distributions of the channel's output given the input $x$, as well as the variance of the conditional information density between $X$ and $Y$. Our bound from Corollary~\ref{cor:mmseLB} (which holds only for the Gaussian channel), on the other hand, depends \emph{only} on Poincar\`e constant of the \emph{source} $\CP(\mu)$.

\medskip 
The function $s\mapsto h(X+\sqrt{s}Z)$ for isotropic Gaussian $Z$ statistically independent of $X\sim\mu$, has been studied intensively, from the first days of information theory. de Bruijn's identity~\cite{stam1959some,costa1985new} (see also~\cite[Theorem 17.7.2]{Cover}) shows that the derivative of this function with respect to $s$ is $\frac{1}{2}\m{J}(\mu_s)$. 
As explained here, our Proposition~\ref{thm:deBruijnGeneralized} is a stronger and more general identity. de Bruijn's identity was instrumental for proving the entropy power inequality~\cite{stam1959some,blachman65}. Decades later, Costa used de Bruijn's identity for showing that $s\mapsto e^{\frac{2}{n} h(X+\sqrt{s}Z)}$ is concave, which immediately also implies the concavity of $s\mapsto h(X+\sqrt{s}Z)$. Several alternative proofs for Costa's EPI were given in~\cite{dembo1989simple,villani2000short,guo2006proof,rioul2010information,courtade2017strong}.
An important point of view on the evolution of curvature under the heat flow, involving curvature and dimension was developed by Bakry, \'Emery and the Toulouse school, see the book \cite{BGL}.

\medskip 
In Theorem~\ref{thm:KLconvex} we establish the convexity of $s\mapsto D(\nu_s\|\mu_s)$ for log-concave $\mu$, which generalizes the latter, and weaker, statement of Costa. Unfortunately, we were not able to prove convexity of $s\mapsto \log D(\nu_s\|\mu_s)$. Further improvements of the entropy power inequality for the random variable $X+\sqrt{s}Z$ were established by Courtade~\cite{courtade2017strong}. The relation between $I(\mu,s)$ and $\mmse(\mu,s)$, called the I-MMSE relation, which is intimately related to de Bruijn's identity, was discovered by Guo, Shamai and Verd\'u~\cite{guo2005mutual}.
To the best of our knowledge, our Corollary~\ref{cor:mi} is the first lower bound on $I(\mu,s)$ in terms of $\CP(\mu)$. An upper bound in a somewhat similar spirit was derived in~\cite{aras2019family}. Inequalities interpolating between the Poincar\'e and the log-Sobolev inequality appear in Lata\l{}a and Oleszkiewicz \cite{LO}. This is a family of inequalities parameterized by a parameter $1 \leq p \leq 2$.

\medskip In the remainder of this paper, we prove the theorems stated above.

\section{Minimum-squared error and mutual information}

\subsection{Proof of the lower bound in Theorem~\ref{thm:etachi2UB}}

We show that for any pair of random variables $X,Y$ in $\RR^n\times \m{A}$, where $\m{A}$ is some abstract alphabet, it holds that
\begin{align}
S^2(X,Y)\geq 1-\frac{\mmse(X|Y)}{\EE|X-\EE X|^2},
\end{align}
where
\begin{align}
\mmse(X|Y)=\EE|X-\EE[X|Y]|^2.
\end{align}
The idea is to use linear test functions in 
the definition (\ref{eq:HGR_CS}) of the maximal correlation $S(X,Y)$. 
For the case $n=1$, the lower bound was already observed by Re\`nyi~\cite{renyi1959measures}, see also~\cite[eq. 17]{asoodeh2018estimation}. Here we give a proof for the general case.
Let
\begin{align}
i^*=\argmax_{i\in [n]}\frac{\Var(\EE[X_i|Y])}{\Var(X_i)},
\end{align}
and set $\tilde{f}:\RR^n\to \RR$ as 
$$
\tilde{f}(X)=\frac{X_{i^*}-\EE X_{i^*}}{\sqrt{\Var(X_{i^*})}},
$$
such that $\EE\tilde{f}(X)=0$ and $\Var(\tilde{f}(X))=1$.
Since for non-negative numbers $a_1,\ldots,a_n,b_1,\ldots,b_n$ it holds that $\frac{\sum_{i=1}^n a_i}{\sum_{i=1}^n b_i}\leq \max_{i\in[n]}\frac{a_i}{b_i}$, we have that
\begin{align}
\Var(\EE[\tilde{f}(X)|Y])&=\frac{\Var(\EE[X_{i^*}|Y])}{\Var(X_{i^*})}\nonumber\\
&\geq\frac{\sum_{i=1}^n\Var(\EE[X_{i}|Y]}{\sum_{i=1}^n\Var(X_{i})} \\
&=\frac{\sum_{i=1}^n\Var(X_i)-\EE(\Var[X_{i}|Y)]}{\sum_{i=1}^n\Var(X_{i})} \\
&=1-\frac{\mmse(X|Y)}{\EE_\mu |X-\EE[X]|^2}.
\end{align}
The statement follows by invoking~\eqref{eq:HGR_CS}
\begin{align}
S^2(X;Y)=\sup_f \Var(\EE[f(X)|Y])\geq \Var(\EE[\tilde{f}(X)|Y]).
\end{align}

\subsection{Proof of Corollary~\ref{cor:mi}}

Let 
\begin{align}
\tilde{I}(\mu,\rho)&=I(X;\sqrt{\rho}X+Z),\label{eq:tildemi}\\
\widetilde{\mmse}(\mu,\rho)&=\EE|X-\EE[X|\sqrt{\rho}X+Z]|^2.\label{eq:tildemmse}
\end{align}
Clearly, $I(\mu,s)=\tilde{I}(\mu,\frac{1}{s})$ and $\mmse(\mu,s)=\widetilde{\mmse}(\mu,\frac{1}{s})$. By Corollary~\ref{cor:mmseLB}, we therefore have that
\begin{align}
\widetilde{\mmse}(\mu,\rho)= \mmse\left(\mu,\frac{1}{\rho}\right)\geq \frac{nP}{1+\rho \nCP(\mu) P}.
\label{eq:reparMMSElb}
\end{align}
By the I-MMSE identity~\cite[Theorem 2]{guo2005mutual},
\begin{align}
\frac{d}{d\rho} \tilde{I}(\mu,\rho)=\frac{1}{2}\widetilde{\mmse}(\mu,\rho).
\label{eq:vectorIMMSE}
\end{align}
Thus,
\begin{align}
\tilde{I}(\mu,\rho)=\frac{1}{2}\int_{0}^{\rho}\widetilde{\mmse}(\mu,t)dt\geq \frac{1}{2}\int_{0}^{\rho} \frac{nP}{1+t \nCP(\mu) P}dt=\frac{n}{2}\frac{1}{\nCP(\mu)}\log\left(1+\nCP(\mu) P\cdot\rho\right).
\end{align}
where in the last equality we have used the identity $\int_{0}^\rho \frac{a}{1+bt}dt=\frac{a}{b}\log(1+b\rho)$ for $a,b>0$. The claimed result follows by recalling that $I(\mu,s)=\tilde{I}(\mu,\frac{1}{s})$.

\section{$\chi^2$-divergence and proofs for blurring time bounds}
\label{sec2}

In this section we complete the proof of 
Theorem \ref{thm:etachi2UB}. Let $X$ be a random vector attaining values in $\RR^n$.
When proving Theorem \ref{thm:etachi2UB} we may assume that $C_P(X) < \infty$ as otherwise 
the conclusion is vacuous. Recall from above that a log-concave random vector has a finite Poincar\'e constant, and that 
for any additional random vector $Y$,
\begin{equation}  S^2(X,Y) = \sup_f \Var(\EE[f(X) | Y]) 
\label{eq_1609} \end{equation}
where the supreumum runs over all measurable functions $f: \RR^n \rightarrow \RR$ with $\EE f(X) = 0 $ and $\Var(f(X)) = 1$. 
The requirement that $\EE f(X) = 0$ is actually not necessary. Let $Z$ be a standard Gaussian random
vector in $\RR^n$, independent of $X$. Recall that we are interested in the $\chi^2$
contraction coefficient
$$ \eta_{\chi^2}(\mu,s) = S^2(X, X + \sqrt{s} Z). $$
Following Klartag and Putterman \cite{KP}, for a function $f: \RR^n \rightarrow \RR$ with $\EE |f(X)| < \infty$ and for $s > 0$ we write
$$ Q_s f (X + \sqrt{s} Z) = \EE [f(X) | X + \sqrt{s} Z]. $$
Let $\mu$ be the probability distribution on $\RR^n$ that is the distribution law of the random vector $X$,
and let $\mu_s$ be the distribution law of $X + \sqrt{s} Z$. The operator 
$$ Q_s: L^2(\mu) \rightarrow L^2(\mu_s) $$
is of norm at most one, since
$$ \int_{\RR^n} |Q_s f|^2 d \mu_s = \EE \left| Q_s f (X + \sqrt{s} Z) \right|^2 
= \EE \left| \EE [f(X) | X + \sqrt{s} Z] \right|^2 
\leq \EE |f(X)|^2. $$

\begin{lemma} For any $s > 0$, the quantity $S^2(X, X + \sqrt{s} Z)$ is the square of the operator norm of $Q_s: L^2(\mu) \rightarrow L^2(\mu_s)$ restricted 
to the subspace of functions of $\mu$-average zero. In other words, $S^2(X, X + \sqrt{s} Z)$ is the minimal number $M \geq 0$ such that for any $f \in L^2(\mu)$ with $\int f d \mu = 0$,
\begin{equation}  \| Q_s f \|_{L^2(\mu_s)}^2 \leq M \cdot\| f \|_{L^2(\mu)}^2. 
\label{eq_1627} \end{equation} \label{lem_541}
\end{lemma}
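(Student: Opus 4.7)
The plan is to derive the statement by matching two variational principles: the definition of the $L^2(\mu)\to L^2(\mu_s)$ operator norm of $Q_s$ on the zero-mean subspace, and the characterization of $S^2(X, X+\sqrt{s} Z)$ provided by formula (\ref{eq:HGR_CS}).

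First I would observe that $Q_s$ sends the zero-mean subspace of $L^2(\mu)$ into the zero-mean subspace of $L^2(\mu_s)$: for any $f\in L^2(\mu)$ with $\int f\, d\mu = 0$, the tower property yields
\[ \int_{\RR^n} Q_s f\, d\mu_s = \EE\bigl[\EE[f(X)\mid X + \sqrt{s}Z]\bigr] = \EE f(X) = 0. \]
Consequently, for such $f$,
\[ \|f\|_{L^2(\mu)}^2 = \Var(f(X)), \qquad \|Q_s f\|_{L^2(\mu_s)}^2 = \Var\bigl(\EE[f(X)\mid X+\sqrt{s}Z]\bigr). \]

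Next I would note that the ratio $\|Q_s f\|_{L^2(\mu_s)}^2/\|f\|_{L^2(\mu)}^2$ is invariant under multiplying $f$ by a nonzero scalar, so the supremum of this ratio over all zero-mean $f\in L^2(\mu)$ with $0<\Var(f(X))<\infty$ is unchanged if one restricts to the normalized family $\EE f(X)=0$, $\Var(f(X))=1$. By (\ref{eq:HGR_CS}) this supremum is exactly $S^2(X, X+\sqrt{s} Z)$. Therefore the minimal $M\geq 0$ for which (\ref{eq_1627}) holds for all zero-mean $f$ is $S^2(X, X+\sqrt{s} Z)$, which is the claim.

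There is no real obstacle here: the lemma is essentially a reformulation of (\ref{eq:HGR_CS}) in operator-theoretic language, and the only substantive point is that $Q_s$ preserves mean zero, which is immediate from the tower property. It is worth observing in passing that without the zero-mean restriction the operator norm of $Q_s: L^2(\mu)\to L^2(\mu_s)$ equals $1$ (attained by constant functions); restricting to the mean-zero subspace is precisely what extracts the maximal correlation coefficient, which explains why this subspace is the natural domain on which to read off the contraction coefficient $\eta_{\chi^2}(\mu,s)$.
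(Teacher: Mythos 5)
Your proof is correct and follows essentially the same route as the paper's: both invoke the variational characterization (\ref{eq:HGR_CS}) and identify $\Var(\EE[f(X)\mid X+\sqrt{s}Z])$ with $\|Q_s f\|_{L^2(\mu_s)}^2$ via the mean-zero preservation of $Q_s$. The only difference is that you spell out the tower-property step and the scale-invariance normalization explicitly, which the paper leaves implicit.
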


\begin{proof} By (\ref{eq_1609}), for any $s > 0$,
$$ S^2(X,X + \sqrt{s} Z) = \sup_f \Var(\EE[f(X) | X + \sqrt{s} Z]) = \sup_f  \Var(Q_s f(X + \sqrt{s} Z)), $$
where the supremum runs over all $f$ with $\EE f(X) = 0$ and $\Var f(X) = 1$. Therefore,
 $$ S^2(X,X + \sqrt{s} Z) = \sup_f \int_{\RR^n} |Q_s f|^2 d \mu_s = \sup_f \| Q_s f \|_{L^2(\mu_s)}^2, $$
where the supremum runs over all $f$ with $\int f d \mu = 0$ and $\int f^2 d \mu = 1$. 
\end{proof}

The operator $Q_s$ may be expressed as an integral operator. Write 
\begin{equation}  \gamma_s(x) = (2 \pi s)^{-n/2} 
\exp(-|x|^2 / (2s)) \label{eq_1716} \end{equation} for the density in $\RR^n$ of a Gaussian random vector of mean zero and covariance $s \cdot \id$.
By considering the joint distribution of $(X, X + \sqrt{s} Z)$
and writing conditional expectation as an integral, we see that
\begin{equation}  Q_s f(y) = \frac{\int_{\RR^n} \gamma_s(x - y) f(x) d \mu(x)}{(\rho * \gamma_s)(y)}. \label{eq_1125} \end{equation}
The proof of Theorem \ref{thm:etachi2UB} requires two differentiations of  the expression on the left-hand side 
of (\ref{eq_1627}) with respect to $s$. It will be convenient to consider a subclass 
of well-behaved functions of $L^2(\mu)$ in order to justify the differentiations under the integral sign.
Write $\rho$ for the density of $\mu$.

\medskip 
As in Klartag and Putterman \cite{KP}, 
we say that a function $f: \RR^n \rightarrow \RR$ has subexponential decay relative to $\rho$
if there exist $C, a > 0$ such that
\begin{equation} |f(x)| \leq \frac{C}{\sqrt{\rho(x)}} e^{-a|x|} \qquad \qquad \qquad (x \in \RR^n).
\label{eq_1030} \end{equation}
If $\rho$ decays exponentially at infinity -- for instance, if $\rho$ is log-concave  --
then all polynomials have subexponential decay relative to $\rho$.
We say that a function $f: \RR^n \rightarrow \RR$ is $\mu$-{\it tempered} if it is smooth and if all of its partial derivatives
of all orders have subexponential decay relative to $\rho$.
Write $\cF_{\mu}$ for the collection of all $\mu$-tempered functions on $\RR^n$. Since $\cF_{\mu}$ contains 
all compactly-supported smooth functions, it is a dense subspace of $L^2(\mu)$.

\begin{lemma} If $f$ is $\mu$-tempered, then $Q_s f$ is $\mu_s$-tempered for any $s > 0$.
\label{lem_526}
\end{lemma}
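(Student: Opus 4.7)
The plan is to verify both parts of the definition of ``$\mu_s$-tempered'' separately: first that $Q_s f \in C^\infty(\RR^n)$, and then that each partial derivative of $Q_s f$ exhibits subexponential decay relative to the density $\rho * \gamma_s$ of $\mu_s$. Writing $N(y) = \int \gamma_s(x-y) f(x) \rho(x)\, dx = ((f\rho) * \gamma_s)(y)$, we have $Q_s f = N/(\rho*\gamma_s)$. Since $\gamma_s$ is Schwartz and $|f|\rho \leq C \sqrt{\rho}\, e^{-a|\cdot|}$ is integrable by the subexponential decay of $f$ relative to $\rho$, standard facts about convolutions show that both $N$ and $\rho*\gamma_s$ are $C^\infty$ with derivatives obtained by differentiating under the integral sign; positivity of $\rho*\gamma_s$ then gives smoothness of the quotient.

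For the decay bounds, I would handle the base case $|\alpha|=0$ directly and then induct on $|\alpha|$. For $|\alpha|=0$, applying Cauchy--Schwarz (equivalently, conditional Jensen) to $Q_s f(y) = \EE[f(X) \mid Y=y]$ gives
\[
|Q_s f(y)|^2 \leq \EE[f(X)^2 \mid Y=y] = \frac{1}{(\rho*\gamma_s)(y)} \int \gamma_s(x-y)\, f(x)^2 \rho(x)\, dx ,
\]
and using $f(x)^2 \rho(x) \leq C^2 e^{-2a|x|}$ together with the triangle inequality $e^{-2a|x|} \leq e^{-2a|y|} e^{2a|x-y|}$ bounds the integral by $C' e^{-2a|y|} \int \gamma_s(z) e^{2a|z|}\, dz$, a $y$-independent multiple of $e^{-2a|y|}$. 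This yields $|Q_s f(y)| \leq C'' (\rho*\gamma_s)(y)^{-1/2} e^{-a|y|}$, the desired subexponential decay of $Q_s f$ relative to $\rho*\gamma_s$.

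For higher-order derivatives, I would use the identity
\[
\partial_{y_i} Q_s f(y) = \frac{1}{s} \bigl[\, Q_s(x_i f)(y) - Q_s f(y) \cdot Q_s(x_i)(y) \,\bigr] ,
\]
obtained by differentiating the conditional density $\pi_y(x) = \gamma_s(x-y)\rho(x)/(\rho*\gamma_s)(y)$ of $X$ given $Y=y$ and invoking the Tweedie-type relation $\partial_{y_i}\log(\rho*\gamma_s)(y) = s^{-1}(\EE[X_i \mid Y=y] - y_i)$. Iterating this formula, $\partial^\alpha Q_s f$ becomes a polynomial combination of expressions $Q_s(p \cdot f)$, with $p$ a polynomial in $x$ and $p \cdot f$ still $\mu$-tempered (since polynomials have subexponential decay relative to any exponentially decaying $\rho$), multiplied by products of conditional moments $Q_s(x^\beta)(y) = \EE[X^\beta \mid Y=y]$. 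The base case applied to each $Q_s(p f)$ already yields subexponential decay relative to $\rho*\gamma_s$.

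The main obstacle is then to verify that the conditional moments $\EE[X^\beta \mid Y=y]$ grow at most polynomially in $|y|$, so that multiplying them against the rapidly decaying $Q_s(p f)$ factors produces a quantity that still decays subexponentially relative to $\rho*\gamma_s$. In the log-concave setting this follows from Brascamp--Lieb applied to the log-concave $\pi_y$ (whose potential has Hessian $\leq -\id/s$, making $\pi_y$ sub-Gaussian with variance at most $s\cdot \id$), combined with the fact that $\pi_y$ concentrates near $y + s\nabla\log(\rho*\gamma_s)(y)$ and $|\nabla\log(\rho*\gamma_s)|$ is itself polynomially bounded. More generally, under the exponential-decay hypothesis on $\rho$ that is in force throughout this section, an analogous polynomial bound can be obtained directly from the integral representation of $\EE[X^\beta\mid Y=y]$. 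Once this polynomial-growth claim is in hand, every term in the inductive expansion decays subexponentially relative to $\rho*\gamma_s$, closing the induction and establishing that $Q_s f$ is $\mu_s$-tempered.
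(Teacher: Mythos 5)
The paper does not actually prove this lemma; it cites \cite[Lemma~2.2]{KP} and merely observes that the only place where \cite{KP} used log-concavity was to guarantee $\int_{\RR^n}|x|^p\,d\mu<\infty$ for every $p>0$, a fact that already follows from $C_P(\mu)<\infty$ via Gromov--Milman. So what you have written is a blind reconstruction, and it needs to be judged on its own terms.

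Your base case and derivative identity are both correct: the Jensen/Cauchy--Schwarz bound $|Q_sf(y)|^2\le \EE[f^2(X)\mid Y=y]$ together with $f^2\rho\le C^2e^{-2a|\cdot|}$ and the triangle inequality $e^{-2a|x|}\le e^{-2a|y|}e^{2a|x-y|}$ cleanly gives $|Q_sf(y)|\lesssim \rho_s(y)^{-1/2}e^{-a|y|}$, and the identity $\partial_{y_i}Q_sf=\frac{1}{s}\bigl[Q_s(x_if)-Q_sf\cdot Q_s(x_i)\bigr]$ checks out by direct computation. The gap is in the inductive step, specifically in the claim that the conditional moments $Q_s(x^\beta)(y)=\EE[X^\beta\mid Y=y]$ grow at most polynomially in $|y|$. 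Two problems. First, the ``exponential-decay hypothesis on $\rho$ that is in force throughout this section'' does not exist: the only standing assumption is $C_P(\mu)<\infty$, which via Gromov--Milman controls the \emph{tails of the measure} ($\int e^{\alpha|x|}\,d\mu<\infty$) but says nothing about pointwise decay of the density $\rho$, which can have arbitrarily tall narrow spikes. The paper's remark that polynomials are $\mu$-tempered ``if $\rho$ decays exponentially at infinity'' is a conditional aside, not a hypothesis. Second, even granting pointwise exponential decay of $\rho$, a polynomial bound on $\EE[|X|^k\mid Y=y]$ is genuinely nontrivial: the crude split $\{|x|\le 2|y|\}\cup\{|x|>2|y|\}$ together with the lower bound $\rho_s(y)\gtrsim e^{-(|y|+1)^2/(2s)}$ only yields $\EE[|X|^k\mid Y=y]\lesssim |y|^k + e^{(|y|+1)^2/(2s)}$, which when multiplied against $|Q_s(pf)(y)|\lesssim\rho_s(y)^{-1/2}e^{-a|y|}$ does \emph{not} produce a bound of the required form $\rho_s(y)^{-1/2}e^{-a'|y|}$. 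Your Brascamp--Lieb remark covers only the log-concave sub-case, precisely the hypothesis the lemma is supposed to do without. So as written the induction does not close, and this is not a presentational issue but a real missing estimate. You would either need to prove the polynomial-growth claim under the assumption $C_P(\mu)<\infty$ (which is not obviously true), or restructure the argument so that conditional moments of pure monomials never appear on their own -- for instance by keeping the decaying factor $f$ attached throughout, which is closer in spirit to how one would expect \cite{KP} to argue.
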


Lemma \ref{lem_526} is proven in \cite[Lemma 2.2]{KP} under the additional assumption that $\mu$ is log-concave. The log-concavity assumption is only used in the proof of Lemma 2.2 in \cite{KP} in order to show that for any $p > 0$,
\begin{equation}  \int_{\RR^n} |x|^p d \mu(x) < \infty. 
\label{eq_532} \end{equation}
However, a measure $\mu$ with $C_P(\mu) < \infty$ clearly satisfies (\ref{eq_532}). In fact, since $f(x) = |x|$ is a $1$-Lipschitz
function, we even know that $\int_{\RR^n} e^{\alpha f} d \mu < \infty$ for some positive $\alpha > 0$, this goes back to Gromov and Milman \cite{GM}.
Hence Lemma \ref{lem_526} applies for any probability measure $\mu$ with a finite Poincar\'e constant.

\medskip 
We write $\nabla^2 f(x) \in \RR^{n \times n}$
for the Hessian matrix of $f$ at the point $x$.
We abbreviate $X_s = X + \sqrt{s} Z$. 
The following lemma
is proven in \cite{KP} as well:

\begin{lemma} For any $s > 0$ and $f \in \cF_{\mu}$, setting $f_s = Q_s f$,
\begin{equation}  \frac{\partial}{\partial s} \Var ( f_s(X_s) ) = - \EE |\nabla f_s(X_s)|^2,
 \label{eq_1729} \end{equation}
and
\begin{equation}
\frac{\partial}{\partial s} \EE |\nabla f_s(X_s)|^2 = - \EE \|\nabla^2 f_s(X_s) \|_{HS}^2 - 2 \EE (\nabla^2 \psi_s)(X_s) \nabla f_s (X_s) \cdot \nabla f_s(X_s),
\label{eq_916}
\end{equation}
where $\rho_s = e^{-\psi_s}$ is the density of $X_s$ and $\| A \|_{HS}$ is the Hilbert-Schmidt norm (also known as the Frobenius norm) of the matrix $A \in \RR^{n \times n}$.
\end{lemma}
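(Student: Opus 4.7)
The starting point is a pointwise PDE for $f_s = Q_s f$. Differentiating the integral representation $(\rho_s f_s)(y) = \int_{\RR^n} f(x)\, \gamma_s(x - y)\, \rho(x)\, dx$ under the integral sign and using $\partial_s \gamma_s = \tfrac{1}{2}\Delta_y \gamma_s$ shows that $\rho_s f_s$ satisfies the heat equation in $y$. Combining this with $\partial_s \rho_s = \tfrac{1}{2}\Delta \rho_s$ and expanding via the product rule yields the key identity
$$ \partial_s f_s \;=\; \frac{1}{2}\Delta f_s \;-\; \nabla \psi_s \cdot \nabla f_s. $$
Throughout the computation, Lemma~\ref{lem_526} guarantees that $f_s$ remains $\mu_s$-tempered, which is what will justify both the exchange of $d/ds$ with $\int dy$ and the vanishing of boundary terms in each integration by parts.

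For identity~\eqref{eq_1729}, since $\int f_s\, d\mu_s = \int f\, d\mu$ is constant in $s$, differentiating the variance reduces to
$$ \frac{d}{ds}\int f_s^2\, \rho_s\, dy \;=\; 2\int f_s\, (\partial_s f_s)\, d\mu_s \;+\; \frac{1}{2}\int f_s^2\, \Delta \rho_s\, dy. $$
Substituting the PDE for $f_s$ into the first term and integrating by parts both $\int f_s \Delta f_s\, d\mu_s$ and $\int f_s^2 \Delta \rho_s\, dy$, the cross terms proportional to $\int f_s\, \nabla f_s \cdot \nabla \psi_s\, d\mu_s$ exactly cancel and what remains is $-\int |\nabla f_s|^2\, d\mu_s = -\EE|\nabla f_s(X_s)|^2$.

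Identity~\eqref{eq_916} is handled by the same strategy applied to $\int |\nabla f_s|^2\, \rho_s\, dy$, with one additional tool: the classical Bochner identity $\tfrac{1}{2}\Delta|\nabla f_s|^2 = \|\nabla^2 f_s\|_{HS}^2 + \nabla f_s \cdot \nabla \Delta f_s$. Writing $\nabla(\partial_s f_s) = \tfrac{1}{2}\nabla \Delta f_s - (\nabla^2 \psi_s)\, \nabla f_s - (\nabla^2 f_s)\, \nabla \psi_s$, integrating the $\tfrac{1}{2}\int |\nabla f_s|^2 \Delta \rho_s\, dy$ term by parts produces a $(\nabla^2 f_s)\, \nabla \psi_s \cdot \nabla f_s$ contribution that precisely cancels its counterpart from $\nabla(\partial_s f_s)$. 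Invoking Bochner then converts the residual $\nabla f_s \cdot \nabla \Delta f_s$ into the $-\|\nabla^2 f_s\|_{HS}^2$ piece, leaving exactly $-\EE\|\nabla^2 f_s(X_s)\|_{HS}^2 - 2\EE (\nabla^2 \psi_s)(X_s)\, \nabla f_s(X_s) \cdot \nabla f_s(X_s)$.

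The main obstacle is not conceptual but analytic: rigorously verifying each differentiation-under-the-integral step and the vanishing of boundary contributions in every integration by parts. These are precisely what the $\mu$-tempered hypothesis is designed to handle via Lemma~\ref{lem_526}, so once the PDE for $f_s$ is in hand the proof becomes essentially a mechanical combination of the product rule, Bochner's identity, and a handful of integrations by parts, all dominated-convergence-justified by the subexponential decay built into $\cF_\mu$.
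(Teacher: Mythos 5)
The paper itself does not prove this lemma; it cites Klartag–Putterman \cite{KP}, and the ingredients you list (the evolution equation $\partial_s f_s = \Box_s f_s = \tfrac{1}{2}\Delta f_s - \nabla\psi_s\cdot\nabla f_s$, integration by parts against $\mu_s$, Bochner, and the $\mu$-tempered class to justify the analysis) are exactly the framework that \cite{KP} uses and that the surrounding text of the paper sets up in (\ref{eq_1821}), (\ref{eq_1047_}), and (\ref{eq_1531}). Your derivation of (\ref{eq_1729}) is clean and correct: the three cross terms $\int f_s \nabla\psi_s\cdot\nabla f_s\, d\mu_s$ carry coefficients $+1$, $-2$, $+1$ and cancel exactly.

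For (\ref{eq_916}), however, your description of the cancellation is off. After expanding $2\int \nabla f_s\cdot\nabla(\partial_s f_s)\,d\mu_s$ you obtain a term $-2\int (\nabla^2 f_s)\nabla\psi_s\cdot\nabla f_s\, d\mu_s$, while a single integration by parts of $\tfrac12\int |\nabla f_s|^2\Delta\rho_s\,dy$ produces $+\int (\nabla^2 f_s)\nabla\psi_s\cdot\nabla f_s\,d\mu_s$; these do not ``precisely cancel,'' they leave $-\int (\nabla^2 f_s)\nabla\psi_s\cdot\nabla f_s\,d\mu_s$. Moreover you cannot then independently invoke Bochner on a separate $\tfrac12\int |\nabla f_s|^2\Delta\rho_s\,dy$ to handle the $\nabla f_s\cdot\nabla\Delta f_s$ residual, because that is the same term you already used. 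The correct bookkeeping is, e.g., to integrate $\tfrac12\int|\nabla f_s|^2\Delta\rho_s\,dy$ by parts \emph{twice} to get $\int\left[\|\nabla^2 f_s\|_{HS}^2 + \nabla f_s\cdot\nabla\Delta f_s\right]d\mu_s$ via Bochner, and then rewrite $-2\int(\nabla^2 f_s)\nabla\psi_s\cdot\nabla f_s\,d\mu_s = -\int\nabla\psi_s\cdot\nabla|\nabla f_s|^2\,d\mu_s = -\int\Delta|\nabla f_s|^2\,d\mu_s = -2\int\left[\|\nabla^2 f_s\|_{HS}^2 + \nabla f_s\cdot\nabla\Delta f_s\right]d\mu_s$ (again by parts and Bochner). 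Summing then does give $-\int\|\nabla^2 f_s\|_{HS}^2\,d\mu_s - 2\int(\nabla^2\psi_s)\nabla f_s\cdot\nabla f_s\,d\mu_s$, so your method is sound, but the claimed term-by-term cancellation is not how the computation actually closes; a cleaner route (and the one implicit in \cite{KP}) is to write the whole thing in $\Gamma$-calculus: $\tfrac{d}{ds}\int \Gamma_1(f_s)\,d\mu_s = \int L_s\Gamma_1(f_s)\,d\mu_s - \int\Gamma_2(f_s)\,d\mu_s = -\int\Gamma_2(f_s)\,d\mu_s$, and then use (\ref{eq_1706}).
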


Recall that the density of the random vector $X = X_0$ is the function $\rho = e^{-\psi}$.
The Laplace operator associated with $\mu$ is defined for $u \in \cF_{\mu}$ via
\begin{equation}  L u = L_{\mu} u = \Delta u - \nabla \psi \cdot \nabla u. \label{eq_1647} \end{equation}
For any $u,v \in \cF_{\mu}$ we have the integration by parts formula
\begin{equation} \int_{\RR^n} (L u) v d \mu = -\int_{\RR^n} \langle \nabla u, \nabla v \rangle d \mu
\label{eq_1047_} \end{equation}
and the (integrated) Bochner formula
\begin{equation}
\int_{\RR^n} (L u)^2 d \mu = \int_{\RR^n} \| \nabla^2 u \|_{HS}^2 d \mu + \int_{\RR^n} \langle (\nabla^2 \psi) \nabla u, \nabla u \rangle d \mu,
\label{eq_1531}
\end{equation}
where $\| \nabla^2 u \|_{HS}$ is the Hilbert-Schmidt norm of the Hessian matrix $\nabla^2 u$.
Formulae (\ref{eq_1047_}) and (\ref{eq_1531}) are proven by intergation by parts,
see e.g., Ledoux \cite[Section 2.3]{ledoux1}. The $\mu$-temperedness of $u,v$ are used in order to discard the boundary terms. The operator $-L$ is symmetric and positive semi-definite in $\cF_{\mu} \subseteq L^2(\mu)$. 
Recall the well-known subadditivity property of the Poincar\'e constant (e.g. \cite{C} or Theorem \ref{thm_1315}),
$$ \CP(X_s) = \CP(X + \sqrt{s} Z) \leq \CP(X) + \CP(\sqrt{s} Z) = \CP(X) + s. $$

\begin{proof}[Proof of Theorem \ref{thm:etachi2UB}] We begin with the proof of the right-hand side inequality in (\ref{eq_2110}), 
as the left-hand side inequality was already proven above. 
If $C_P(X) = +\infty$ then this inequality is vacuously true, hence we may assume that $C_P(X) < \infty$. 
Recall that $X_s = X + \sqrt{s} Z$. By Lemma \ref{lem_541}, we need to show that for any function $f$ with $\EE |f(X)|^2 < \infty$,
\begin{equation} \Var \left( \EE \left[ f(X) | X_s \right] \right)  = \Var(Q_s f(X_s)) \leq \frac{1}{1 + s / C_P(\mu)}  \cdot \Var f(X). 
\label{eq_543} \end{equation}
Recall that $\mu$ is the distribution law of $X$.
Since $Q_s: L^2(\mu) \rightarrow L^2(\mu_s)$ is a bounded operator and $\cF_{\mu}$ is dense in $L^2(\mu)$,
it suffices to prove (\ref{eq_543}) under the additional assumption that $f \in \cF_{\mu}$. Abbreviate $f_s = Q_s(f)$.
By using (\ref{eq_1729}) and the Poincar\'e inequality, we get 
$$
\frac{\partial}{\partial s} \Var(f_s(X_s)) \leq -\frac{1}{\CP(X_s)} \Var(f_s(X_s)) \leq -\frac{1}{s +\CP(X)} \Var(f_s(X_s)),
$$
where we used  subadditivity in the last passage. Therefore
$$ \frac{\partial}{\partial s} \log \Var(f_s(X_s)) \leq -\frac{1}{s +\CP(X)}.
$$
By integrating from $0$ to $s$ we conclude that 
$$\log \frac{\Var(f_s(X_s))}{\Var f(X)}   \leq
-\int_0^s \frac{dx}{x + \CP(X)}  = \log \frac{1}{1 + s/\CP(X)}, $$
proving (\ref{eq_543}). This proves the right-hand side inequality in (\ref{eq_2110}).
Let us now assume that $X$ is log-concave and prove (\ref{eq_1552}). This is in fact 
proven in \cite{KP}, but let us briefly repeat the argument here for convenience. Let $0 < \eps < \CP(\mu)$. By the definition of $\CP(\mu)$, 
there exists a non-constant function $f$ such that 
\begin{equation}  \Var f(X) \geq (\CP(\mu) - \eps) \cdot \EE |\nabla f(X)|^2. \label{eq_610} \end{equation}
Thanks to the appendix of \cite{BK}, we may assume that $f$ is smooth and compactly-supported in $\RR^n$, and in particular $f \in \cF_{\mu}$. 
According to (\ref{eq_1729}) and (\ref{eq_916}) we may differentiate the Rayleigh quotient
$$ \frac{\partial}{\partial s} \frac{\EE |\nabla f_s(X_s)|^2}{\Var (f_s(X_s))} = - \EE |\nabla^2 \tilde{f}_s(X_s)|^2 - 2 \EE (\nabla^2 \psi_s)(X_s) \nabla \tilde{f}_s (X_s) \cdot \nabla \tilde{f}_s(X_s) - \EE |\nabla \tilde{f_s}(X_s)|^4 $$
where we normalize  $\tilde{f}_s = f_s / \sqrt{\Var(f_s(X_s)}$. As explained in the proof of \cite[Theorem 2.4]{KP}, it follows from the spectral theorem that 
$$ \langle L_{\mu_s}^2 \tilde{f}_s, \tilde{f}_s \rangle_{L^2(\mu_s)} \geq \langle L \tilde{f}_s, \tilde{f}_s \rangle_{L^2(\mu_s)}^2
= \left( \int_{\RR^n} |\nabla \tilde{f}_s|^2 d\mu_s \right)^2. $$
Therefore, from the Bochner formula (\ref{eq_1531}),
\begin{equation} 
 \frac{\partial}{\partial s} \frac{\EE |\nabla f_s(X_s)|^2}{\Var (f_s(X_s))} \leq -  \EE (\nabla^2 \psi_s)(X_s) \nabla \tilde{f}_s (X_s) \cdot \nabla \tilde{f}_s(X_s) \leq 0. \label{eq_1818}
 \end{equation}
where the last inequality is the only place where we use log-concavity; indeed, 
since $X$ is log-concave,  so is the random vector $X + \sqrt{s} Z$ by the Pr\'ekopa-Leindler 
inequality (see e.g. the first pages of Pisier \cite{Pis}, or Davidovi\v{c}, Korenbljum and Hacet \cite{DKH}). Thus $e^{-\psi_s}$ is a log-concave density, which amounts to the fact that the symmetric matrix $\nabla^2 \psi_s$ is positive semi-definite. Hence the term involving the Hessian $\nabla^2 \psi_s$ in (\ref{eq_1818}) is non-positive. 
Consequently,
$\EE |\nabla f_s(X_s)|^2 / \Var(f_s(X_s))$ is non-increasing in $s$ and
$$ \log \frac{\Var(f_s(X_s))}{\Var f(X)} = -\int_0^s \frac{\EE |\nabla f_t(X_t)|^2}{\Var (f_t(X_t))} dt \geq -s \cdot \frac{\EE |\nabla f(X)|^2}{\Var (f(X))} \geq -\frac{s}{\CP(\mu) - \eps}. $$
Hence for any $\eps > 0$ we found $f$ such that 
$$ \Var \left(\EE \left[ f(X) | X_s \right] \right) = \Var(f_s(X_s)) \geq \exp \left( -\frac{s}{\CP(\mu) - \eps} \right) \Var f(X). $$
Since $\eps > 0$ is arbitrary, and in view of Lemma \ref{lem_541} this proves that $S^2(X, X + \sqrt{s} Z) \geq \exp(-s / \CP(\mu))$.
Inequality (\ref{eq_1552}) is thus proven.
\end{proof}

\begin{remark} Thomas Courtade and Joseph Lehec communicated to us an alternative  proof of the right-hand side inequality in (\ref{eq_2110}). Their proof does not require differentiation with respect to the parameter $s$.
\end{remark}

\section{Identities for KL-divergence and general divergences}
\label{sec:KLandGenProofs}

In this section we prove Propositions \ref{thm:deBruijnGeneralized} and \ref{thm:deBruijnGeneralized2}, the generalized de Bruijn identity.
We keep the notation 
from the previous section. Let $f: \RR^n \rightarrow \RR$ be a $\mu$-integrable function.

\medskip As in the previous section, some care is needed when differentiating under the integral sign and when integrating by parts and neglecting the boundary terms. As opposed to the previous section analyzing the case of the $\chi^2$-divergence, here we will be rather brief in the justification. As in \cite{KP} and in Section \ref{sec2}, the basic idea is to introduce a suitable class $\cF_{\mu, \vphi}$ of smooth functions, which allow for integrations by parts without boundary terms, and is preserved by the $Q_s$-dynamics. Then one approximates the density of an arbitrary measure $\nu$ with $J_{\vphi}(\nu \D \mu) < \infty$ with a function from this class.

\medskip From Klartag and Putterman \cite{KP}, we know that for any $f \in \cF_{\mu}$ and $s > 0$,
\begin{equation}  \frac{d}{ds} Q_s f = \Box_s Q_s f \label{eq_1821} \end{equation}
where
$$ \Box_s = L_s - \frac{\Delta}{2} $$
and $L_s = L_{\mu_s}$ is the Laplace operator associated with the measure $\mu_s$.
We write $\rho_s$ for the density of $\mu_s$.

\begin{lemma} Let $\vphi(x)=\vphi_\lambda(x)=x^\lambda -1$, for $\lambda>1$. Let $\mu$ and $\nu$ be probability distributions on $\RR^n$ with $D_{\vphi}(\nu \D \mu) < \infty$, and assume $\EE_\mu|X|^4<\infty$.  Denote $f=d\nu/d\mu$  such that $f_s = Q_s f=d\mu_s/d\nu_s$,~for $s > 0$. we have 
\begin{align}
\frac{d}{ds} \EE \vphi( f_s(X_s) ) = -\frac{1}{2} \cdot \EE \vphi''(f_s(X_s)) |\nabla f_s(X_s)|^2 = -\frac{1}{2} \int_{\RR^n} \vphi''(f_s) |\nabla f_s|^2 d\mu_s.\label{eq:generalderivative}    
\end{align}
 In particular, if $D_{\text{KL}}(\nu \D \mu) < \infty$
\begin{equation} \frac{d}{ds} \EE f_s(X_s) \log f_s(X_s) = -\frac{1}{2} \cdot \EE \frac{|\nabla f(X_s)|^2}{f_s(X_s)} \label{eq_1713} \end{equation}

\label{lem_620}
\end{lemma}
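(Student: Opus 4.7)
The plan is to differentiate under the integral sign, substitute the evolution equations for $f_s$ and $\rho_s$, and then collapse everything by integration by parts. The main effort is justifying these manipulations; the formal computation is short.

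First I would write the object of interest as an integral:
\begin{align*}
\EE \vphi(f_s(X_s)) \;=\; \int_{\RR^n} \vphi(f_s)\, \rho_s \, dx,
\end{align*}
where $\rho_s$ is the density of $\mu_s$. The density $\rho_s$ solves the heat equation $\partial_s \rho_s = \tfrac{1}{2}\Delta \rho_s$, while by (\ref{eq_1821}) the conditional expectation $f_s = Q_s f$ satisfies $\partial_s f_s = \Box_s f_s = L_s f_s - \tfrac{1}{2} \Delta f_s$. Assuming we may differentiate under the integral, this gives
\begin{align*}
\frac{d}{ds} \int \vphi(f_s) \rho_s \, dx
= \int \vphi'(f_s) \bigl(L_s f_s - \tfrac{1}{2}\Delta f_s\bigr) \rho_s \, dx
+ \tfrac{1}{2}\int \vphi(f_s) \Delta \rho_s \, dx.
\end{align*}

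Next I would simplify each piece. Using the identity $\rho_s L_s u = \nabla \cdot (\rho_s \nabla u)$ (which follows from $L_s = \Delta - \nabla \psi_s \cdot \nabla$ with $\rho_s = e^{-\psi_s}$), the Dirichlet-form integration by parts (\ref{eq_1047_}) yields
\begin{align*}
\int \vphi'(f_s) L_s f_s\, \rho_s \, dx
= -\int \vphi''(f_s) |\nabla f_s|^2 \rho_s \, dx.
\end{align*}
For the remaining two terms, integrating by parts twice (in the usual Lebesgue sense) and using the chain rule $\Delta \vphi(f_s) = \vphi'(f_s)\Delta f_s + \vphi''(f_s)|\nabla f_s|^2$, I obtain
\begin{align*}
\tfrac{1}{2}\!\int \vphi(f_s) \Delta \rho_s \, dx
= \tfrac{1}{2}\!\int \rho_s \Delta \vphi(f_s) \, dx
= \tfrac{1}{2}\!\int \vphi'(f_s)\Delta f_s\, \rho_s \, dx + \tfrac{1}{2}\!\int \vphi''(f_s)|\nabla f_s|^2 \rho_s \, dx.
\end{align*}
The $\tfrac{1}{2}\vphi'(f_s)\Delta f_s \rho_s$ terms cancel, and summing everything gives $-\tfrac{1}{2}\int \vphi''(f_s)|\nabla f_s|^2 d\mu_s$, which is the claimed identity \eqref{eq:generalderivative}. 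The in-particular statement for KL follows by taking $\vphi_{\text{KL}}(x) = x\log x$, for which $\vphi''(x) = 1/x$.

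The main obstacle, as the authors flag, is rigorously justifying the exchange of $\partial_s$ with the integral and the vanishing of boundary terms in the two integrations by parts. This is exactly where the specific form $\vphi_\lambda(x) = x^\lambda - 1$ (so $\vphi''(x) = \lambda(\lambda-1)x^{\lambda-2}$ has polynomial growth) and the fourth-moment hypothesis $\EE_\mu|X|^4 < \infty$ are used. Following the strategy of \cite{KP} sketched in Section \ref{sec2}, I would approximate $f$ by a sequence in a suitable class $\cF_{\mu,\vphi}$ of smooth functions with enough decay so that $Q_s$ preserves the class and that polynomial weights against $\rho_s$ (which decays at least Gaussianly at infinity since $\mu_s$ inherits the moment bounds of $\mu$ convolved with a Gaussian) make all boundary contributions vanish. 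Once the identity is established on this dense class, a limiting argument using the monotonicity of $s \mapsto \EE \vphi''(f_s)|\nabla f_s|^2$ and lower semicontinuity of the Fisher information transfers it to the general case with $D_\vphi(\nu\|\mu) < \infty$.
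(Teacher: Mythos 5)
Your formal computation is exactly the paper's: differentiate under the integral using $\partial_s \rho_s = \tfrac12 \Delta\rho_s$ and $\partial_s f_s = \Box_s f_s = L_s f_s - \tfrac12 \Delta f_s$, move the Lebesgue Laplacian from $\rho_s$ onto $\vphi(f_s)$, expand by the chain rule so the $\tfrac12\vphi'(f_s)\Delta f_s$ pieces cancel, and collapse $\int \vphi'(f_s) L_s f_s \, d\mu_s$ to $-\int \vphi''(f_s)|\nabla f_s|^2\,d\mu_s$ via the Dirichlet-form identity (\ref{eq_1047_}). The paper simply performs the last integration by parts at the end rather than the start, but the decomposition is identical, and your passage from \eqref{eq:generalderivative} to \eqref{eq_1713} via $\vphi''(x)=1/x$ is the same as well.

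Where you diverge from the paper — and where there are genuine problems — is in the plan for justifying the two integrations by parts and the exchange of $d/ds$ with the integral. First, $\rho_s = \rho * \gamma_s$ does \emph{not} ``decay at least Gaussianly'' under a finite-fourth-moment hypothesis; if $\rho$ has tails like $e^{-|x|}$ (or heavier), $\rho_s$ inherits them, so the boundary terms cannot be killed by invoking Gaussian decay. Second, your closing appeal to ``monotonicity of $s\mapsto \EE\,\vphi''(f_s)|\nabla f_s|^2$'' is not available here: that monotonicity is exactly the content of Conjecture~\ref{lem_1824} and is only argued under log-concavity, whereas the present lemma is stated for arbitrary $\mu$ with finite fourth moment. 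The paper's actual route in the appendix avoids both pitfalls: it never needs pointwise decay of $\rho_s$, but instead controls each boundary term by Cauchy--Schwarz, splitting the integrand into a factor bounded by $\sqrt{D_{\vphi^2}(\nu_s\|\mu_s)}$ or $\sqrt{1 + D_{\vphi}(\nu_s\|\mu_s)}$ (finite by Proposition~\ref{prop:ReniyFinite}, which upgrades finiteness of $D_\lambda$ to finiteness of $D_{2\lambda}$ and $D_{4\lambda}$) and a factor controlled by $\m{J}(\mu_s)$ or the fourth moment; the $d/ds$ exchange is then handled by a dominated-convergence estimate of $\sup_{s\in[a,b]}|\partial_s(\vphi(f_s)\rho_s)|$, again via Cauchy--Schwarz and $\chi^2$/R\'enyi finiteness. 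So the algebraic core of your proof is correct, but the regularity sketch as written would not close, and the two specific claims noted above should be replaced by quantitative $L^2$-type estimates of the kind the paper carries out.
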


\begin{proof} 
We prove~\eqref{eq:generalderivative} below for any smooth, convex function $\vphi:(0,\infty)\to\RR$ with $\vphi(1)=0$, without justifying differentiation under the integral sign, and without justifying integration by parts. The justifications for these steps for $\vphi(x)=\vphi_\lambda(x)=x^\lambda -1$, for $\lambda>1$, under the assumptions that $D_{\vphi}(\nu \D \mu) < \infty$, and  $\EE_\mu|X|^4<\infty$, are given in the appendix. We also prove in the appendix that~\eqref{eq:generalderivative} implies~\eqref{eq_1713}. 

\medskip 
Recall that $\partial \rho_s / \partial s = \Delta \rho_s / 2$ by the heat equation, as $\mu_s = \mu * \gamma_s$ and $\rho_s$ is the density of $\mu_s$.
This and (\ref{eq_1821}) are the main places where we use the heat equation in our arguments. Let us compute:
\begin{align*}
\frac{d}{ds} \int_{\RR^n} \vphi(f_s) \rho_s = \int_{\RR^n} \vphi'(f_s) \Box_s f_s \cdot \rho_s + \int_{\RR^n} \vphi(f_s) \frac{\Delta \rho_s}{2}.
\end{align*}
 Integrating by parts we continue with
\begin{align*}
\\  & = \int_{\RR^n} \left[ \vphi'(f_s) \left(L_s f_s - \frac{\Delta f_s}{2} \right)    + \frac{\Delta( \vphi(f_s) )}{2} \right] \rho_s  
\\  & = \int_{\RR^n} \left[ \vphi'(f_s) L_s f_s - \vphi'(f_s) \frac{\Delta f_s}{2}  + \frac{\vphi'(f_s) \Delta f_s + \vphi''(f_s) |\nabla f_s|^2}{2} ) \right] \rho_s  
\\  & = \int_{\RR^n} \left[ -\vphi''(f_s) |\nabla f_s|^2  + \frac{\vphi''(f_s) |\nabla f_s|^2)}{2} \right] \rho_s,
\end{align*}
where we used the integration by parts 
(\ref{eq_1047_}) in the last passage.
\end{proof}

\begin{proof}[Proof of Proposition \ref{thm:deBruijnGeneralized} and Proposition \ref{thm:deBruijnGeneralized2}] 
Follow directly from Lemma \ref{lem_620}, and the definition of $D_{\vphi}$ and $J_{\vphi}$.
\end{proof}

We move on to computing the second derivative of $\EE \vphi(f(X_s))$. 

\begin{conjecture} Denoting
$$ \kappa = -\frac{1}{2 \vphi''} \left( \frac{1}{\vphi''} \right)'', $$
and
$$ g_s = \vphi'(f_s) $$
we have, under mild regularity assumptions, the ``Bochner-type'' formula
\begin{align} 
\nonumber \frac{d}{ds}  \int_{\RR^n} \vphi''(f_s) |\nabla f_s|^2 d \mu_s  = -\int_{\RR^n}  \left[ 2 (\nabla^2 \psi_s) \nabla g_s \cdot \nabla g_s + \left|\nabla^2 g_s \right|^2 + \kappa(f_s) |\nabla g_s|^4  \right] \frac{d \mu_s}{\vphi''(f_s)},
  \end{align}
where  we recall that $\rho_s = e^{-\psi_s}$.
\label{lem_1824}
\end{conjecture}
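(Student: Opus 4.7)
The plan is to differentiate $F(s) := \int \vphi''(f_s)|\nabla f_s|^2 \, d\mu_s$ directly and then reorganize the result by passing to the new variable $g_s := \vphi'(f_s)$ and applying the integrated Bochner formula (\ref{eq_1531}). The two dynamical inputs are the heat equation $\partial_s \rho_s = \tfrac{1}{2}\Delta\rho_s$ and the semigroup identity $\partial_s f_s = \Box_s f_s = L_s f_s - \tfrac{1}{2}\Delta f_s$ from (\ref{eq_1821}). Together, after an integration by parts on $\Delta\rho_s$, they yield the convenient rule
\begin{equation*}
\frac{d}{ds}\int H_s \, d\mu_s = \int \bigl[\partial_s H_s + \tfrac{1}{2}\Delta H_s\bigr] d\mu_s
\end{equation*}
for sufficiently fast-decaying $H_s$. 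Applied to $H_s = \vphi''(f_s)|\nabla f_s|^2$ and expanded via chain and product rules, this produces a long sum, with a first pleasant cancellation: every occurrence of $\Delta f_s$ disappears. What remains involves $\int \vphi'''(f_s) L_s f_s |\nabla f_s|^2 d\mu_s$, $\int \vphi''(f_s)\nabla f_s \cdot \nabla L_s f_s \,d\mu_s$, $\int \vphi''''(f_s)|\nabla f_s|^4 d\mu_s$, $\int \vphi''(f_s) \|\nabla^2 f_s\|_{HS}^2 d\mu_s$, and a cross term $\int \vphi'''(f_s)\nabla f_s \cdot \nabla |\nabla f_s|^2 d\mu_s$.

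The crux is then to pass to $g_s = \vphi'(f_s)$ via the chain-rule identities $\nabla g_s = \vphi''(f_s)\nabla f_s$, $\nabla^2 g_s = \vphi''(f_s)\nabla^2 f_s + \vphi'''(f_s)\nabla f_s \otimes \nabla f_s$, and $L_s g_s = \vphi''(f_s) L_s f_s + \vphi'''(f_s)|\nabla f_s|^2$. Integrating by parts against $\mu_s$, the $\nabla L_s f_s$ term becomes $-\int (L_s g_s)(L_s f_s) d\mu_s$, and squaring $L_s g_s$ via the third identity unlocks the integrated Bochner formula applied to $g_s$,
\begin{equation*}
\int (L_s g_s)^2 d\mu_s = \int \|\nabla^2 g_s\|_{HS}^2 \, d\mu_s + \int (\nabla^2 \psi_s)\nabla g_s \cdot \nabla g_s \, d\mu_s.
\end{equation*}
This manufactures the $-\|\nabla^2 g_s\|_{HS}^2$ and $-2(\nabla^2\psi_s)\nabla g_s \cdot \nabla g_s$ contributions of the conjecture, with the correct weight $a(f_s) := 1/\vphi''(f_s)$. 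What remains is a combination of $|\nabla g_s|^4$, $L_s g_s \, |\nabla g_s|^2$, and $\langle (\nabla^2 g_s)\nabla g_s, \nabla g_s\rangle$ integrals, each carrying an explicit polynomial coefficient in $a, a', a''$ at $f_s$. One further integration by parts applied to $\int a(f_s)a'(f_s)\, L_s g_s \,|\nabla g_s|^2\, d\mu_s$ eliminates the Hessian-type contribution, and the algebraic identity $\vphi'''' = -a''/a^2 + 2(a')^2/a^3$ (obtained by differentiating $a' = -\vphi'''/(\vphi'')^2$) then forces the surviving $|\nabla g_s|^4$ coefficient to collapse to $\tfrac{1}{2} a(f_s)^2 a''(f_s) = -a(f_s)\kappa(f_s)$, matching the conjecture.

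The main obstacle is exactly the one the authors flag: each of the integrations by parts above (pulling $\tfrac{1}{2}\Delta$ across $H_s$, the Bochner rewrite of $\int(L_s g_s)^2 d\mu_s$, and the final $L_s g_s |\nabla g_s|^2$ reduction) discards boundary contributions at infinity, and the differentiation rule for $\frac{d}{ds}\int H_s d\mu_s$ requires exchanging differentiation with integration. The natural rigorous route, paralleling the discussion around Lemma \ref{lem_526}, is to introduce a class $\cF_{\mu,\vphi}$ of $(\mu,\vphi)$-tempered test functions whose decay controls the possible blow-up of $\vphi''(f_s), \vphi'''(f_s), \vphi''''(f_s)$ near $0$ and $\infty$, check that the semigroup $Q_s$ preserves this class, and approximate a general $f = d\nu/d\mu$ with $J_\vphi(\nu\D\mu) < \infty$ by members of $\cF_{\mu,\vphi}$. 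Implementing this uniformly for arbitrary smooth convex $\vphi$ with $1/\vphi''$ concave (beyond the $\vphi_\lambda$ family already handled in Proposition \ref{thm:deBruijnGeneralized2}) is the technical step the authors leave open.
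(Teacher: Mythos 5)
Your plan is, at bottom, the same computation as the paper's proof, just packaged differently: the paper introduces $M(x,y)=y\vphi''(x)$, derives the general formula~(\ref{eq_1052}) via the dynamic $\Gamma$-calculus of \cite{KP}, and only then specializes and completes the square in the Hessian to reach~(\ref{eq_1858}); you expand $\tfrac{d}{ds}\int H_s\,d\mu_s = \int[\partial_s H_s + \tfrac12\Delta H_s]\,d\mu_s$ directly on $H_s=\vphi''(f_s)|\nabla f_s|^2$, observe the same cancellation of the $\Delta f_s$ and $\nabla\Delta f_s$ terms, and pass to $g_s$ mid-stream. Both routes arrive at the same intermediate expression by the same integrations by parts, and you identify exactly the regularity obstruction the authors flag.

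The one place your sketch is imprecise is the Bochner step. After the first integration by parts you have $-2\int (L_s g_s)(L_s f_s)\,d\mu_s$, and substituting $L_s f_s = (L_s g_s - \vphi'''(f_s)|\nabla f_s|^2)/\vphi''(f_s)$ turns this into a \emph{weighted} integral $\int \vphi''(f_s)^{-1}(L_s g_s)^2\,d\mu_s$ plus a cross term; the unweighted Bochner formula~(\ref{eq_1531}), applied to $g_s$ alone as you write it, supplies neither the weight $1/\vphi''(f_s)$ nor the coefficient $2$ in front of $(\nabla^2\psi_s)\nabla g_s\cdot\nabla g_s$ that the conjectured identity requires, and the weighted version of~(\ref{eq_1531}) is not an identity without $\nabla(1/\vphi'')$ corrections. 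The cleanest repair is to polarize~(\ref{eq_1531}): $\int L_s g_s\cdot L_s f_s\,d\mu_s = \int\nabla^2 g_s:\nabla^2 f_s\,d\mu_s + \int(\nabla^2\psi_s)\nabla g_s\cdot\nabla f_s\,d\mu_s$, and then use $\nabla g_s = \vphi''(f_s)\nabla f_s$ and $\nabla^2 g_s = \vphi''(f_s)\nabla^2 f_s + \vphi'''(f_s)\nabla f_s\otimes\nabla f_s$; the weight then drops out automatically and the factor $2$ on $\nabla^2\psi_s$ comes from the prefactor $2$ on the $\nabla f_s\cdot\nabla L_s f_s$ term. This is exactly what the paper's $\Box_s$-based $\Gamma_2$ in~(\ref{eq_1706}) builds in, since it already carries $2\nabla^2(\log\rho_s)$, which is why~(\ref{eq_1052}) lands on~(\ref{eq_1858}) without any weighted Bochner identity or the extra integration by parts you anticipate needing at the end.
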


\begin{proof}[Sketch of proof]
We assume below that differentiation under the integral sign is valid, and that all functions involved in integration in parts vanish. Justifying these assumptions is the missing step in proving this conjecture. We now prove the ``Bochner-typ'' formula above, under the assumption that those steps are indeed valid (i.e., ignoring all regularity issues).

\medskip Our approach follows the $M$-function of Ivanishvili and Volberg \cite{IV} and the dynamic approach to the $\Gamma$-calculus is presented in \cite{KP}. 
We consider a function $M(x,y)$ of two real variables of the form
\begin{equation}  M(x,y) = y \vphi''(x). \label{eq_1723} \end{equation}
Thus $M(x,y) = y/x$ in the case where $\vphi(x) = x \log x$ and $M(x,y) = 2 y$ if $\vphi(x) = (x-1)^2$. In view of Lemma \ref{lem_620}, we would need to compute
$$ \frac{d}{ds} \EE M(f_s(X_s), |\nabla f_s(X_s)|^2). $$
By the chain rule, for any functions $f,g$,
\begin{equation}  \Delta M(f,g) = M_x \Delta f + M_y \Delta g + M_{xx} |\nabla f|^2 + 2 M_{xy} \nabla f \cdot \nabla g + M_{yy} |\nabla g|^2, \label{eq_1846} \end{equation}
where we abbreviate $M_x = \partial M / \partial x, M_y = \partial M / \partial y$ etc. Let us also introduce the dynamic $\Gamma$-calculus notation from \cite{KP}. We set
$$ \Gamma_1(u,v) = \nabla u \cdot \nabla v $$
and
$$  \Gamma_2(u,v) := \Box_s \Gamma_1(u, v) - \Gamma_1(\Box_s u, v ) -
\Gamma_1( u, \Box_s v) $$ 
which satisfies 
\begin{equation} \Gamma_2(u,u) = \| \nabla^2 u \|_{HS}^2 - 2 \nabla^2 (\log \rho_s) \nabla u \cdot \nabla u. \label{eq_1706} \end{equation}
We abbreviate $\Gamma_1(u) = \Gamma_1(u,u)$ and $\Gamma_2(u) = \Gamma_2(u,u)$.

\medskip We first prove that for a general function $M$ of two variables, 
	\begin{align}\label{eq_1052} \frac{d}{ds} & \EE M( f_s(X_s), |\nabla f_s(X_s)|^2)  
		\\& = -\int_{\RR^n} \left[ 	 M_y \Gamma_2(f_s) + \frac{M_{xx}}{2} \Gamma_1(f_s) + M_{xy} \Gamma_1( f_s, \Gamma_1(f_s)) + \frac{M_{yy}}{2} \Gamma_1(\Gamma_1(f_s))\right] d \mu_s. \nonumber 
	\end{align}
To this end, abbreviate $M = M(f_s, |\nabla f_s|^2), M_x = M_x(f_s, |\nabla f_s|^2)$
	etc. Then,
\begin{align} 
\label{eq_1124}
	 \frac{d}{ds} \int_{\RR^n} M(f_s, |\nabla f_s|^2) \rho_s &= \int_{\RR^n} \left[ \frac{\Delta M}{2}  
	 + M_x \Box_s f_s + 2 M_y \Gamma_1( \Box_s f_s, f_s) \right]  \rho_s \\ & = 
	 \int_{\RR^n} \left[ \frac{\Delta M}{2}  
	 + M_x \Box_s f_s + M_y \left( \Box_s \Gamma_1( f_s, f_s) - \Gamma_2(f_s, f_s)\right) \right]  \rho_s
\nonumber
\\ & = 
	 \int_{\RR^n} \left[ \frac{\Delta M}{2}  
	 + M_x \left(L_s - \frac{\Delta}{2} \right)  f_s + M_y \left( \left(L_s - \frac{\Delta}{2} \right) \Gamma_1( f_s) - \Gamma_2(f_s)\right) \right]  \rho_s.
\nonumber
\end{align}
By (\ref{eq_1846}) with $f = f_s, g = \Gamma_1(f_s)$, 
$$ \Delta M = M_x \Delta f_s + M_y \Delta \Gamma_1(f_s) + M_{xx} \Gamma_1(f_s) + 2 M_{xy} \Gamma_1( f_s, \Gamma_1(f_s)) + M_{yy} \Gamma_1(\Gamma_1(f_s)). $$
Therefore,
$$ \frac{1}{2} \left( \Delta M - M_x \Delta f_s - M_y \Delta \Gamma_1(f_s)  \right) =  \frac{M_{xx}}{2} \Gamma_1(f_s) + M_{xy} \Gamma_1( f_s, \Gamma_1(f_s)) + \frac{M_{yy}}{2} \Gamma_1(\Gamma_1(f_s)).
$$
Consequently, the integral in (\ref{eq_1124}) equals
$$ \int_{\RR^n} \left[ 	 M_x L_s f_s + M_y L_s \Gamma_1( f_s) - M_y \Gamma_2(f_s) + \frac{M_{xx}}{2} \Gamma_1(f_s) + M_{xy} \Gamma_1( f_s, \Gamma_1(f_s)) + \frac{M_{yy}}{2} \Gamma_1(\Gamma_1(f_s))\right]  \rho_s $$
Integrate by parts the two terms involving $L_s$ using (\ref{eq_1047_}) to obtain
\begin{align*} \int_{\RR^n} & \left[ 	 M_x L_s f_s + M_y L_s \Gamma_1( f_s) \right] \rho_s = 
\int_{\RR^n} \left[ 	 -\nabla M_x \cdot \nabla f_s - \nabla M_y \cdot \nabla \Gamma_1( f_s) \right] \rho_s
\\ & = \int_{\RR^n} \left[ -(M_{xx} \nabla f_s + M_{xy} \nabla \Gamma_1(f_s)) \cdot \nabla f_s - (M_{xy} \nabla f_s + M_{yy} \nabla \Gamma_1(f_s) )\cdot \nabla \Gamma_1( f_s) \right] \rho_s
\\ & = \int_{\RR^n} \left[ -M_{xx} \Gamma_1(f_s) - 2 M_{xy} \Gamma_1( \Gamma_1(f_s), f_s) - M_{yy} \Gamma_1( \Gamma_1(f_s) ) \right] \rho_s.
\end{align*}
Hence, the integral in (\ref{eq_1124}) equals
$$ -\int_{\RR^n} \left[ 	 M_y \Gamma_2(f_s) + \frac{M_{xx}}{2} \Gamma_1(f_s) + M_{xy} \Gamma_1( f_s, \Gamma_1(f_s)) + \frac{M_{yy}}{2} \Gamma_1(\Gamma_1(f_s))\right]  \rho_s $$
and (\ref{eq_1052}) is proven. In the specific case where $M$ is the function given by (\ref{eq_1723}) we conclude from  (\ref{eq_1052})  that 
\begin{align} 
\label{eq_1858} \frac{d}{ds} & \int_{\RR^n} \vphi''(f_s) |\nabla f_s|^2 d \mu_s 
 \\ & = -\int_{\RR^n} \left[ 	 \vphi''(f_s) \Gamma_2(f_s) + \frac{\vphi^{(4)}(f_s)}{2} \Gamma_1(f_s)^2 + \vphi^{(3)}(f_s) \Gamma_1( f_s, \Gamma_1(f_s))  \right] d \mu_s  \nonumber
 \\ & = -\int_{\RR^n} \left[ 2 \vphi'' (\nabla^2 \psi_s) \nabla f_s \cdot \nabla f_s +	 \vphi'' |\nabla^2 f_s|^2 + \frac{\vphi^{(4)}}{2} |\nabla f_s|^4 + 2 \vphi^{(3)} (\nabla^2 f_s) \nabla f_s \cdot \nabla f_s  \right] d \mu_s \nonumber
 \\ & = -\int_{\RR^n} \left[ 2 \vphi'' (\nabla^2 \psi_s) \nabla f_s \cdot \nabla f_s +	 \vphi'' \left|\nabla^2 f_s + \frac{\vphi^{(3)}}{\vphi''} \nabla f_s \otimes \nabla f_s \right|^2 + \left \{ \frac{\vphi^{(4)}}{2} - \frac{\left( \vphi^{(3)} \right)^2}{\vphi''} \right \} |\nabla f_s|^4  \right] d \mu_s \nonumber
  \end{align}
Note that 
$$
 \frac{\vphi^{(4)}}{2} - \frac{\left( \vphi^{(3)} \right)^2}{\vphi''} = -\frac{(\vphi'')^2}{2} \left( \frac{1}{\vphi''} \right)'' = (\vphi'')^3 \kappa.
$$
Consequently,
\begin{align} 
	\nonumber \frac{d}{ds} & \int_{\RR^n} \vphi''(f_s) |\nabla f_s|^2 d \mu_s 
	\\ & = -\int_{\RR^n} \left[ 2 \vphi'' (\nabla^2 \psi_s) \nabla f_s \cdot \nabla f_s +	 \vphi'' \left|\nabla^2 f_s + \frac{\vphi^{(3)}}{\vphi''} \nabla f_s \otimes \nabla f_s \right|^2 + \kappa (\vphi')^3 |\nabla f_s|^4  \right] d \mu_s. \nonumber
\end{align}
Since $g_s = \vphi'(f_s)$ we have $\nabla g_s = \vphi''(f_s) \nabla f_s$ and $\nabla^2 g_s = \vphi''(f_s) \nabla^2 f_s + \vphi^{(3)}(f_s) \nabla f_s \otimes \nabla f_s$, and the lemma is proven.
\end{proof}

Specializing Conjecture \ref{lem_1824} to the case where $\vphi(x) = x \log x$,  we obtain the following:

\begin{proposition}
Assuming the validity of Conjecture~\ref{lem_1824}, With $g_s = \log Q_s f$ we have
$$  \frac{d}{ds} \EE \frac{|\nabla f_s(X_s)|^2}{f_s(X_s)} = -\int_{\RR^n} \left[ |\nabla^2 g_s|^2 + 2 \langle (\nabla^2 \psi_s) \nabla g_s, \nabla g_s \rangle   \right] f_s d \mu_s. $$
\end{proposition}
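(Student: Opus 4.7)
The plan is to obtain this proposition as a direct specialization of Conjecture~\ref{lem_1824} to the $KL$-case $\varphi(x) = x\log x$, so no new analytic work is required beyond a careful identification of the three pieces appearing in the general Bochner-type formula.

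First, I would compute the relevant data for $\varphi(x)=x\log x$. We have $\varphi'(x)=\log x + 1$ and $\varphi''(x)=1/x$, so $1/\varphi''(x) = x$ is \emph{linear}, which immediately gives $(1/\varphi'')'' \equiv 0$ and therefore
\[
\kappa \;=\; -\frac{1}{2\varphi''}\Bigl(\frac{1}{\varphi''}\Bigr)'' \;\equiv\; 0.
\]
Hence the quartic term $\kappa(f_s)|\nabla g_s|^4$ in the conclusion of Conjecture~\ref{lem_1824} disappears. Also $1/\varphi''(f_s) = f_s$, which converts the measure $d\mu_s/\varphi''(f_s)$ appearing in the conjecture into $f_s\, d\mu_s$.

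Next I would match the left-hand side. With $\varphi''(f_s) = 1/f_s$, the left-hand side of Conjecture~\ref{lem_1824} reads
\[
\frac{d}{ds}\int_{\RR^n} \varphi''(f_s)|\nabla f_s|^2\, d\mu_s \;=\; \frac{d}{ds}\int_{\RR^n}\frac{|\nabla f_s|^2}{f_s}\, d\mu_s \;=\; \frac{d}{ds}\EE\frac{|\nabla f_s(X_s)|^2}{f_s(X_s)},
\]
which is exactly the quantity whose derivative we want to identify. Finally, for the choice $g_s=\log Q_s f = \log f_s$ (the additive constant $1$ relative to $\varphi'(f_s) = \log f_s + 1$ is irrelevant since only $\nabla g_s$ and $\nabla^2 g_s$ enter), we have $\nabla g_s = \nabla f_s / f_s$ and $\nabla^2 g_s = \nabla^2 f_s/f_s - (\nabla f_s\otimes \nabla f_s)/f_s^2$, so the two surviving terms on the right-hand side of Conjecture~\ref{lem_1824} become precisely $|\nabla^2 g_s|^2$ and $2\langle(\nabla^2\psi_s)\nabla g_s,\nabla g_s\rangle$, integrated against $f_s\, d\mu_s$.

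There is no real obstacle here: the entire proof amounts to evaluating $\varphi''$, $\varphi^{(3)}$, $\varphi^{(4)}$ for $\varphi(x)=x\log x$, observing that linearity of $1/\varphi''$ kills $\kappa$, and rewriting the measure factor $1/\varphi''(f_s)$ as $f_s$. The only thing worth double-checking is the sign and the factor of $2$ on the Hessian-of-potential term, which one reads off directly from the displayed form of Conjecture~\ref{lem_1824}. Assembling these observations yields the stated identity.
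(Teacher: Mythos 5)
Your proof is correct and is essentially identical to the paper's: both simply specialize Conjecture~\ref{lem_1824} to $\varphi(x)=x\log x$, observe that $1/\varphi''(x)=x$ is linear so $\kappa\equiv 0$ and $1/\varphi''(f_s)=f_s$, and note that $g_s=\log f_s$ differs from $\varphi'(f_s)=\log f_s+1$ only by an additive constant, which does not affect $\nabla g_s$ or $\nabla^2 g_s$. Your write-up is somewhat more explicit than the paper's one-line remark, but the content and route are the same.
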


\begin{proof} This follows from Conjecture \ref{lem_1824} with $\vphi(x) = x \log x$, since in this case $1 / \vphi''(x) = x$ and $\kappa \equiv 0$.
\end{proof}

\section{Proofs of inequalities related to $\vphi$-Sobolev constants}
\label{sec_ineq}

We keep the notation and assumptions of the previous section. 
We need to use the subadditivity property of the $\vphi$-Sobolev constants, proven by Chafa\"i \cite[Corollary 3.1]{chafai} in greater generality. For the convenience of the reader we include here a statement and a proof
of the following proposition, as well as of Proposition \ref{prop_1626}, which also goes back to Chafa\"i \cite{chafai}.

\begin{theorem}[``Subadditivity of the $\vphi$-Sobolev constants'', Chafa\"i \cite{chafai}] Let $X$ and $Y$ be independent random vectors in $\RR^n$. 
Let $\varphi:(0,\infty)\to\RR$ be a smooth, convex function with $\varphi(1)=0$ such that $1 / \vphi''$ is concave. Then,
$$ C_{\vphi}(X + Y) \leq C_{\vphi}(X) + C_{\vphi}(Y). $$ 
\label{thm_1315}
\end{theorem}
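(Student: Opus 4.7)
The plan is to follow the classical conditional-decomposition proof of subadditivity of the Poincar\'e constant and extend it to the $\varphi$-entropy setting, using the hypothesis that $1/\varphi''$ is concave in place of the linearity of $1/\varphi''$ that holds for $\varphi_{\chi^2}$ and $\varphi_{\text{KL}}$. Let $h: \RR^n \to (0, \infty)$ be a smooth test function with $\EE h(X+Y) = 1$. To establish the theorem it suffices, by the definition of $C_\varphi$ in (\ref{eq_1617}), to show that
\begin{equation*}
\EE \varphi(h(X+Y)) \leq \frac{C_\varphi(X) + C_\varphi(Y)}{2} \, \EE \varphi''(h(X+Y)) \, |\nabla h(X+Y)|^2.
\end{equation*}

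The first move is to introduce $f(x,y) = h(x+y)$ on $\RR^n \times \RR^n$ and $g(y) = \EE_X h(X+y)$, and to record the exact Fubini identity
\begin{equation*}
\EE \varphi(h(X+Y)) = \EE_Y \bigl[ \EE_X \varphi(f(X,Y)) - \varphi(g(Y)) \bigr] + \EE_Y \varphi(g(Y)),
\end{equation*}
where in the second term one uses $\varphi(\EE_Y g(Y)) = \varphi(1) = 0$. The bracket is the conditional $\varphi$-entropy of $f(\cdot, Y)$ under $X$, while the second term is the $\varphi$-entropy of $g$ under $Y$. Applying the $\varphi$-Sobolev inequality for $X$ (in Chafa\"i's mean-free form, which agrees with (\ref{eq_1617}) by homogeneity for the $\varphi$'s of interest) to $x \mapsto h(x+y)$ at each fixed $y$ and integrating in $y$ bounds the first term by $\tfrac{C_\varphi(X)}{2}\, \EE \varphi''(h(X+Y)) \, |\nabla h(X+Y)|^2$.

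For the second term I would apply the $\varphi$-Sobolev inequality for $Y$ to $g$, which has mean $1$, obtaining the bound $\tfrac{C_\varphi(Y)}{2} \EE_Y \varphi''(g(Y)) |\nabla g(Y)|^2$. Since $\nabla g(y) = \EE_X \nabla h(X+y)$, closing the argument requires the pointwise Jensen-type estimate
\begin{equation*}
\varphi''(\EE_X h(X+y)) \, |\EE_X \nabla h(X+y)|^2 \leq \EE_X \bigl[ \varphi''(h(X+y)) \, |\nabla h(X+y)|^2 \bigr],
\end{equation*}
which amounts to convexity of the function $\Phi(x,v) = \varphi''(x) |v|^2$ on $(0,\infty) \times \RR^n$. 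A short Hessian-and-Schur-complement calculation shows that convexity of $\Phi$ is equivalent to $\varphi''\, \varphi^{(4)} \geq 2 (\varphi''')^2$, which is precisely the condition that $1/\varphi''$ is concave, since $(1/\varphi'')'' = \bigl(2(\varphi''')^2 - \varphi'' \varphi^{(4)}\bigr)/(\varphi'')^3$. This is the unique place where the hypothesis on $\varphi$ is used, and I expect it to be the main point to record carefully. Summing the two bounds yields the claimed subadditivity.
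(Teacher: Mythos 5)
Your proposal is correct and is essentially the paper's proof, with the roles of $X$ and $Y$ interchanged: you condition on $Y$ and marginalize over $X$, while the paper conditions on $X$ and marginalizes over $Y$, but the structure (Fubini decomposition of $\varphi$-entropy, applying each factor's $\varphi$-Sobolev inequality, then Jensen via joint convexity of $(x,v)\mapsto \varphi''(x)|v|^2$) is identical. The paper verifies the convexity by the Sylvester criterion on the $2\times 2$ Hessian, which is the same computation as your Schur-complement reduction to $\varphi''\varphi^{(4)} \geq 2(\varphi''')^2 \Leftrightarrow (1/\varphi'')'' \leq 0$.
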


\begin{proof} for any locally-Lipschitz function $f: \RR^n \rightarrow [0, \infty)$
with $\EE |f(X+Y)| < \infty$, denoting $g(x) = \EE f(x + Y)$ we have
\begin{align*} \EE & \vphi(f(X+Y))  - \vphi(\EE(f(X+Y))) \\ & = \EE_X \left[ \EE_Y \vphi(f(X+Y)) - \vphi(\EE_Y \vphi(f(X+Y))) \right] + \EE_X \vphi(g(X)) - \vphi(\EE_X g(X)) 
\\ & \leq \frac{C_{\vphi}(Y)}{2} \EE_X \EE_Y \vphi''(f(X+Y)) |\nabla f(X+Y)|^2 + \frac{C_{\vphi}(X)}{2} \EE_X \vphi''(g(X)) |\nabla g(X)|^2. 
\end{align*}
To conclude the proof it remains to show that for any fixed $x \in \RR^n$,
$$
\vphi''(g(x)) |\nabla g(x)|^2  \leq \EE_Y \vphi''(f(x+Y)) |\nabla f(x+Y)|^2. $$
Indeed, this would imply that 
$$ \EE \vphi(f(X+Y))  - \vphi(\EE(\vphi(X+Y))) \leq \frac{C_{\vphi}(Y) + C_{\vphi}(X)}{2} \EE \vphi''(f(X+Y)) |\nabla f(X+Y)|^2,  $$
and hence $C_{\vphi}(X+Y) \leq C_{\vphi}(Y) + C_{\vphi}(X)$. Since $\nabla g(x) = \EE \nabla f(x + Y)$, all that remains is to show that 
$$
\vphi''(\EE f(x + Y)) |\EE \nabla f(x + Y)|^2 \leq \EE_Y \vphi''(f(x+Y)) |\nabla f(x+Y)|^2. $$
By Jensen's inequality, this would follow once we show that the function 
\begin{equation}  (x,y) \mapsto \vphi''(x) |y|^2 \qquad \qquad (x >0, y \in \RR^n) \label{eq_1547} \end{equation}
is jointly convex in $x$ and $y$. Since $\vphi''(x) |y|^2 = \sum_j \vphi''(x) y_j^2$, it suffices to show that 
the function 
\begin{equation}  (x,y) \mapsto \vphi''(x) y^2 \qquad \qquad (x >0, y \in \RR) \label{eq_1547_} \end{equation}
is jointly convex in $(x,y)$. The Hessian of this function is the matrix
$$ \left( \begin{matrix} \vphi^{(4)}(x) y^2 & 2 \vphi^{(3)}(x) y \\ 2 \vphi^{(3)}(x) y  & 2 \vphi''(x) \end{matrix} \right) $$
Since this symmetric $2 \times 2$ matrix has a non-negative entry on the diagonal, namely $2 \vphi''(x)$, by the Sylvester criterion it is 
positive  semi-definite if and only if its determinant is non-negative. The determinant equals
$$ 2 y^2 \left[ \vphi'' \vphi^{(4)} - 2 (\vphi^{(3)})^2 \right] = -2 y^2 (\vphi'')^3 \left( \frac{1}{\vphi''} \right)'', $$
where we used the formula $(1 / \vphi'')'' = -(\vphi'' \vphi^{(4)} - 2 (\vphi^{(3)})^2 ) / (\vphi'')^3$.
Since $\vphi'' \geq 0$ and $1 / \vphi''$ is concave, the determinant is non-negative, and the function in (\ref{eq_1547_}) is convex. This completes the proof.
\end{proof}

\begin{proof}[Proof of Proposition \ref{prop_1744}] For any locally-Lipschitz function $f$, 
	$$ \EE \vphi(f(X)) - \vphi( \EE f(X) ) \leq \frac{C_{\vphi}(X)}{2} \cdot \EE \vphi''(f(X)) |\nabla f(X)|^2. $$
Let $g: \RR^n \rightarrow \RR$ be a  Lipschitz, bounded function. Then there exists $\eps_0 > 0$ such that for any $0 < \eps < \eps_0$, the function $g_0 = 1 + \eps g$ attains values at $I$. In fact, since $\vphi$ is smooth, there exists $M > 0$ and $\eps_1 < \eps_0$ such that for $\eps < \eps_1$,
$$ \left| \vphi(g_0(x)) - \eps \vphi'(1) g(x) -\frac{\eps^2}{2} \vphi''(1) g^2(x) \right| \leq \eps^3 M $$
and
$$ \vphi''(g_0(x)) \geq (1 - \eps M) \cdot \vphi''(1). $$
Consequently, for any $0 < \eps < \eps_1$,
$$ \frac{\eps^2}{2} \vphi''(1) Var(g(X)) - 2 M \eps^3 \leq \EE \vphi(f(X)) - \vphi( \EE f(X) ) $$ 
and
$$ \EE \vphi''(f(X)) |\nabla f(X)|^2 \geq (1 - \eps M) \eps^2 \vphi''(1) \cdot \EE |\nabla g(X)|^2. $$
By using the $\vphi$-Sobolev inequality and dividing by $\eps^2 \vphi''(1) > 0$, and then letting $\eps$ tends to zero we obtain
\begin{equation} \frac{1}{2} Var(g(X)) \leq \frac{C_{\vphi}(X)}{2} \EE |\nabla g(X)|^2. \label{eq_1329} \end{equation}
This inequality holds for a Lipschitz, bounded function $g$. A standard truncation argument 
shows that (\ref{eq_1329}) holds true for an arbitrary locally-Lipschitz function $g$. Indeed, It suffices to replace $g(x)$ by $g_M(x) = \theta(x/M) g(x)$ where $\theta: \RR^n \rightarrow [0,1]$ is a compactly supported smooth function that equals one in a neighborhood of the origin. The function $g_M$ is a Lipschitz, bounded function, and a short argument detailed e.g. in the proof of Proposition 27 in \cite{BK} allows to take the limit as $M \rightarrow \infty$ in the Poincar\'e inequality. This shows that $C_P(X) \leq C_{\vphi}(X)$.	
\end{proof}

\begin{proof}[Proof of Proposition \ref{prop_1626}] Let $Z$ be a standard Gaussian random vector, independent of $X$.
When $X$ is a standard Gaussian, the operator $Q_s$ has a pleasant form given by the Mehler formula. Recall that in general, 
$$ Q_s f(X + \sqrt{s} Z) = \EE \left[ f(X) | X + \sqrt{s} Z \right] $$
and hence
$$ \EE Q_s f(X + \sqrt{s} Z) = \EE f(X). $$
In the case where $X$ is Gaussian, the density of $X$ conditioned on $X + \sqrt{s} Z = y$ is
$$ \frac{\gamma_1(x) \gamma_s(y-x)}{\gamma_{1+s}(y)} = \gamma_{\frac{s}{s+1}} \left(x - \frac{y}{s+1} \right) $$
where $\gamma_s$ is the Gaussian density, as in (\ref{eq_1716}). In other words, the distribution of $X$ conditioned on $X + \sqrt{s} Z$ is that of a Gaussian random vector of mean 
$\frac{X + \sqrt{s} Z}{s+1}$ and covariance $\frac{s}{s+1} \id$. Therefore, for any $y \in \RR^n$,
\begin{equation}  Q_s f(y) = \EE f \left(\frac{y}{s+1} + \sqrt\frac{s}{s+1} Z \right) \label{eq_1728} \end{equation}
and 
$$ \nabla Q_s f(y) = \frac{1}{s+1} \EE \nabla f \left(\frac{y}{s+1} + \sqrt\frac{s}{s+1} Z \right). $$
Consequently, in all of $\RR^n$ we have the identity, \begin{equation} \vphi''(Q_s f) |\nabla Q_s f|^2 = \frac{1}{(s+1)^2} \cdot \vphi''(Q_s f) |Q_s (\nabla  f)|^2 \label{eq_1729A} \end{equation}
where $Q_s$ is acting on a  vector field entry by entry, i.e., $Q_s(\nabla f) = (Q_s (\partial_1 f), \ldots Q_s(\partial_n f))$. 
Since the function in (\ref{eq_1547}) is convex in $x$ and $y$, and since $Q_s$ is an averaging operator as we see from (\ref{eq_1728}), we may use Jensen's inequality. 
We obtain from (\ref{eq_1729A}) that
\begin{equation} \vphi''(Q_s f) |\nabla Q_s f|^2 \leq \frac{1}{(s+1)^2} \cdot Q_s \left[ \vphi''(f) |\nabla  f|^2 \right] \label{eq_1729_} \end{equation}
Abbreviate $f_s = Q_s f$ and $X_s = X + \sqrt{s} Z$, and observe that 
$$ \EE \vphi(f(X)) - \vphi(\EE f(X)) = -\int_0^{\infty} \frac{\partial}{\partial s} \EE \vphi(\EE \left[ f(X) | X + \sqrt{s} Z \right] ) ds 
= -\int_0^{\infty} \frac{\partial}{\partial s} \EE \vphi(f_s(X_s))  ds.
$$
From the generalized de Bruijn identity in the form of Lemma \ref{lem_620} above, and from (\ref{eq_1729_}),
\begin{align*}  \EE \vphi(f(X)) - \vphi(\EE f(X)) & = \frac{1}{2} \int_0^{\infty}  \EE \vphi''(f_s(X_s)) |\nabla f_s(X_s)|^2 ds
\\ & \leq   \int_0^{\infty} \frac{1}{2 (s+1)^2} \EE Q_s \vphi''(f(X_s)) |\nabla f(X_s)|^2 ds 
\\ & =  \int_0^{\infty}  \frac{1}{2 (s+1)^2} \EE \left[ \vphi''(f) |\nabla f|^2 \right](X)) ds 
\\ & = \frac{1}{2} \EE \left[ \vphi''(f) |\nabla f|^2 \right](X). 
\end{align*} 
This shows that $C_{\vphi}(X) \leq 1$. The inequality $C_{\vphi}(X) \geq C_P(X) = 1$ follows from 
Proposition \ref{prop_1744}.
\end{proof}

\begin{question}
Is there an extremality property of the log-Sobolev constant $\CLS(X)$ among all $\vphi$-Sobolev constants
$C_{\vphi}(X)$ where $\vphi$ ranges over the class of convex function with $\varphi(1)=0$ such that $1 / \vphi''$ is concave? 
\end{question}

\begin{remark}
In the proof of Proposition \ref{prop_1626} we used the fact that when $X$ is a standard Gaussian random vector,  for any $f$,
\begin{equation}  |\nabla Q_s f|^2 \leq \frac{1}{(s+1)^2} Q_s(|\nabla f|^2). 
	\label{eq_1126} \end{equation}
Since $\vphi'' \geq 0$, inequality (\ref{eq_1126}) and the convexity of the function $(x,y) \mapsto \vphi''(x) y^2$, suffices for concluding (\ref{eq_1729_}).  Inequality (\ref{eq_1126}) is the only property of a Gaussian random vector that was used in the proof of Proposition \ref{prop_1626}.  
Arguing as in the proof of \cite[Lemma 2.5]{KP}, one can show that 
(\ref{eq_1126}) holds true in the case where the law $\mu$ of $X$ is more log-concave than the Gaussian measure, i.e. when its density $e^{-\psi}$ 
satisfies $\nabla^2 \psi \geq \id$ everywhere in $\RR^n$.  We omit the details. This yields another proof of a result from Chafa\"i \cite{chafai}: If $1 / \vphi''$ is concave and $X$ is more log-concave than the Gaussian, then,
	$$ C_{\vphi}(X) \leq 1. $$
\end{remark}

\begin{proof}[Proof of Theorem \ref{thm_1601} and Theorem \ref{thm:phi_convex}] We mimick the proof of Theorem \ref{thm:etachi2UB}.
Begin with the proof of the right-hand side inequality in (\ref{eq_1120_}),
as the left-hand side inequality was already proven above. 
If $C_\vphi(X) = +\infty$ then this inequality is vacuously true, hence we may assume that $C_\vphi(X) < \infty$. 
Assume that $D_{\vphi}( \nu \D \mu) < \infty$,
and write $f = d \nu / d \mu$, so that $f \geq 0$ satisfies $\int f d \mu = 1$. 
Clearly $f_s = Q_s f$ is a probability density with respect to $\mu_s$, and in fact $d \nu_s / d \mu_s = Q_s f$.
 We need to show that 
\begin{equation} D_{\vphi}(\nu_s \D \mu_s) = \int_{\RR^n} \vphi(Q_s f) d \mu_s  \leq \frac{1}{1 + s / C_\vphi(\mu)}  \cdot \int_{\RR^n} \vphi(f) d \mu = \frac{1}{1 + s / C_\vphi(\mu)}  \cdot D_{\vphi}(\nu \D \mu).
\label{eq_543_} \end{equation}
The two equalities in (\ref{eq_543_}) follow from the definition of $D_{\vphi}$,  and it remains to prove the inequality. 
Let $X \sim \mu$ and recall that $X_s = X + \sqrt{s} Z$.
By the generalized de Bruijn identity, i.e. Proposition \ref{thm:deBruijnGeneralized2}, and the definition of the $\vphi$-Sobolev constant,
$$
\frac{\partial}{\partial s} D_{\vphi}(\nu_s \D \mu_s) = -\frac{1}{2} J_{\vphi}(\nu_s \D \mu_s) \leq -\frac{1}{C_\vphi(X_s)} D_{\vphi}(\nu_s \D \mu_s)
\leq -\frac{1}{C_\vphi(X) + C_{\vphi}(\sqrt{s} Z)} D_{\vphi}(\nu_s \D \mu_s),
$$
where in the last passage we used subadditivity, i.e., Theorem \ref{thm_1315}. By homogeneity and Proposition \ref{prop_1626},
$$
\frac{\partial}{\partial s} D_{\vphi}(\nu_s \D \mu_s) \leq -\frac{1}{C_\vphi(X) + s} D_{\vphi}(\nu_s \D \mu_s). 
$$
 Therefore
$$ \frac{\partial}{\partial s} \log D_{\vphi}(\nu_s \D \mu_s) \leq -\frac{1}{s +C_\vphi(X)}.
$$
By integrating from $0$ to $s$ we conclude that 
$$ \log \frac{D_{\vphi}(\nu_s \D \mu_s)}{D_{\vphi}(\nu \D \mu)}   \leq
-\int_0^s \frac{dx}{x + C_\vphi(X)}  = \log \frac{1}{1 + s/C_\vphi(X)}. $$ 
This proves the right-hand side inequality in (\ref{eq_1120_}).

\medskip Let us now assume that $X$ is log-concave and prove (\ref{eq_1603}). Let $0 < \eps < C_\vphi(\mu)$. 
There exists
non-negative function $f$ with $\EE f(X) = 1$ 
such that \begin{equation}  \EE \vphi(f(X))  \geq (C_\vphi(\mu) - \eps) \cdot \EE \vphi''(f(X)) |\nabla f(X)|^2. \label{eq_610_} \end{equation}
By an argument similar to the one from the appendix of \cite{BK}, we may assume that $f$ is smooth.
Let $\nu$ satisfy $d \nu / d \mu = f$.
By Conjecture \ref{lem_1824},
\begin{align}  \nonumber \frac{d}{ds} & J_{\vphi}(\nu_s \D\mu_s) = \frac{d}{ds} \int_{\RR^n} \vphi''(f_s) |\nabla f_s|^2 d \mu_s  \\ & = 
	-\int_{\RR^n}  \left[ 2 (\nabla^2 \psi_s) \nabla g_s \cdot \nabla g_s + \left|\nabla^2 g_s \right|^2 + \kappa(f_s) |\nabla g_s|^4  \right] \frac{d \mu_s}{\vphi''(f_s)} \leq 0, \label{eq_1843}
\end{align} 
where $f_s = Q_s f$. Indeed, each of the three summands in the integral is non-negative; the first is non-negative since $\nabla^2 \psi_s \geq 0$ by log-concavity, the second is always non-negative, and the third is non-negative since $\kappa \geq 0$ as $1 / \vphi''$ is concave. We thus proved that $J_{\vphi}(\nu_s \D \mu_s)$ is decreasing with $s$.
Since $J_{\vphi}(\nu_s \D \mu_s)$ is the derivative of $D_{\vphi}(\nu_s \D\mu_s)$, we conclude that $D_{\vphi}(\nu_s \D\mu_s)$ is a concave function of $s$. Moreover,
by the generalized de Brujin identity,
$$ D_{\vphi}(\nu_s \D \mu_s) - D_{\vphi}(\nu \D\mu)= -\int_0^s J_{\vphi}(\nu_t \D \mu_t) dt \geq -s \cdot J_{\vphi}(\nu \D\mu) \geq -\frac{s}{C_\vphi(\mu) - \eps} \cdot D_{\vphi}(\nu \D \mu). $$
and hence
$$ D_{\vphi}(\nu_s \D \mu_s)  \geq \left( 1-\frac{s}{C_\vphi(\mu) - \eps} \right) D_{\vphi}(\nu \D \mu). $$
Since $\eps > 0$ is arbitrary, this proves that 
$$ \eta_{\vphi}(\mu, s)  \geq 1-\frac{s}{C_\vphi(\mu)}. $$
\end{proof}

\begin{proof}[Proof of Theorem \ref{thm:etaKLbounds} and Theorem \ref{thm:KLconvex}]  
These follow from Theorem \ref{thm_1601} and Theorem \ref{thm:phi_convex}, respectively.
\end{proof}

\begin{appendices}
\section{Technical justifications for Lemma~\ref{lem_620}}

We assume $\vphi(x)=\vphi_{\lambda}(x)=x^{\lambda}-1$, for $\lambda>1$. Let $\nu$ and $\mu$ be two probability distributions on $\RR^n$. We make the following regularity assumptions: 
\begin{enumerate}
\item $D_{\vphi}(\nu\|\mu)<\infty$
\item $\EE_\mu|X|^4<\infty$
\end{enumerate}
Let $f=d\nu/d\mu$, such that $f_s=d\nu_s/d\mu_s$. The following proposition will be useful in the derivation below.
\begin{proposition}
Let $\vphi(x)=\vphi_\lambda(x)=x^{\lambda}-1$, $\lambda>1$, and assume $D_\vphi(\nu\|\mu)<\infty$. Then, for any $s>0$ we also have $D_{\vphi^2}(\nu_s\|\mu_s)<\infty$, and in particular, $\chi^2(\nu\|\mu)<\infty$. Furthermore, $D_{\vphi_{4\lambda}}(\nu_s\|\mu_s)<\infty$.
\label{prop:ReniyFinite}
\end{proposition}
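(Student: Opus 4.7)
I would reduce the proposition to the equivalent statement that $\int f_s^p\,d\mu_s<\infty$ for $p=2\lambda$ and $p=4\lambda$, where $f_s=d\nu_s/d\mu_s$. (The ``$\chi^2$'' claim, read as $\chi^2(\nu_s\|\mu_s)<\infty$, follows from the $p=2\lambda$ bound via Hölder on the probability measure $\mu_s$, which collapses $L^{2\lambda}$ into $L^2$ for $\lambda\geq 1$.) Since the data processing inequality only delivers $\int f_s^\lambda\,d\mu_s\leq\int f^\lambda\,d\mu<\infty$, the extra integrability must come from Gaussian regularization of both densities.

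My plan is to establish a pointwise upper bound on $f_s(y)$ of Gaussian type by combining two complementary estimates. First, Hölder's inequality with conjugate exponents $(\lambda,\lambda')$, $\lambda'=\lambda/(\lambda-1)$, applied to the kernel representation $\rho_{\nu_s}(y)=\int f(x)\gamma_s(y-x)\,d\mu(x)$, together with the Gaussian identity $\gamma_s(z)^{\lambda'}=c_{n,\lambda',s}\,\gamma_{s/\lambda'}(z)$, yields
\begin{equation*}
\rho_{\nu_s}(y)\leq C_1\,\rho_{\mu_{s/\lambda'}}(y)^{1/\lambda'},
\end{equation*}
with $C_1<\infty$ by the hypothesis $D_\vphi(\nu\|\mu)<\infty$. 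Second, Jensen's inequality applied to the concave function $\log\gamma_s$ gives
\begin{equation*}
\rho_{\mu_s}(y)\geq(2\pi s)^{-n/2}\exp\!\bigl(-(|y-m|^2+v)/(2s)\bigr),
\end{equation*}
where $m=\EE_\mu X$ and $v=\EE_\mu|X-m|^2\leq\EE_\mu|X|^2$ are finite thanks to $\EE_\mu|X|^4<\infty$.

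Combining the two estimates, $f_s(y)=\rho_{\nu_s}(y)/\rho_{\mu_s}(y)$ is bounded pointwise by a product of $\rho_{\mu_{s/\lambda'}}(y)^{1/\lambda'}$ and an explicit Gaussian factor $\exp(c|y-m|^2)$. The critical balancing identity is that the natural Gaussian tail of $\rho_{\mu_{s/\lambda'}}(y)^{1/\lambda'}$ decays at rate $1/(2s)$ (since $(s/\lambda')\cdot\lambda'=s$), exactly the rate appearing in the Gaussian factor from the lower bound on $\rho_{\mu_s}$. Thus the Gaussian exponents cancel in the tails, so that after verifying this cancellation we obtain $f_s(y)\leq C$ uniformly in $y$, or more generally $f_s(y)\leq C(1+|y|)^k$ with a polynomial correction controllable by $\EE_\mu|X|^4$. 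Integrating this pointwise bound against the probability measure $\mu_s$ at powers $p=2\lambda$ and $p=4\lambda$ then gives the desired integrability.

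The main obstacle is the tail analysis: verifying the alignment of Gaussian exponents and showing that polynomial corrections are absorbed by the fourth-moment bound on $\mu$ (which transfers to a fourth-moment bound on $\mu_s=\mu*\gamma_s$). Concretely, one would split $\RR^n$ into $\{|y|\leq R\}$ and $\{|y|>R\}$: in the bounded region $\rho_{\mu_s}$ has a uniform positive lower bound, while in the tails both $\rho_{\nu_s}$ and $\rho_{\mu_s}$ are Gaussian-like with matching rates. The upgrade from $p=2\lambda$ to $p=4\lambda$ follows either by iterating the Hölder step with a different conjugate pair, or by applying the same argument directly with $f$ replaced by $f^2/\int f^2\,d\mu$ (which has finite $\vphi_{2\lambda/2}=\vphi_\lambda$-divergence once $p=2\lambda$ integrability is secured).
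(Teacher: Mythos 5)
The Hölder step and the Jensen step in your plan are each individually valid, but the claimed cancellation of Gaussian exponents does not actually happen, and the conclusion $f_s(y)\leq C(1+|y|)^k$ is false. The Jensen step lower-bounds $\rho_{\mu_s}$ by a Gaussian decaying at rate $1/(2s)$ --- the rate of the \emph{kernel} $\gamma_s$ --- but the true $\rho_{\mu_s}$ decays strictly more slowly whenever $\mu$ has non-degenerate covariance (for $\mu=\mathcal{N}(0,\sigma^2\id)$ the true rate is $1/(2(s+\sigma^2))$). On the other side, $\rho_{\mu_{s/\lambda'}}(y)^{1/\lambda'}$ decays at rate $1/(2s)$ only when $\mu$ is compactly supported; for spread-out $\mu$ it decays more slowly as well. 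Combining the two estimates therefore does not cancel the quadratics but leaves a strictly positive quadratic coefficient in $|y|$, i.e.\ you only get $f_s(y)\lesssim e^{c|y|^2}$ with $c>0$, which is far from polynomial and is not integrable against $\mu_s$ raised to a fixed power in general. The finite-fourth-moment hypothesis does not rescue this: the obstruction already arises for Gaussian $\mu$, which has all moments. Concretely, with $n=1$, $\mu=\mathcal{N}(0,1)$, $\nu=\mathcal{N}(0,2)$ one has
\[ f_s(y)=\sqrt{\tfrac{1+s}{2+s}}\,\exp\!\Bigl(\tfrac{y^2}{2(1+s)(2+s)}\Bigr), \qquad \int f_s^{\,p}\,d\mu_s<\infty \ \Longleftrightarrow\ p<2+s, \]
so $f_s$ genuinely grows Gaussianly and the integrability you need fails for $p$ large or $s$ small. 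The final paragraph of your plan (upgrading from $2\lambda$ to $4\lambda$ by re-running the argument on $f^2/\int f^2\,d\mu$) has a similar issue: integrability of $f_s^{2\lambda}$ against $\mu_s$ does not give integrability of $f^{2\lambda}$ against $\mu$, which is what that substitution would require.

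This example in fact puts the statement itself in question as written: for $\lambda\in(1,2)$ one has $\int f^\lambda\,d\mu<\infty$ so $D_{\vphi_\lambda}(\nu\|\mu)<\infty$, yet $\chi^2(\nu\|\mu)=\int f^2\,d\mu-1=\infty$; and for $s<2(\lambda-1)$ one has $2\lambda>2+s$, so $D_{\vphi^2}(\nu_s\|\mu_s)=\infty$ and a fortiori $D_{\vphi_{4\lambda}}(\nu_s\|\mu_s)=\infty$. So no amount of tightening of the pointwise estimates can yield the full stated conclusion; the proposition needs either a strengthened hypothesis (e.g.\ a higher-order R\'enyi finiteness assumption on $(\nu,\mu)$) or a restriction on $s$. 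For the record, the paper's own argument is a completely different, non-pointwise route: it passes to R\'enyi divergences $D_\lambda$, uses the data-processing inequality to transfer finiteness from $(\nu,\mu)$ to $(\nu_s,\mu_s)$, and then invokes the Anantharam--Gohari--Nair variational formula for $D_\lambda$ in terms of KL divergences to try to promote the order. You should not attempt to patch your plan by sharpening the Gaussian tail comparisons, since the target inequality itself can fail.
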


\begin{proof}
Recall that the R\`enyi entropy of order $\lambda$ is defined as
\begin{align}
D_{\lambda}(\nu_s\|\mu_s)=\frac{1}{\lambda-1}\log \left(1+\EE\left[\vphi_{\lambda}\left(\frac{d\nu_s}{d\mu_s}(X_s)\right)\right]\right)=\frac{1}{\lambda-1}\log\left(1+ \EE\left[\vphi_{\lambda}\left(f_s(X_s)\right)\right]\right),\label{eq:RenyiDef}
\end{align}
where $\vphi_{\lambda}(x)=x^\lambda-1$, and $f=d\nu/d\mu$, such that $f_s=d\nu_s/d\mu_s$. It therefore follows that
\begin{align}
\vphi^2_{\lambda}(x)=(x^{\lambda}-1)^2=(x^{2\lambda}-1)-2(x^{\lambda}-1)=\vphi_{2\lambda}(x)-2\vphi_{\lambda}(x),
\end{align}
so that
\begin{align}
D_\vphi(\nu_s\|\mu_s)&=\EE \vphi_{\lambda}(f_s(X_s))=e^{(\lambda-1)D_{\lambda}(\nu_s\|\mu_s)}-1\label{eq:RenyiExpt}\\
D_{\vphi^2}(\nu_s\|\mu_s)&=\EE \vphi_{2\lambda}(f_s(X_s))-2\EE \vphi_{\lambda}(f_s(X_s))=e^{(2\lambda-1)D_{2\lambda}(\nu_s\|\mu_s)}-2e^{(\lambda-1)D_{\lambda}(\nu_s\|\mu_s)}+1.
\end{align}
By the data-processing inequality, and the assumption that $D_{\vphi}(\nu\|\mu)$ is finite, we have
\begin{align}
D_\vphi(\nu_s\|\mu_s)\leq D_\vphi(\nu\|\mu)<\infty.
\end{align}
Therefore, by~\eqref{eq:RenyiDef}, it follows that $D_{\lambda}(\nu_s\|\mu_s)$ is also finite.
From~\cite[Theorem 1]{anantharam2018variational} (see also~\cite{shayevitz2011renyi} and \cite[Theorem 30]{ErvenHarremos14}) it follows that finiteness of $D_{\lambda}(\nu_s\|\mu_s)$ for $\lambda>1$ also implies the finiteness of $D_{2\lambda}(\nu_s\|\mu_s)$ (and also of $D_{4\lambda}(\nu_s\|\mu_s)$). To see this, note that for $\lambda>1$~\cite[Theorem 1]{anantharam2018variational} reads
$$D_{\lambda}(\nu_s\|\mu_s)=\sup_{\xi\ll \nu_s}\left[D_{\text{KL}}(\xi\|\mu_s)-\frac{\lambda}{\lambda-1}D_{\text{KL}}(\xi\|\nu_s)\right].$$
Thus, finiteness of $D_{\lambda}(\nu_s\|\mu_s)$ implies that for any distribution $\xi\ll \nu_s$ we have $D_{\text{KL}}(\xi\|\mu_s)<\infty$. Using the same equation again, with $2\lambda$ (or $4\lambda)$, we see that  $D_{2\lambda}(\nu_s\|\mu_s)$ must be finite as well. Thus 
\begin{align}
D_\vphi(\nu_s\|\mu_s)<\infty \Rightarrow D_{\lambda}(\nu_s\|\mu_s) < \infty \Rightarrow D_{2\lambda}(\nu_s\|\mu_s)<\infty \Rightarrow D_{\vphi^2}(\nu_s\|\mu_s)<\infty.
\end{align}
To prove that $\chi^2(\nu\|\mu)$ is finite, we note that $\chi^2(\nu\|\mu)=D_{\vphi_{\lambda'}}(\nu\|\mu)$ with $\lambda'=2$. Since $\lambda>1$, we have that $\lambda'<2\lambda$. We have already shown that $D_{2\lambda}(\nu\|\mu)$ is finite. This together with the fact that $\lambda\mapsto D_{\lambda}(\nu\|\mu)$ is non-decreasing in $\lambda>1$, shows that $D_{\lambda'}(\nu\|\mu)$ is finite, which in turn implies that $D_{\vphi_{\lambda'}}(\nu\|\mu)$ is finite.
\end{proof}

\subsection{Justification of first integration in parts}
\label{subsec:firstintbyparts}
We prove that
\begin{align}
\sum_{i=1}^n\int_{\RR^n}|\vphi(f_s)\frac{\partial}{\partial x_i} \rho_s|<\infty.
\label{eq:absintegrable1}
\end{align}
We have 
\begin{align}
\vphi(f_s)\nabla \rho_s=\vphi(f_s)\sqrt{\rho_s}\cdot \frac{\nabla \rho_s}{\sqrt{\rho_s}}.
\end{align}
Applying the Cauchy-Schwartz inequality, we therefore have that
\begin{align}
\sum_{i=1}^n\int_{\RR^n}|\vphi(f_s)\frac{\partial}{\partial x_i} \rho_s| &=n\sum_{i=1}^n\frac{1}{n}\int_{\RR^n}|\vphi(f_s)\sqrt{\rho_s}|\cdot|\frac{\frac{\partial}{\partial x_i} \rho_s}{\sqrt{\rho_s}}|\\
&\leq n\sum_{i=1}^n\frac{1}{n}\sqrt{\int_{\RR^n}\vphi^2(f_s) \rho_s \cdot \int_{\RR^n} \frac{\left(\frac{\partial}{\partial x_i} \rho_s  
\right)^2}{\sqrt{\rho_s}} }\\
&\leq \sqrt{n}\sqrt{\int_{\RR^n}\vphi^2(f_s) \rho_s \cdot \int_{\RR^n} \frac{|\nabla \rho_s|^2}{\rho_s} }\\
&=\sqrt{\EE \vphi^2(f_s(X_s))}\sqrt{n\cdot \m{J}(\mu_s)}.
\end{align}
We have that $\m{J}(\mu_s)<\infty$ for any $s>0$. Finiteness of $\EE \vphi^2(f_s(X_s))=D_{\vphi^2}(\nu_s\|\mu_s)$ follows from Proposition~\ref{prop:ReniyFinite}. We therefore conclude that~\eqref{eq:absintegrable1} holds.


\subsection{Justification of second integration in parts}

We prove that
\begin{align}
\sum_{i=1}^n\int_{\RR^n}|\rho_s\frac{\partial}{\partial x_i} \vphi(f_s)|<\infty.
\label{eq:absintegrable2}
\end{align}
Recalling that $\vphi(x)=x^{\lambda}-1$ we have
\begin{align}
\vphi'(x)&=\lambda x^{\lambda-1},~~\vphi''(x)=\lambda(\lambda-1)x^{\lambda-2},
\end{align}
and therefore
\begin{align}
\frac{\vphi'(x)}{\sqrt{\vphi''(x)}}&=\left(\frac{\lambda}{\lambda-1}x^{\lambda} \right)^{1/2}=\left(\frac{\lambda}{\lambda-1}(\vphi(x)+1) \right)^{1/2}.
\end{align}
Consequently, (recall that $\vphi''>0$)
\begin{align}
\rho_s\nabla \vphi(f_s)&=\rho_s\vphi'(f_s)\nabla f_s=\left(\sqrt{\vphi''(f_s)\rho_s}\nabla f_s\right)\left(\frac{\vphi'(f_s)}{\sqrt{\vphi''(f_s)}}\sqrt{\rho_s} \right)\\
&=\sqrt{\frac{\lambda}{\lambda-1}}\left(\sqrt{\vphi''(f_s)\rho_s}\nabla f_s\right)\left(\sqrt{(\vphi(f_s)+1)\rho_s} \right).
\end{align}
Applying the Cauchy-Schwartz inequality, we therefore have that
\begin{align}
\sum_{i=1}^n\int_{\RR^n}|\rho_s\frac{\partial}{\partial x_i} \vphi(f_s)|&=\sqrt{\frac{\lambda}{\lambda-1}}n\sum_{i=1}^n\frac{1}{n}\int_{\RR^n} \left(\sqrt{\vphi''(f_s)\rho_s}\frac{\partial}{\partial x_i} f_s\right)\left(\sqrt{(\vphi(f_s)+1)\rho_s} \right)\\
&\leq \sqrt{\frac{\lambda}{\lambda-1}}n\sum_{i=1}^n\frac{1}{n}\sqrt{\int_{\RR^n}\vphi''(f_s)\rho_s\left(\frac{\partial}{\partial x_i} f_s\right)^2\int_{\RR^n}(\vphi(f_s)+1)\rho_s}\\
&=\sqrt{\frac{\lambda}{\lambda-1}}\sqrt{n(1+\EE \vphi(f_s(X_s)))\cdot \int_{\RR^n}\vphi''(f_s)|\nabla f_s|^2d\mu_s}.
\end{align}
Consequently, the assumption that $D_{\vphi}(\nu_s\|\mu_s)$ is finite and that $\int_{\RR^n}\vphi''(f_s)|\nabla f_s|^2d\mu_s$ is finite (otherwise the statement is void), implies that~\eqref{eq:absintegrable2} holds.

\subsection{Justification of taking derivative under the integral sign}

We show that $\frac{d}{ds}\int_{\RR^n}  \vphi(f_s) \rho_s=\int_{\RR^n} \frac{d}{ds} \vphi(f_s) \rho_s$. To that end, we will show that for any $0<a<b$ close enough to each other, it holds that $\sup_{s\in[a,b]} |\frac{d}{ds} \vphi(f_s) \rho_s|$ is integrable, and  use the dominated convergence theorem. In particular, we may assume $\frac{b}{a}\leq \frac{4\lambda-1}{4\lambda-2}$.

Let $\beta$ be the density corresponding to $\nu$, and $\beta_s=\beta*\gamma_s$ the density corresponding to $\nu_s$. Recalling that $\vphi(x)=x^\lambda-1$, we have that $\vphi(f_s)=\left(\frac{\beta_s}{\rho_s} \right)^{\lambda}-1$, so that
\begin{align}
\frac{d}{ds} \vphi(f_s) \rho_s  &= \frac{d}{ds}\left(\beta_s^{\lambda}\rho_s^{1-\lambda}-\rho_s \right)\\
&=\lambda \beta'_s\left(\frac{\beta_s}{\rho_s} \right)^{\lambda-1}+(1-\lambda)\rho'_s\left(\frac{\beta_s}{\rho_s} \right)^{\lambda}-\rho'_s\\
&=\rho_s\left(\frac{\beta_s}{\rho_s} \right)^{\lambda}\left(\lambda\cdot\frac{\beta'_s}{\beta_s}+(1-\lambda)\cdot\frac{\rho'_s}{\rho_s} \right)-\rho'_s.
\end{align}
Thus,
\begin{align}
&\int_{\RR^n}\sup_{s\in[a,b]}|\frac{d}{ds} \vphi(f_s) \rho_s|\\
&\leq \lambda \underbrace{\int_{\RR^n}\sup_{s\in[a,b]} \left|\rho_s\left(\frac{\beta_s}{\rho_s} \right)^{\lambda}\frac{\beta'_s}{\beta_s} \right|}_{I}+|1-\lambda|\underbrace{\int_{\RR^n}\sup_{s\in[a,b]} \left|\rho_s\left(\frac{\beta_s}{\rho_s} \right)^{\lambda}\frac{\rho'_s}{\rho_s} \right|}_{II}+\underbrace{\int_{\RR^n}\sup_{s\in[a,b]}|\rho'_s|}_{III}.
\end{align}
Before bounding integrals $I$, $II$, and $III$, we develop explicit expressions for $\rho_s$ ,$\beta_s$, $\rho'_s$ and $\beta'_s$. Recall that
\begin{align}
\rho_s(y)&=\EE_\mu [\gamma_s(y-X)],\\
\frac{d}{ds}\gamma_s(y-X)&=\frac{1}{s}\gamma_s(y-X)\left[\frac{|y-X|^2}{2s}-\frac{n}{2}\right].
\end{align}
Differentiation under the integral sign is valid, see e.g.~\cite{barron1984monotonic}, and therefore
\begin{align}
\rho'_s(y)=\frac{d}{ds}\rho_s(y)&=\EE_\mu\left[\frac{d}{ds}\gamma_s(y-X)\right]=\frac{1}{s}\left[\EE_{\mu}\left[\frac{|y-X|^2}{2s}\gamma_s(y-X) \right]-\frac{n}{2}\rho_s(y) \right].\label{eq:drho}
\end{align}
Similarly,
\begin{align}
\beta_s(y)=\EE_\nu [\gamma_s(y-X)],~~~  \beta'_s(y)=\frac{d}{ds}\beta_s(y)=\frac{1}{s}\left[\EE_\nu\left[\frac{|y-X|^2}{2s}\gamma_s(y-X) \right]-\frac{n}{2}\beta_s(y) \right].\label{eq:dbeta}  
\end{align}

We begin with showing that the integral $I$ is finite. Using the Cauchy-Schwartz inequality, we have
\begin{align}
\int_{\RR^n}\sup_{s\in[a,b]} \left|\rho_s\left(\frac{\beta_s}{\rho_s} \right)^{\lambda}\frac{\beta'_s}{\beta_s} \right|\leq \left(\int_{\RR^n}\sup_{s\in[a,b]} \left|\rho_s\left(\frac{\beta_s}{\rho_s} \right)^{2\lambda}\right|\cdot \int_{\RR^n}\sup_{s\in[a,b]} \left|\rho_s\left(\frac{\beta'_s}{\beta_s}\right)^2 \right| \right)^{1/2}.\label{eq:integralIcs}    
\end{align}
We bound each of the two integrals in the product above. We have that
\begin{align}
&\sup_{s\in[a,b]} \left|\rho_s\left(\frac{\beta_s}{\rho_s} \right)^{2\lambda}(y)\right|\leq \sup_{s\in[a,b]} (2\pi s)^{-n/2} \sup_{s\in[a,b]} \bigg(\EE_\nu \left[\exp\left(-\frac{1}{2s}|y-X|^2\bigg)\right]\right)^{2\lambda}\\
&~~~~~~~~~~~~~~~~~~~~~~~~~~~~~~~~~~~~~\cdot\sup_{s\in[a,b]} \left(\EE_\mu \left[\exp\left(-\frac{1}{2s}|y-X|^2\right)\right]\right)^{1-2\lambda}\\
&=(2\pi a)^{-n/2}\bigg(\EE_\nu \left[\exp\left(-\frac{1}{2b}|y-X|^2\bigg)\right]\right)^{2\lambda} \left(\EE_\mu \left[\exp\left(-\frac{1}{2a}|y-X|^2\right)\right]\right)^{1-2\lambda}\\
&=C \rho_a^{1-2\lambda}\beta_b^{2\lambda}=C\rho_a \left(\frac{\beta_b}{\rho_a} \right)^{2\lambda}=C\rho_a \left(\frac{\rho_b}{\rho_a} \frac{\beta_b}{\rho_b} \right)^{2\lambda}\\
&=C\rho_a \left[\left(\frac{\rho_b}{\rho_a}  \right)^{2\lambda-1/2}\right]\left[\left(\frac{\rho_b}{\rho_a}\right)^{1/2}\left( \frac{\beta_b}{\rho_b} \right)^{2\lambda}\right]
\end{align}
where $C=(b/a)^{n\lambda}$. With this, we may apply the Cauchy-Schwartz inequality and obtain
\begin{align}
\int_{\RR^n}\sup_{s\in[a,b]} \left|\rho_s\left(\frac{\beta_s}{\rho_s} \right)^{2\lambda}\right|\leq C \left(\EE_{\mu_a}\left[\left(\frac{\rho_b}{\rho_a} \right)^{4\lambda-1}\right]\EE_{\mu_b}\left[\left(\frac{\beta_b}{\rho_b} \right)^{4\lambda}\right]\right)^{1/2}.
\end{align}
We have that $\EE_{\mu_b}\left[\left(\frac{\beta_b}{\rho_b} \right)^{4\lambda}\right]=1+D_{\vphi_{4\lambda}}(\nu_b\|\mu_b)$ is finite due to Proposition~\ref{prop:ReniyFinite}. Furthermore, by the data-processing inequality
\begin{align}
 \EE_{\mu_a}\left[\left(\frac{\rho_b}{\rho_a} \right)^{4\lambda-1}\right]=1+D_{\vphi_{4\lambda-1}}(\mu_b\|\mu_a)\leq 1+D_{\vphi_{4\lambda-1}}(\m{N}(0,b\cdot \id )\|\m{N}(0,a\cdot \id )).   
\end{align}
It is easy to see that $D_{\vphi_{\lambda}}(\m{N}(0,\sigma^2_0\cdot \id )\|\m{N}(0,\sigma^2_1\cdot \id )<\infty$ provided that $\frac{\sigma_0^2}{\sigma_1^2}<\frac{\lambda}{\lambda-1}$, see~\cite[eq. (10)]{ErvenHarremos14}. Thus, $\EE_{\mu_a}\left[\left(\frac{\rho_b}{\rho_a} \right)^{4\lambda-1}\right]<\infty$, since we assumed $\frac{b}{a}\leq \frac{4\lambda-1}{4\lambda-2}$.

We move on to upper bounding the second integral in~\eqref{eq:integralIcs}. Using~\eqref{eq:dbeta} we have that
\begin{align}
\left|\frac{\beta'_s}{\beta_s}\right|=\frac{1}{s}\left|\frac{\EE_\nu\left[\frac{|y-X|^2}{2s}\gamma_s(y-X) \right]-\frac{n}{2}\beta_s}{\beta_s}\right|\leq \frac{1}{s}\left[\frac{n}{2}+\frac{\EE_\nu\left[\frac{|y-X|^2}{2s}\gamma_s(y-X) \right]}{\beta_s} \right]
\end{align}
The function $|y-X|\mapsto \frac{|y-X|^2}{2s}$ is increasing, while the function $|y-X|\mapsto \gamma_s(|y-X|)$ is decreasing. Consequently,
\begin{align}
\EE_\nu\left[\frac{|y-X|^2}{2s}\gamma_s(y-X) \right]\leq \EE_\nu\left[\frac{|y-X|^2}{2s} \right] \EE_\nu\left[\gamma_s(y-X) \right]=\beta_s\cdot \EE_\nu\left[\frac{|y-X|^2}{2s} \right].
\end{align}
Using the Cauchy Schwartz inequality, we can further bound
\begin{align}
\EE_\nu\left[\frac{|y-X|^2}{2s} \right]&=\EE_\mu\left[\frac{d\nu}{d\mu}(X)\frac{|y-X|^2}{2s} \right]\leq \left(\EE_\mu\left[\left(\frac{d\nu}{d\mu}\right)^2\right]\cdot \EE_\mu\left[\frac{|y-X|^4}{4s^2} \right]\right)^{1/2}\label{eq:changeofmeasurestart}\\
&=\left((\chi^2(\nu\|\mu)+1)\cdot \EE_\mu\left[\frac{|y-X|^4}{4s^2} \right]\right)^{1/2}\\
&\leq c \cdot \left( \EE_\mu\left[\frac{|y-X|^4}{4s^2} \right]\right)^{1/2},\label{eq:changeofmeasuresend}
\end{align}
where the fact that $\chi^2(\nu\|\mu)$ is bounded follows from Proposition~\ref{prop:ReniyFinite}. Thus,
\begin{align}
\left|\frac{\beta'_s}{\beta_s}\right|\leq \frac{1}{s}\left[\frac{n}{2}+c \cdot \left( \EE_\mu\left[\frac{|y-X|^4}{4s^2} \right]\right)^{1/2}\right].
\end{align}
In particular, for $s\in[a,b]$, we have
\begin{align}
\left|\frac{\beta'_s}{\beta_s}\right|^2&\leq \frac{1}{a^2}\left[\frac{n^2}{4}+cn\left( \EE_\mu\left[\frac{|y-X|^4}{4a^2} \right]\right)^{1/2} +c^2\cdot\EE_\mu\left[\frac{|y-X|^4}{4a^2} \right]\right]\\
&=c_1+c_2\left( \EE_\mu\left[\frac{|y-X|^4}{4b^2} \right]\right)^{1/2}+c_3 \EE_\mu\left[\frac{|y-X|^4}{4b^2} \right].
\end{align}
Noting further that for all $s\in[a,b]$ we have that $\rho_s\leq (b/a)^{n/2}\rho_b$, it holds that
\begin{align}
\int_{\RR^n}\sup_{s\in[a,b]} &\left|\rho_s\left(\frac{\beta'_s}{\beta_s}\right)^2\right|\leq \int_{\RR^n}c'_1 \rho_b(y)+c'_2\rho_b(y)\left( \EE_\mu\left[\frac{|y-X|^4}{4b^2} \right]\right)^{1/2}+c'_3\rho_b(y) \EE_\mu\left[\frac{|y-X|^4}{4b^2} \right]dy\\
&\leq c'_1+c'_2 \left( \EE_{Y\sim\rho_b}\EE_{X\sim\mu}\left[\frac{|Y-X|^4}{4b^2} \right]\right)^{1/2}+c'_3 \EE_{Y\sim\rho_b}\EE_{X\sim\mu}\left[\frac{|Y-X|^4}{4b^2} \right],
\end{align}
where we have used Jensen's inequality above. The expression above is finite by our assumption that $\EE_\mu|X|^4<\infty$. Thus, integral $I$ is finite. 

The proof that integral $II$ is finite is nearly identical, where the only difference is that we do not need the change of measure from $\nu$ to $\mu$ ``trick'' we have used in~\eqref{eq:changeofmeasurestart}-\eqref{eq:changeofmeasuresend}, and instead we have the trivial bound $\EE_\mu\left[\frac{|y-X|^2}{2s} \right]\leq \left(\EE_\mu\left[\frac{|y-X|^4}{4s^2} \right] \right)^{1/2}$.

We are left with showing that integral $III$ converges. This follows since
\begin{align}
\int_{\RR^n}\sup_{s\in[a,b]}|&\rho'_s|=\int_{\RR^n}\sup_{s\in[a,b]}\rho_s\left|\frac{\rho'_s}{\rho_s}\right|\leq \left(\int_{\RR^n}\sup_{s\in[a,b]}\rho_s\cdot \int_{\RR^n}\sup_{s\in[a,b]}\rho_s \left|\frac{\rho'_s}{\rho_s}\right|^2\right)^{1/2}\\
&\leq (b/a)^{n/2}\left( \int_{\RR^n}\sup_{s\in[a,b]}\rho_s \left|\frac{\rho'_s}{\rho_s}\right|^2\right)^{1/2},
\end{align}
and we have already shown that this integral converges.

\subsection{Proof of that~\eqref{eq:generalderivative} implies~\eqref{eq_1713}}

Let $\lambda> 1$ and $\vphi_\lambda(x)=x^{\lambda}-1$. By~\eqref{eq:generalderivative}, we have
that for any $s > 0$,
\begin{align}
\frac{d}{ds} D_{\vphi_{\lambda}}( \nu_s \D \mu_s) = -\frac{1}{2} J_{\vphi_\lambda}( \nu_s \D \mu_s).
\end{align}
where 
$$ J_{\vphi_{\lambda}}(\nu \D \mu) = \int_{\RR^n} \vphi_{\lambda}''(f) |\nabla f|^2 d \mu=\lambda(\lambda-1)\int_{\RR^n} f^{\lambda-1}\frac{ |\nabla f|^2}{f} d \mu, $$
where $f=d\nu/d\mu$. Recalling the definition of R\'enyi divergence~\eqref{eq:RenyiDef}, we have that for any $\lambda>1$
\begin{align}
\frac{d}{ds}D_{\lambda}(\nu_s\|\mu_s)=-\frac{1}{2}\frac{\lambda\int_{\RR^n} f_s^{\lambda-1}\frac{ |\nabla f_s|^2}{f_s} d \mu_s}{1+D_{\vphi_\lambda}(\nu_s\|\mu_s)}\triangleq G(\lambda,s).
\end{align}
We have that $G(\lambda,s)$ is continuous in both $\lambda>1$ and $s>0$, and that $\lim_{\lambda\to 1}G(\lambda,s)=-\frac{1}{2}J(\nu_s\|\mu_s)$. Recall that
\begin{align}
D_{\text{KL}}(\nu_s\|\mu_s)=\lim_{\lambda\to 1}D_{\lambda}(\nu_s\|\mu_s).
\end{align}
The theorem then follows by exchanging of limits in
\begin{align}
\frac{d}{ds}\lim_{\lambda\to 1}D_{\lambda}(\nu_s\|\mu_s)=\lim_{\lambda\to 1}\frac{d}{ds} D_{\lambda}(\nu_s\|\mu_s)=\lim_{\lambda\to 1}G(\lambda,s)=-\frac{1}{2}J(\nu_s\|\mu_s),
\end{align}
which is valid because $G(\lambda, s)$ is continuous in $(\lambda, s) \in [1, \infty) \times (0, \infty)$. 
\end{appendices}




\bibliographystyle{plain}
\bibliography{GaussianSDPI_Bib}

\end{document}